\newcommand{\vquad}{\vspace{10pt}}
\newenvironment{bprooftree}
  {\leavevmode\hbox\bgroup}
  {\DisplayProof\egroup}
\definecolor{commentgreen}{RGB}{2,112,10}
\definecolor{eminence}{RGB}{108,48,130}
\definecolor{weborange}{RGB}{255,165,0}
\definecolor{frenchplum}{RGB}{129,20,83}
\bfseries\color{eminence},
\bfseries\color{commentgreen},
\newif\ifdraft
\newif\ifext
\newcommand{\refinement}[3]{{\{#1:#2\ |\ #3\}}}
\renewcommand{\models}[2]{{#1\ {\sf models}\ #2}}
\newcommand{\inttype}{\sf{int}}
\newcommand{\floattype}{{\sf float}}
\newcommand{\hd}[1]{{\sf hd}(#1)}
\newcommand{\tl}[1]{{\sf tl}(#1)}
\newcommand{\delay}[1]{{\sf delay}(#1)}
\newcommand{\fby}[1]{\ {\sf fby}_{#1}\ }
\renewcommand{\ifthenelse}[3]{{\sf if}\ {#1}\ {\sf then}\ {#2}\ {\sf else}\ {#3}}
\newcommand{\true}{{\sf true}}
\newcommand{\false}{{\sf false}}
\newcommand{\letrec}[2]{{\sf let\ rec}\ {#1}\ {\sf in}\ {#2}}
\newcommand{\dom}[1]{{\sf dom}({#1})}
\newcommand{\prev}[1]{{\sf prev}({#1})}
\newcommand{\hist}{h} 
\newcommand{\alw}{\square}
\newcommand{\nxt}{\ocircle}
\newcommand{\kw}[1]{{\sf #1}}
\newcommand{\nil}{\kw{nil}}
\renewcommand{\phi}{\varphi}
\newtheorem{theorem}{Theorem}
\newtheorem{lemma}{Lemma}
\newtheorem{defn}{Definition}
\renewcommand{\kw}[1]{{\sf #1}}
\renewcommand\xrightarrow[1]{\stackrel{#1}{\hookrightarrow}}
\title{Synchronous Programming with Refinement Types (Extended)}
\date{June 2024}
\newif\ifanon
\author[J Chen]{Jiawei Chen}
\affiliation{%
  \institution{University of Michigan}
  \city{Ann Arbor}
  \country{USA}
}
\email{chenjw@umich.edu}
\author[J L Vargas de Mendonça]{José Luiz Vargas de Mendonça}
\affiliation{%
  \institution{University of Michigan}
  \city{Ann Arbor}
  \country{USA}
}
\email{joselvdm@umich.edu}
\author[B S Ayele]{Bereket Shimels Ayele}
\affiliation{%
  \institution{Addis Ababa Institute of Technology}
  \city{Addis Ababa}
  \country{Ethiopia}
}
\email{bereket-shimels.ayele1@louisiana.edu}
\author[B N Bekele]{Bereket Ngussie Bekele}
\affiliation{%
  \institution{Addis Ababa Institute of Technology}
  \city{Addis Ababa}
  \country{Ethiopia}
}
\email{bereket.bekele@mail.utoronto.ca}
\author[S Jalili]{Shayan Jalili}
\affiliation{%
  \institution{University of Michigan}
  \city{Ann Arbor}
  \country{USA}
}
\email{sjalili@umich.edu}
\author[P Sharma]{Pranjal Sharma}
\affiliation{%
  \institution{University of Michigan}
  \city{Ann Arbor}
  \country{USA}
}
\email{spranjal@umich.edu}
\author[N Wohlfeil]{Nicholas Wohlfeil}
\affiliation{%
  \institution{University of Michigan}
  \city{Ann Arbor}
  \country{USA}
}
\email{njwohlf@umich.edu}
\author[Y Zhang]{Yicheng Zhang}
\affiliation{%
  \institution{University of Michigan}
  \city{Ann Arbor}
  \country{USA}
}
\email{zyicheng@umich.edu}
\author[J-B Jeannin]{Jean-Baptiste Jeannin}
\affiliation{%
  \institution{University of Michigan}
  \city{Ann Arbor}
  \country{USA}
}
\email{jeannin@umich.edu}
\keywords{synchronous programming, refinement types, robotics}
\begin{document}
\begin{abstract}
Cyber-Physical Systems (CPS) consist of software interacting with the physical world, such as robots, vehicles, and industrial processes. CPS are frequently responsible for the safety of lives, property, or the environment, and so software correctness must be determined with a high degree of certainty. To that end, simply testing a CPS is insufficient, as its interactions with the physical world may be difficult to predict, and unsafe conditions may not be immediately obvious. Formal verification can provide stronger safety guarantees but relies on the accuracy of the verified system in representing the real system. Bringing together verification and implementation can be challenging, as languages that are typically used to implement CPS are not easy to formally verify, and languages that lend themselves well to verification often abstract away low-level implementation details. Translation between verification and implementation languages is possible, but requires additional assurances in the translation process and increases software complexity; having both in a single language is desirable. This paper presents a formalization of MARVeLus, a CPS language which combines verification and implementation. We develop a metatheory for its synchronous refinement type system and demonstrate verified synchronous programs executing on real systems.
\end{abstract}
\maketitle

\section{Introduction}

Much of the software we encounter, for example in automobiles, airplanes, or appliances, may interact with its physical environment, and thus are cyber-physical systems (CPS). Many CPS, such as those controlling vehicles or industrial processes, may pose great risk of harm in the event of a software failure. The development of safe CPS is therefore essential but is complicated by the difficulty of testing such systems, since their physical nature makes it difficult to exhaustively test all possible system and environmental configurations.

To this end, formal verification is an attractive approach that can be used to ascertain system safety. Hybrid automata \cite{goos1993hybrid} are commonly used to model CPS, though they do not directly represent executable programs. Other work in verifying CPS include Kind \cite{hagen2008scaling} and Kind 2 \cite{champion2016kind}, which generate invariants for discrete-time synchronous programs using an incremental, bounded SMT-based approach. 
Although formal verification is more thorough than testing, it carries its own set of challenges. Mainstream programming languages common in CPS and embedded software development are challenging to verify, as their semantics are often not well-defined or precise enough. On the other hand, languages primarily designed for formal verification might abstract away too many implementation-level details of the original system to generate executable programs \cite{platzer2008differential,goos1993hybrid}. Translation between verification and execution languages is possible but may introduce bugs of its own if unable to capture subtleties of the original system \cite{malecha2016towards}. Verified translation is possible \cite{bohrer2018veriphy}, though translation of any kind adds complexity to the development and debugging workflow.  Combining verification and execution into a single programming language avoids some of the compromises of either approach, giving CPS developers the assurance that the properties they verify  carry over to the real system.

The idea of languages that permit both verification and execution is not new; indeed, projects like Koord~\cite{ghosh2020koord}, VeriDrone~\cite{malecha2016towards} and VeriPhy~\cite{bohrer2018veriphy} offer verification and execution of CPS, with special consideration given towards ensuring run-time behavior reflects what was verified. In Koord, multi-agent robotics systems are modeled in a language with formal semantics that allows direct verification and simulation. Programs can execute via a middleware layer \cite{ghosh2020cyphyhouse} that replicates shared variables and interfaces with platform-specific controllers on real robots. System evolution follows a round-based approach that alternates between system and environment evolution, solving the common issue of synchronizing logical with physical time in a hybrid system model \cite{ghosh2020koord}. VeriPhy~\cite{bohrer2018veriphy} describes a method for constructing CPS models and specifications in differential dynamic logic (\kw{dL}) that can be soundly translated into a runtime monitor. This allows the safe mixing of a more optimal, but perhaps occasionally unsafe controller, with a safer alternative, though the method assumes such a safe controller exists. In comparison the present work focuses on compile-time verification of safety properties directly within an executable CPS language. 

One approach to compile-time verification is to encode desired properties in the language's type system. As a result of the Curry-Howard-Lambek isomorphism relating data types to propositions, a derivation of said augmented type directly implies the provability of the property. Dependent- and refinement-type systems such as that of F* \cite{swamy2016dependent} or Liquid Haskell \cite{jhala2020refinement}, enable type constraints to depend on program terms, and thus allow one to statically reason about a program's runtime behavior. A series of syntax-directed type-checking rules can then be used to generate constraints for a given program and its specification, the validity of which implies satisfaction of the specification.

This paper presents a formalization of the MARVeLus ({\bf M}ethod for {\bf A}utomated {\bf R}efinement-type {\bf Ve}rification of {\bf Lus}tre) type system. In existing work~\cite{chen2022synchronous}, typing rules for MARVeLus were proposed, along with an implementation of a verified robot collision avoidance controller with stationary obstacles. However, a proof of type safety had yet to be developed for this language. In this paper, we present the following enhancements to MARVeLus over existing work \cite{chen2022synchronous}:
\begin{enumerate}
    \item A formalization of MARVeLus, introducing more precise operational semantics based on comparable synchronous languages, the introduction of a subset of Linear Temporal Logic (LTL) trace predicates for type specifications\cite{colaco2006mixing,parissis1996test,caspi_synchronous_1996,ratel1992definition,caspi_co-iterative_1998}, and adjustments to the existing type system to enable a proof of type preservation;
    \item A definition and proof of type safety for the synchronous refinement type metatheory;
    \item Improved experimental methods supporting dynamic obstacles, such as collision avoidance when following a moving vehicle.
\end{enumerate}
We are primarily focused on the correct formalization of a type-safe executable synchronous programming language. Although existing works present proof automation \cite{goos_certifying_2001} or invariant generation techniques \cite{champion2016kind} for similar languages, we note that these developments are orthogonal to our present goals.

This paper is organized into five main parts. Section \ref{sec:2} introduces the MARVeLus language and presents a motivating example. Section \ref{sec:3} covers preliminary concepts and methods employed by MARVeLus for verification. Sections \ref{sec:4}, \ref{sec:5}, and \ref{sec:6} respectively describe the syntax, semantics, and type system that comprise our improved formalism for MARVeLus, including a definition for type preservation in synchronous programs. Sections \ref{sec:7} and \ref{sec:8} detail the implementation of MARVeLus atop an existing synchronous language. Finally, Sections \ref{sec:9} and \ref{sec:10} present the methods employed to execute MARVeLus on hardware, including an experiment with physical robots. We then conclude with a discussion of related work and future extensions.

\section{Overview}
\label{sec:2}

MARVeLus enables CPS developers to verify and deploy programs to real-world CPS. By leveraging the existing synchronous program simulation and compilation infrastructure of Zélus \cite{benveniste2011hybrid}, the language benefits from the great care already taken in ensuring its accuracy in modeling synchronous programs \cite{benveniste2014type-based,bourke2013zelus}. To this, we add automated verification of program properties expressed as type annotations inspired by refinement types \cite{rondon2008liquid,vazou2014refinement}. 

\subsection{A Simple Industrial Controller}
We present a simplified verification and implementation workflow to illustrate how one might use the MARVeLus extensions in designing CPS software. Suppose one wants to design an industrial controller responsible for maintaining the liquid level in a tank, similar to the scenario described by Kamburjan \cite{kamburjan2021post-conditions}. The tank has a maximum capacity of \SI{20}{L} and a fixed outflow rate of \SI{0.1}{L/s}. We require that the tank level stays within \SI{1}{L} and \SI{19}{L} to prevent the tank from completely emptying or overflowing. The controller switches a solenoid valve, which can only either be completely open or closed, and can deliver \SI{0.5}{L/s} when open. We design the controller so that it opens the valve when the tank level drops below \SI{1.5}{L} and closes the valve when it exceeds \SI{18.5}{L}, to afford a \SI{0.5}{L} safety margin. For the sake of simulation, we assume this is a discrete-time system, with a constant time step of \SI{0.1}{s}. 

\begin{figure*}
    \centering
    \lstinputlisting[escapeinside={(*}{*)}]{thermostat-verified.zls}
    \caption{The industrial controller code with added refinements and external interfaces}
    \label{fig:tempverifiedcode}
 \end{figure*}

A verified MARVeLus implementation is shown in Fig. \ref{fig:tempverifiedcode}. Lines 1-5 define system constants such as the time step, maximum flow rate, and the tank's outflow rate. We refine these variables with type annotations, such as assuming they are positive, and that the maximum tank level must be greater than the minimum level. Lines 11-14 describe the state variables and their evolutions as a stream of tuples, of which each element represents the flow rate and liquid level at each instant. The type annotation on the tuple requires that both elements are floats, and that together they satisfy the conditions given in Lines 8-10. Note that we use $\texttt{f}$ and $\texttt{l}$ to refer to the first and second elements of the tuple to distinguish them from their instantiations in the remainder of the program. 

The safety condition is specified on Line 8, requiring that the tank level always remain between $\texttt{minlevel}$ and $\texttt{maxlevel}$. However, this specification alone is not sufficiently strong to verify the system, as it is not inductive. That is to say, proving the satisfaction of the safety specification now does not allow one to prove safety for the next time interval. Provided only this specification, the verifier produces counterexamples, even if they are not physically feasible. For instance, if the liquid level is at exactly \SI{19}{L} and the valve was open in the previous instant, the controller cannot respond quickly enough to close the valve before the tank overflows. However, we know that the system has started in a safe state, and that the controller will not have allowed the liquid level to reach \SI{19}{L} with an open valve. Therefore, we must specify an additional invariant (Lines 9-10) to assume that the controller will have made the correct decision in the previous instant. Specifically, we assume that the controller in the previous instant has left us with at least one time step to react and prevent an unsafe condition. By inspecting the controller code, one can assure themselves that this assumption is indeed satisfied. The verifier can then use this fact to check that the controller will continue to make the correct decision in the next instant, thus proving the property inductively.

Line 11 sets the initial condition of the system; we assume the tank level begins at \SI{15}{L} and the valve is initially closed. Lines 12-13 describe the controller's behavior and its effect on the commanded flow rate. The system reacts if the level approaches the limits and otherwise retains the previous flow rate. Line 14 then evolves the physical system, simulating the tank's liquid level based on inflow and outflow rates, or retrieving the actual tank level from a sensor labeled ``level''. We use the $\texttt{models}$ syntax to separate the simulation model (on the left) and the actual implementation (on the right); only the left side is verified, and we assume that the supplied model accurately reflects the real system dynamics. Line 15 then publishes the commanded flow rate to an actuator labeled ``flow'', and returns the flow rate and level to be used in other parts of the program.

\section{Preliminaries}
\label{sec:3}
\subsection{Synchronous Programming}
Synchronous programming treats all program data as streams which may take on different values at any given moment. Program computations are treated as manipulations to these streams to influence their evolution in time. Streams are well-suited for modeling embedded systems, which often handle streams of inputs and outputs in real time \cite{florence2019calculus}. For example, an autonomous vehicle must constantly interpret sensor data and produce control commands in a timely fashion to avoid an accident. Synchronous programming has gained significant traction in industry, where tools such as SCADE (a derivative of Lustre) \cite{colaco2017scade} and Esterel \cite{boussinot1991esterel} have been entrusted with designing integrated circuits, avionics, and other critical systems. Thus, a verified synchronous language has the advantage of familiarity for eventual industrial adoption. 

Synchronous programming observes logical time, in which a clock cycle is determined by certain actions or events involving the system, which may be decoupled from physical (``wall-clock'') time. All computations in a synchronous program must finish before the next clock cycle; thus, they are synchronized by this basic unit of time. This lends itself well to the design of embedded systems, as the data-flow model of synchronous programming closely resembles the flow of signals in electronic circuits~\cite{halbwachs1991synchronous} and imposes time constraints on computations. A complication that arises from combining software with the physical environment is the reconciliation of logical and physical clocks. In this paper, we focus only on discrete systems and reserve the examination of verifying systems spanning these two time domains for future work, noting that significant work has already gone into accurately simulating hybrid programs in Zélus \cite{benveniste2011hybrid}. We discretize the dynamics and clock our synchronous program at the rate of discretization. Since MARVeLus is primarily concerned with the types of streams and the values they emit, we assume that all streams have compatible clocks, a property already statically verified by the Zélus compiler. 

Another useful property of synchronous programs is that they operate in bounded memory \cite{bourke2013zelus,caspi1986lustre}, which limits the number of previous state values that are allowed to be observed at any time. This precludes, for instance, the arbitrary storage of sensor values, and makes memory usage more predictable. This is especially important for embedded systems, which often have limited computational resources. By construction, streams can only access a fixed number of previous states to derive the next state.

Lustre, and by extension Z\'elus, also observe a notion of causality, meaning that a program may only depend on values available in past cycles, and on present values in a way that does not produce circular dependencies. For instance, a stream may not instantaneously depend on its own value in the current cycle, but may depend on the value of another stream, provided the other stream does not already depend on it. Causality is a useful property for a CPS modeling language as it ensures systems created in the language are physically feasible. Causality is checked statically by the Z\'elus compiler and has already been formalized as a dedicated causality-analysis type system in existing work \cite{benveniste2014type-based,goos_modular_2001}, so we henceforth assume that programs we check already obey causality with respect to this formalism. To this end, we assume that a term $e$ is always well-typed in the causality type system \cite{goos_modular_2001}.

\subsection{Refinement Types}
Refinement types enhance the standard type system by allowing types to be specified using logical predicates that can take values of terms as inputs. These predicates serve as constraints on the base type such that all terms that inhabit the refined type satisfy the predicate. For instance, positive floating-point numbers can be specified using the refinement type $\refinement{w}{\floattype}{w>0.}$

Refinement types belong to the broader field of dependent types, which are, generally speaking, types that can depend on the values of terms \cite{pierce2005advanced}. Although full dependent type theory has the advantage of expressiveness~\cite{swamy2016dependent}, refinement types lend themselves well to generating verification conditions that depend on terms, while remaining decidable if restricted to the theories of equality, uninterpreted functions, and real arithmetic \cite{rondon2008liquid}. As a result, refinement types lose the expressivity of allowing arbitrary functions in refinements, abstracting away the bodies of non-arithmetic functions \cite{rondon2008liquid}, which if interpreted would jeopardize decidability \cite{jhala2020refinement}. 

\subsection{Verification Via Typing}
Verification via typing relies on the Curry-Howard-Lambek isomorphism to relate desired program properties (propositions) with types, and thus property checking becomes type checking \cite{pierce2005advanced}. Furthermore, type-checking is performed statically at compile-time, so verification at this stage in compilation has the chance to catch bugs in the program before any code is actually executed. This is particularly attractive to CPS, where testing on a real system can be difficult, risky, or costly. Verification via typing also allows our verifier implementation to be modular; all terms still possess a base type that can be checked using the existing type checker, while terms augmented with type refinements can provide additional information or conditions to the verifier. This allows the base Zélus type checker to remain independent of our additions. Refinement types \cite{jhala2020refinement} provide a method of adding annotations to the type system while maintaining decidability. An additional benefit of introducing verification through type-checking is that it promotes modularity in user code; terms that have been type-checked can then be re-used in other parts of the source code, carrying with them the proof of safety \cite{eichholz2022dependently-typed}. This can be especially important with large and/or long-term projects requiring extensive integration or regression testing of patches or updates.

\section{Syntax}
\label{sec:4}
\begin{figure}
    \centering
    \begin{align*}
        v~::=~&~c && \text{Literal Values}\\
              &|~(v_1,~v_2...,~v_n) &&\text{Tuples}\\
        \\
        e~::=~&~v &&\text{Values}\\
              &|~\kw{let}_\hist~(x:\tau)=e_1\kw{~in~}e_2 &&\text{Local Bindings}\\
              &|~\kw{let~rec}_\hist~(x:\tau)=e_1\kw{~in~}e_2 &&\text{Recursive Bindings}\\
              &|~f~x &&\text{Application}\\
              &|~e_1\kw{~fby~}e_2 &&\text{Stream Composition}\\
              &|~\delay{e} && \text{Delayed Evaluation}\\
              &|~(e_1,~e_2...,~e_n) &&\text{Tuples of Expressions}\\
              &|~\kw{if~}x_c\kw{~then~}e_t\kw{~else~}e_f &&\text{Conditionals}\\
              &|~e\kw{~models~}r &&\text{Modeling}\\
              \\
        r~::=~&~\kw{robot\_{get}~} k && \text{Robot Get}\\
              &|~\kw{robot\_{str}~} k~e && \text{Robot Store}\\
              \\
        d~::=~&~\kw{let~}f~(x:\tau_1):\tau_2=e &&\text{Named Function Definition}
    \end{align*}
    \caption{MARVeLus Syntax}
    \label{fig:streamsyntax}
\end{figure}

The stream syntax (Fig. \ref{fig:streamsyntax}) is modeled around Lustre streams, which forms a discrete-time subset of the existing Zélus syntax \cite{benveniste2011hybrid}. Similarly to other synchronous languages \cite{caspi_synchronous_1996}, we make the distinction between values $v$, which are constants emitted by stream expressions. We introduce a new entry $h$ (``history'') in recursive and non-recursive local bindings, which embeds a bound variable's past values into the syntax. This is essential for our operational semantics as stream expressions can depend on values that have occurred in the past and thus we may need to recover those values to re-construct the appropriate derivation tree. A syntactic sugar we employ is allowing local bindings to be defined without histories; these are considered uninitialized and are implicitly assigned empty histories ($h=[]$) Similarly to other refinement type systems \cite{rondon2008liquid,jhala2020refinement}, we require that certain constructs take variable arguments (such as function application and conditionals) to simplify the metatheory by preventing the introduction of arbitrary expressions into refinement predicates \cite{jhala2020refinement}. Another syntactic sugar we employ is to allow certain non-variable terms syntactically considered ``safe'' for the refinement logic to be used as argument functions (such as simple arithmetic expressions and conditionals), which greatly simplifies programs written in the real-world implementation of the language. We note that Zélus only supports named function definition at the top-level \cite{benveniste2011hybrid}, which we replicate.

Some elements of the syntax are not user-facing and are exclusively to support internal functionality. The $\delay{}$ operator is one such case, which allows an expression's evaluation to be deferred for one instant while preserving its original behavior despite environment updates. Likewise, we only allow \kw{let}-bindings (both recursive and non-recursive) without initialized histories (i.e. $h=[]$) in the user-facing portion of the language. This ensures prevents arbitrary values from being introduced to histories.

As part of the robotics extensions to Zélus, we introduce built-in keywords to facilitate reading and writing values to hardware ($r$) as shared variables labeled by string constants $k$. Additionally, we introduce the syntax $\models{e}{r}$ which allows one to define two implementations for a stream---a deterministic model for simulation and verification, and the implementation involving real hardware. As it is impossible to predict or verify at compile-time the exact outputs produced by real sensors, the CPS designer is responsible for ensuring that the model accurately represents actual behavior. In addition to the robotics extensions, we introduce the syntax necessary to support refinement types in Zélus.

\subsection{Specification Syntax}
The core of the specification syntax (Fig. \ref{fig:typesyntax}) is derived from similar refinement type systems \cite{jhala2020refinement, rondon2008liquid}. Typically, refinement type systems require that refinement predicates come from decidable logics such as the quantifier-free logic of equality, uninterpreted functions, and linear arithmetic. We note that allowing nonlinear polynomial real arithmetic maintains decidability, as shown by Tarski \cite{tarski_decision_1951}. We find this addition useful for specifying the kinematics of physical systems and it is supported by the Z3 solver \cite{hutchison2008z3}.

A type can either be a base type $b$, which represents a stream consisting solely of values belonging to a particular primitive data type, or refinement types. Refinement types of the form $\refinement{w}{b}{\phi}$ allow one to specify streams that satisfy a predicate $\phi$, and introduce a value variable $w$ which binds to any term inhabiting the type, allowing $\phi$ to reference terms \cite{rondon2008liquid}. For example, one may write the type of non-negative integers as $\refinement{w}{\inttype}{w\geq0}$. As a natural extension of standard refinement types, predicates in our refinement type specifications become streams of Boolean values. 

One of our main contributions over existing work \cite{chen2022synchronous} is extending the predicate syntax of MARVeLus to explicitly accommodate select temporal predicates from Linear Temporal Logic (LTL) \cite{pnueli1977temporal}. The refinement predicates discussed thus far form the syntax of ``state predicates'', which in the absence of temporal operators, reason about the immediate value of a stream at the current instant. To this, we add the temporal operators ``Always'' ($\alw$) and ``Next'' ($\nxt$), along with the conjunction of predicates containing these operators. This expanded set comprises the full set of specifications we allow as refinement predicates in the type syntax. While LTL normally includes other temporal operators along with disjunction at the top-level, we choose to focus primarily on safety properties which can be expressed globally (i.e. with $\alw$), which allows proofs to be inductive. In practice this is often sufficiently rich for describing most desirable behaviors of safety-critical CPS, as these properties typically require specifying that something either always or never happens.

\begin{figure}
    \centering
    \begin{align*}
        \tau~::=~&b~|~\{v:b~|~\phi\}~|~\tau\rightarrow\tau&&\text{Types}\\
        b~::=~&\kw{int}~|~\kw{float}~|~\kw{bool}~|~b_1\times b_2\times ...\times b_n &&\text{Base Types}\\
         p,q::= &\ \true \ |\ \false \ |\ x\ |\ e_1 = e_2\ |\ e_1 > e_2\ |\ p \land q\ |\ \lnot p&&\text{State Predicates}\\
        \phi, \psi ::=&\ p\ |\ \alw \phi \ |\ \nxt \phi\ |\ \phi \land \psi&&\text{Trace Predicates}
    \end{align*}
    \caption{Refinement Type Syntax, where we note that $e_1,e_2$ are terms within the logic of quantifier-free equality, uninterpreted functions, and arithmetic (QF-EUFA)}
    \label{fig:typesyntax}
\end{figure}

\section{Stream Semantics}
\label{sec:5}
MARVeLus programs operate on streams of data, producing values at each instant. Although the streams themselves do not terminate, they may emit a value at a given instant representing the stream's current state. Lustre's clock calculus allows streams that do not emit values on all clock cycles \cite{halbwachs1991synchronous}, but we currently consider only streams that are compatible with the basic clock, i.e. streams that always emit a value on every cycle of the most frequent clock. 

We describe the operational semantics of MARVeLus with the judgement $S;~\sigma\vdash e\xrightarrow{v} e'$, which represents the stream $e$ in stream context $\sigma$ and function context $S$ that simultaneously transitions into the new stream $e'$ and emits a value $v$ at each clock cycle. Similarly to other interpretations of Lustre semantics \cite{ratel1992definition}, the context $\sigma$ is comprised of mappings between variables and histories of their past values (Fig. \ref{fig:sigma-ops}). That is, for any $x\in\dom{\sigma}$, $\sigma(x)=\hist$ where $\hist=v^{-n}::v^{-n+1}...v^{0}$ is a history of the past values of $x$, up to the immediately preceding time instant. 
\begin{figure}
    \centering
    \begin{align*}
        \sigma~&::=~\varnothing~|~\sigma,x=\hist\\
        S~&::=~\varnothing~|~S, f(x)=e\\
        \hist ~&::=~[]~|~\hist::v\\
        \kw{prev}(\hist::v) &\triangleq \hist\\
        \kw{prev}(\sigma) &\triangleq \{\kw{prev}(\sigma(x))~| ~x\in \dom{\sigma}\}
    \end{align*}
    \caption{Definitions for function environment $S$, term environment $\sigma$, and the histories $h$ of variables bound in $\sigma$}
    \label{fig:sigma-ops}
\end{figure}

Consistent with the design of Lustre \cite{halbwachs1991synchronous}, loops and recursive function calls are excluded. As a result, the evaluation of a stream $e$ at any instant is terminating, meaning a value $v$ is always emitted on a transition of $e$.

One of the challenges in deriving operational semantics arose from reconciling the behavior of recursively-defined streams ($\kw{let~rec}$) with stream composition ($\kw{fby}$). In an earlier iteration of the semantics \cite{chen2022synchronous}, the left and right hand sides of ($\kw{fby}$) are evaluated simultaneously. When a stream depends on its past values, it necessarily employs \kw{fby} to both initialize the stream and delay self-references so that they do not result in unbounded recursion. For example, in the program $\letrec{x=0\fby{}x+1}{x}$, $x$ is defined initially as 0, and then for all subsequent instants is defined as the sum of 1 and the past value of $x$. This is not straightforward to characterize in a syntax-guided manner. Part of the reason is that, when evaluating $0\fby{}x+1$, the notion of the left-hand side becoming the initial value of $x$ is lost. This information is only known at the level of the recursive \kw{let}-binding. This is problematic if the left and right sides of $0\fby{}x+1$ must be evaluated together, because $x+1$ cannot be computed unless $x$ is known to be initially $0$, which is not known until $0\fby{}x+1$ is fully evaluated! In other (non-operational) semantics \cite{caspi_co-iterative_1998, caspi_synchronous_1996}, this was in part resolved by deferring the computation of the right-hand side of \kw{fby} (or its semantic equivalent). However, simply deferring the computation reveals a new problem: if the deferred expression contains free variables that have updated since the delay occurred, the behavior of this expression will change so that it no longer behaves like a properly delayed version of itself. Prior semantics address this by either storing intermediate values for delayed terms (akin to a buffer) \cite{caspi_co-iterative_1998}, or by leaving variables symbolic \cite{caspi_synchronous_1996}. Our operational approach takes inspiration from the former, in that local bindings maintain a history of their variables' past values so that deferred computations can refer back to the states of variables from the time before the delay occurred. Finally, one more issue to address is preserving this history in a syntactical manner. At each time instant, the operational semantics for the next step is derived independently as a new derivation tree. As such, the derivation needs to be re-populated with the past values of variables from the previous derivation so that the histories of variables can be recovered. This motivated the changes we made to the semantics of \kw{fby} and \kw{let}-bindings, and is essential for a syntax-directed operational semantics.

\subsection{Summary of Stream Behaviors}
Drawing inspiration from Lustre and Esterel \cite{ratel1992definition,parissis1996test,colaco2006mixing}, we define the operational semantics of MARVeLus in Fig. \ref{fig:streamsemantics}
\begin{enumerate}
    \item \textit{Constant Streams}: These are literals lifted from the base language; they return a fixed value each instant. Lustre treats all primitives as constant streams (i.e. \texttt{5} in Lustre is the constant stream that always emits 5)~\cite{caspi1986lustre}, and we adopt this convention as well.

    \item \textit{Variables}: Variables return the most recent value bound to them in $\sigma$. A variable $x$ steps to $x$ because variables themselves do not update; instead their entry in $\sigma$ is updated each instant to reflect the value of the stream to which it was originally bound; this update is performed by (S-LET) and (S-LETREC), the two places where new variables are bound.
    
    \item \textit{Pointwise Applications of Functions to Argument Streams}: Similarly to how a function can be mapped over an array, we define the pointwise application of a named function $f$ on an argument variable $x$ as $f~x$, the resultant stream being $f$ applied to the most recent value bound to $x$. Functions are imported from the non-stream base language and are assumed to be well-typed with respect to that language. The semantics are defined by the rule (S-APP).
    
    \item \textit{Compositions of Streams}: Lustre and Zélus allow a stream to be ``attached'' to another via $s {\sf ~fby~} t$ (pronounced ``followed-by''), resulting in a new stream consisting of the initial value of $s$ followed by the stream $t$ delayed by one clock, i.e. $s_0,~t_1,~t_2,...$. The behavior of ${\sf ~fby~}$ is defined by the rule (S-FBY). When the expression $e_1\kw{~fby~}e_2$ is evaluated, it emits the value that $e_1$ emits when it is evaluated, and then rewrites to $\delay{e_2}$. While intuitively one may think the result should simply be $e_2$, as the purpose of ${\sf fby}$ is to defer computation of the second computation, it is possible that free variables occur in $e_2$. This requires special treatment because the values of these free variables may have updated during the previous evaluation, and so $e_2$ evaluated in the current instant may not produce the same values it did in the previous instant. The intent of $\kw{~fby~}$ is to ``delay'' the second expression by one instant, so that it produces the same values it would have had it not been delayed, but one instant later. Essentially this ``shifts'' its behavior in time while also accounting for variables that may be assigned new values in the meantime. This operation could only be done through a special $\delay{}$ operator and corresponding (S-DELAY) semantics, which bridges the past and the present to allow expressions and their behaviors to be preserved in a newer context. 

    \item \textit{Delaying Streams}: Streams that are delayed by $\kw{fby}$ may contain variables which were bound outside the scope of the $\kw{fby}$. Consider the program in Fig. $\ref{fig:delaycode}$. Note how the variable $y$ is used inside the scope of a $\kw{fby}$ (in $x+y$), but $y$ is bound outside the scope of this delay. This means that $y$ is continually assigned new values by the evaluation of $0 \kw{~fby~} 1$, which is not under a delay. If $\kw{fby}$ simply deferred evaluation of $x+y$ without considering that variables will have updated in the meantime, the first evaluation of $x+y$ will have skipped the initial value of $y$, putting $x$ and $y$ out of sync with one another. Furthermore, $x$ will instantaneously depend on itself. Therefore, evaluation of $\kw{fby}$ is accompanied by $\kw{delay}$, which re-evaluates expressions in the previous context. Because previous values of variables in $\sigma$ are retained in $\kw{let}$ and $\kw{let~rec}$, the only places that bind new variables, it is possible to ``rewind'' $\sigma$ to previous states simply by removing the last value from each entry. In other words, the previous $\sigma$, $\prev{\sigma}$ is always completely recoverable from $\sigma$. This is encapsulated by $\kw{prev}()$ operator (Fig. \ref{fig:sigma-ops}). Variables that are delayed are guaranteed to have a value available in its corresponding context, since they are updated at least as many times as $\kw{delay}$ is applied through (S-FBY), since each delay will have been accompanied by an evaluation of the outer sub-expression which bound the variable. Therefore, it is impossible to revert $\sigma$ to a state before the ``beginning of time'' for a program. 

    \item \textit{Robot-Specific Expressions}: The ``robot expressions'' \kw{robot\_{get}}, \kw{robot\_{str}}, and \kw{models} serve to connect the rest of the semantics with real-world hardware and are largely semantically transparent, with the exception of combining $\kw{robot\_{get}}$ with $\kw{models}$ when a program is running in ``robot-mode''. In this mode, the program takes in real sensor values using $\kw{robot\_{get}}$. Since it is impossible to predict the values supplied by real-world sensors and because $\kw{robot\_{get}}$ can only be present on the right-hand side of $\kw{models}$, $\kw{robot\_{get}}$ as well as $\kw{models}$ expressions in robot-mode have no operational semantics. A $\kw{models}$ expression while not in robot-mode (as indicated by the side condition) is completely transparent to the underlying stream expression, and thus takes on identical semantics. $\kw{robot\_str}$ is assumed to always be a successful write operation performed on a piece of robot hardware, but has no other effect on the MARVeLus program. 

    \begin{figure*}
    \centering
    \begin{subfigure}{\textwidth}
    \centering
        \begin{verbatim}
            let rec x = 
              (let y = 0 fby 1 in (0 fby x+y))
              in x
        \end{verbatim}
        \caption{A program that contains variables used inside the scope of $\kw{fby}$}
        \label{fig:delaycode}
    \end{subfigure}
    \begin{subfigure}{\textwidth}
    \centering
        \begin{tikzcd}[column sep=small,row sep=small]
        x & 0 \arrow[r]  & 1 \arrow[r]  & 2 \arrow[r]  & 3 \arrow[r]  & 4 \arrow[r]  & 5 \arrow[r]  & \ldots \\
        y & 0  & 1 \arrow[u] & 1 \arrow[u] & 1 \arrow[u] & 1 \arrow[u] & 1 \arrow[u] & \ldots
        \end{tikzcd}
        \caption{Improperly delayed variables. The arrows represent the values of $x$ and $y$ that are used when computing the next value of $x$. Note how the first value of $y$ is skipped in computing $x+y$}
        \label{fig:delaywrong}
    \end{subfigure}

    \begin{subfigure}{\textwidth}
    \centering
        \begin{tikzcd}[column sep=small,row sep=small]
        x & 0 \arrow[r]  & 0 \arrow[r]  & 1 \arrow[r]  & 2 \arrow[r]  & 3 \arrow[r]  & 4 \arrow[r]  & \ldots \\
        y & 0 \arrow[ru] & 1 \arrow[ru] & 1 \arrow[ru] & 1 \arrow[ru] & 1 \arrow[ru] & 1 \arrow[ru] & \ldots
        \end{tikzcd}
        \caption{Properly delayed variables. No values of $y$ are skipped when computing $x+y$.}
        \label{fig:delayright}
    \end{subfigure}
        \caption{Traces of the variables $x$ and $y$ in a stream program (\ref{fig:delaycode}), if the variables are improperly (\ref{fig:delaywrong}) and properly (\ref{fig:delayright}) delayed}
        \label{fig:delaystreams}
    \end{figure*}
    
    \item \textit{Stream Recursion}: Streams can be defined recursively in terms of themselves. A stream $\letrec{x=e_1}{e_2}$ allows $x$ to occur in $e_1$ under special circumstances. Although recursively defined streams syntactically resemble recursive definitions in conventional languages, their semantics are entirely different. As with Lustre, MARVeLus considers variables to be identically equal to their definitions \cite{halbwachs1991synchronous}. Consequently, one must be mindful of circular dependencies in definitions. Streams must not be defined in terms of their current values, as one may otherwise produce contradictions ($\letrec{x=x+1}{x}$) or nondeterministic behavior ($\letrec{x=x}{x}$). Both scenarios have ``instantaneous dependencies'' since the value of $x$ depends on its current value. These are not well-formed programs and are statically rejected. Instead, (S-LETREC) allows the evaluation of $e_1$, whose value will determine the current value of $x$, to reference $x$ up to and including its previous value, but not its current value. A $\nil$ is inserted at the end of $\sigma(x)$ since the last position is reserved for the current value of a variable (which depends on the current evaluation). The evaluation of $e_2$ is then allowed to reference the current value of $x$, as one would expect in a local binding. Recursively defined streams must delay references to themselves using the $\kw{fby}$ keyword, which also defines the stream's initial value(s). This ensures that streams are causal, depending on strictly prior values. The stream $\letrec{x=0\fby{}x+1}{x}$ is well-formed, because the recursive reference to $x$ in its definition occurs behind a delay. This stream initially produces the value $0$, and then for all subsequent evaluations produces a value one more than its previous value. The variable $x$ is annotated with a type $\tau$ to simplify type checking and this type is likewise ``stepped'' to reflect the behavior of $x$ in subsequent instants. The operator $\tl{}$ and the semantics of ``stepping'' types is described in Section \ref{sec:6}.

    \item \textit{Local Bindings}: This expression type is similar to that of stream recursion, except the evaluation of $e_1$ must not depend on $x$, as this is a non-recursive definition. Otherwise, all other behaviors, including stepping the type annotation of $x$, are the same.
    
    \item \textit{Branching}: Branching in MARVeLus resembles a multiplexer in a digital circuit, in that the conditional selects the value of one of the two streams to emit. That is, the stream ${\sf if~}x_c{\sf ~then~}e_t{\sf ~else~}e_f$ takes the current value of the ``true'' branch $e_t$ or the ``false'' branch $e_f$ depending on the truth value of $x_c$. As before, we assume all three streams have compatible clocks. The two rules (S-IF-T) and (S-IF-F) correspond respectively to the cases when $x_c$ emits \kw{true} or \kw{false}, and thus whether the stream emits the value of $e_t$ (thus emitting $v_t$) or $e_f$ (thus emitting $v_f$). The conditional does not affect the behavior of the branches themselves; the branch not taken will transition as well as reflected in the rules. This prevents the stream that is not selected from delaying indefinitely.
\end{enumerate}

\begin{figure*}
    \flushleft{$\boxed{S;~\sigma\vdash e\xrightarrow{v}e'}$}\\
    \centering
    \vquad
    
{\begin{bprooftree}
\AxiomC{}
\RightLabel{(S-CONST)}
\UnaryInfC{$S;~\sigma\vdash c\xrightarrow{c} c$}
\end{bprooftree}}
\qquad
{\begin{bprooftree}
\AxiomC{}
\RightLabel{(S-VAR)}
\UnaryInfC{$S;~\sigma,x=\hist::v\vdash x\xrightarrow{v} x$}
\end{bprooftree}}

\vquad
{\begin{bprooftree}
    \AxiomC{$S;~\sigma\vdash e_1\xrightarrow{v_1} e_1'$}
    \RightLabel{(S-FBY)}
    \UnaryInfC{$S;~\sigma\vdash e_1 \fby{} e_2 \xrightarrow{v_1} \delay{e_2}$}
\end{bprooftree}}
\qquad
{\begin{bprooftree}
    \AxiomC{$S;~\kw{prev}(\sigma)\vdash e\xrightarrow{v} e'$}
    \RightLabel{(S-DELAY)}
    \UnaryInfC{$S;~\sigma\vdash \delay{e}\xrightarrow{v}\delay{e'}$}
\end{bprooftree}}
\\
\vquad
{\begin{bprooftree}
\AxiomC{$S;~\sigma, x=\hist::\true \vdash e_t \xrightarrow{v_t} e_t'$}
\AxiomC{$S;~\sigma, x=\hist::\true \vdash e_f \xrightarrow{v_f} e_f'$}
\RightLabel{(S-IF-T)}
\BinaryInfC{$S;~\sigma, x=\hist::\true \vdash (\kw{if~} x \kw{~then~} e_t \kw{~else~} e_f) \xrightarrow{v_t} (\kw{if~} x \kw{~then~} e_t' \kw{~else~} e_f')$}
\end{bprooftree}}
\\
\vquad
{\begin{bprooftree}
\AxiomC{$S;~\sigma, x=\hist::\false  \vdash e_t \xrightarrow{v_t} e_t'$}
\AxiomC{$S;~\sigma, x=\hist::\false  \vdash e_f \xrightarrow{v_f} e_f'$}
\RightLabel{(S-IF-F)}
\BinaryInfC{$S;~\sigma, x=\hist::\false \vdash (\kw{if~} x \kw{~then~} e_t \kw{~else~} e_f) \xrightarrow{v_f} (\kw{if~} x \kw{~then~} e_t' \kw{~else~} e_f')$}
\end{bprooftree}}
\\
\vquad

{\begin{bprooftree}
\AxiomC{$S;~\sigma,x=\hist::\nil\vdash e_1\xrightarrow{v}e_1'$}
\AxiomC{$S;~\sigma,x=\hist::v\vdash e_2\xrightarrow{w}e_2'$}
\RightLabel{(S-LETREC)}
\BinaryInfC{$S;~\sigma\vdash \kw{let~rec}_{\hist}~x:\tau=e_1\kw{~in~}e_2\xrightarrow{w}\kw{let~rec}_{\hist::v}~x:\tl{\tau}=e_1'\kw{~in~}e_2'$}
\end{bprooftree}}
\vquad

{\begin{bprooftree}
\AxiomC{$S;~\sigma\vdash e_1\xrightarrow{v}e_1'$}
\AxiomC{$S;~\sigma,x=\hist::v\vdash e_2\xrightarrow{w}e_2'$}
\RightLabel{(S-LET)}
\BinaryInfC{$S;~\sigma\vdash \kw{let}_{\hist}~x:\tau=e_1\kw{~in~}e_2\xrightarrow{w}\kw{let}_{\hist::v}~x:\tl{\tau}=e_1'\kw{~in~}e_2'$}
\end{bprooftree}}
\vquad

{\begin{bprooftree}
    \AxiomC{}
    \RightLabel{(S-APP)}
    \UnaryInfC{$S, f(x)=e;\ \sigma, y=\hist::v\vdash f(y) \xrightarrow{e[x\mapsto v]} f(y)$}
\end{bprooftree}}
\vquad

{\begin{bprooftree}
    \AxiomC{$S;~\sigma \vdash e\xrightarrow{v} e'$}
    \AxiomC{$\neg$ \kw{robot-mode}}
    \RightLabel{(S-MODELS)}
    \BinaryInfC{$S;~\sigma \vdash e \kw{~models~} r \xrightarrow{v} e'\kw{~models~}r$}
\end{bprooftree}}

\caption{MARVeLus Stream Semantics, inspired from existing semantics of Lustre and Esterel \cite{ratel1992definition,parissis1996test,colaco2006mixing,caspi_co-iterative_1998}
    }
    \label{fig:streamsemantics}
\end{figure*}

\subsection{Predicate Semantics}
MARVeLus allows one to define predicates in the style of Linear Temporal Logic \cite{pnueli1977temporal}. Adopting the notation of Manna and Pnueli \cite{manna1992temporal}, we use $\vDash$ to denote satisfaction of a property over a stream and $\VDash$ to denote satisfaction over a single state. In this section, we denote streams $s$ as sequences of values $s_i$ for illustrative purposes. Furthermore, we note that predicates themselves are in fact streams of Boolean values in our system, with each value representing the predicate's satisfaction given the state of the stream at that instant. Since specifications in MARVeLus are made with respect to the initial definitions of streams (i.e. before program execution), predicates written by the user are considered over the entire time period of the program, from the beginning of program execution onwards.

\subsubsection{State Predicates}
For a state predicate $p$ and stream $s$, $(s_i,~s_{i+1},~s_{i+2},...)\vDash p$ if and only if $s_i\VDash p$.

\subsubsection{Trace Predicates}
Trace predicates offer a more holistic view of the stream's behavior, and we define a subset of those found in Linear Temporal Logic.
\begin{enumerate}
    \item Globally ($\alw$): A property $\alw\phi$  requires that $\phi$ be satisfied for all values of a stream beginning at the present instant. That is, $(s_i,~s_{i+1},~s_{i+2},...)\vDash (\alw\phi)_i\iff \forall (j\geq i)\ .\ s_j\VDash \phi_i$. Verification of these predicates in MARVeLus is handled via induction; the left- and right-hand sides of a $\kw{fby}$ construct become the initial condition and inductive step, respectively. Since refinement predicates are checked in the program's initial instant, satisfaction of $\alw\phi$ implies satisfaction of $\phi$ for all instants in time; thus $\alw$ properties are invariant properties.
    
    \item Next ($\nxt$): A property $\nxt\phi$ requires that $\phi$ be satisfied in the next instant. In other words, $(s_i,~s_{i+1},~s_{i+2},...)\vDash \nxt\phi\iff  s_{i+1}\VDash \phi$
    
\end{enumerate}

We additionally introduce $\hd$ as a convenience function that transforms a trace predicate $\phi$ such that for any streams $s$ and $t$, $s\vDash \phi \Rightarrow (s\fby{} t) \vDash \hd{\phi}$. Thus, $\hd{\phi}$ represents a condition that a stream with $s_0$ as its initial value will satisfy, regardless of subsequent values, assuming that $s$ satisfies $\phi$. This is useful for deriving properties about a stream constructed using $s\fby{}t$, if properties are known about $s$ and $t$ individually. The transformation is defined recursively for state predicates $p$ and trace predicates $\phi,\psi$ in Fig. \ref{fig:hdsemantics}. The transformation for state predicates and $\alw$ are unsurprising; state predicates are only ever concerned with the immediate value, and $\alw\phi$ follows from the equivalence $\alw\phi \equiv \phi \land \nxt\alw\phi$. No additional information can be learned from $\nxt\phi$ as its satisfaction tells us nothing about the immediate value of the stream. We also introduce the $\kw{impl}()$ operator to condition refinement predicates at each instant on a boolean stream, without introducing disjunctions of trace predicates, to maintain the ability to split predicates.
\begin{lemma}[Correctness of \kw{hd}]
For any stream $\sigma=(\sigma_0,\  \sigma_1,\ldots)$, if $\sigma \vDash \psi$, then for any other stream $\tau=(\tau_0,\ \tau_1, \ldots)$, $(\sigma_0,\ \tau_1,\ \tau_2...\tau_n...)\vDash\kw{hd}(\psi)$
\label{lemma:hdsoundness}
\end{lemma}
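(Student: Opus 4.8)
The plan is to proceed by structural induction on the trace predicate $\psi$, following the recursive definition of $\kw{hd}$ (Fig.~\ref{fig:hdsemantics}). The organizing observation is that $\kw{hd}(\psi)$ is engineered to retain exactly those obligations of $\psi$ that constrain the head value $\sigma_0$, while discarding any obligation that depends on the (now arbitrary) tail. Since the modified stream $(\sigma_0,\ \tau_1,\ \tau_2,\ldots)$ shares its head with $\sigma$, these retained obligations should transfer directly, and the tail-dependent obligations are precisely the ones $\kw{hd}$ drops.

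For the base case of a state predicate $p$, we have $\kw{hd}(p) = p$. By the state-predicate semantics, $\sigma \vDash p \iff \sigma_0 \VDash p$, and since the modified stream has the same head $\sigma_0$, it likewise satisfies $p$. The case $\psi = \nxt\phi$ is immediate: $\kw{hd}(\nxt\phi) = \true$, which every stream satisfies; intuitively $\sigma \vDash \nxt\phi$ constrains only $\sigma_1, \sigma_2, \ldots$ and says nothing about $\sigma_0$, so nothing survives the tail replacement. For a conjunction $\phi \land \psi'$, we use $\kw{hd}(\phi \land \psi') = \kw{hd}(\phi) \land \kw{hd}(\psi')$: from $\sigma \vDash \phi \land \psi'$ we split into $\sigma \vDash \phi$ and $\sigma \vDash \psi'$, apply the induction hypothesis to each conjunct, and recombine.

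The main case—and the one I expect to be the crux—is $\psi = \alw\phi$, where $\kw{hd}(\alw\phi) = \kw{hd}(\phi)$. Here I would invoke the unfolding $\alw\phi \equiv \phi \land \nxt\alw\phi$ (equivalently, instantiate the universal quantifier in the semantics of $\alw$ at $j=0$) to extract $\sigma \vDash \phi$ from $\sigma \vDash \alw\phi$. Since $\phi$ is a structurally smaller trace predicate, the induction hypothesis then yields $(\sigma_0,\ \tau_1,\ \tau_2,\ldots) \vDash \kw{hd}(\phi)$, which is exactly $\kw{hd}(\alw\phi)$.

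The delicate point worth stating carefully is the well-foundedness of the induction in the $\alw$ case: we apply the hypothesis to $\phi$, a proper subterm of $\alw\phi$, so both the recursion on $\kw{hd}$ and the induction terminate even when $\phi$ nests further temporal operators (e.g.\ $\alw\nxt p$). Beyond this, the only real care needed is to pin down the informal semantics notation—reading $(s_i, s_{i+1}, \ldots) \vDash \phi$ as satisfaction of the suffix—so that ``the obligation on the head'' is unambiguous. With that fixed, every case reduces to the single observation that $\kw{hd}$ projects $\psi$ onto exactly its head-dependent fragment, which is invariant under replacing the tail by $\tau$.
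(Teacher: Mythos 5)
Your proof is correct and follows essentially the same route as the paper's: induction on the structure of $\psi$ (equivalently, on the definition of $\kw{hd}$), with identical treatment of the state-predicate, conjunction, $\nxt$, and $\alw$ cases. If anything, you are slightly more careful than the paper in the $\alw$ case, explicitly extracting $\sigma \vDash \phi$ from $\sigma \vDash \alw\phi$ via the unfolding $\alw\phi \equiv \phi \land \nxt\alw\phi$ before applying the induction hypothesis, a step the paper's proof leaves implicit.
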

\begin{proof}
    By induction on the definition of $\kw{hd}(\psi)$\\
    \textbf{Case} $p$ where $p$ is a state predicate. 
    Assume $\sigma \vDash p$. Then $\sigma_0 \VDash p$. Therefore, $(\sigma_0,\ \tau_1,\ \tau_2...\tau_n...)\vDash \kw{hd}(p)\equiv p$ for any $\tau$. \\
    \textbf{Case} $\psi_1\land\psi_2$. 
    Assume $\sigma \vDash \psi_1\land\psi_2$. Then $\sigma\vDash \psi_1$ and $\sigma\vDash \psi_2$. By IH, $(\sigma_0,\ \tau'_1,\ \tau'_2...\tau'_n...)\vDash \kw{hd}(\psi_1)$ for any $\tau'$, and $(\sigma_0,\ \tau'_1,\ \tau'_2...\tau'_n...)\vDash \kw{hd}(\psi_2)$ for any $\tau'$. Therefore, $(\sigma_0,\ \tau'_1,\ \tau'_2...\tau'_n...)\vDash \kw{hd}(\psi_1\land\psi_2)$.\\
    \textbf{Case} $\alw\psi$. 
    Assume $\sigma \vDash \alw\psi$. By induction hypothesis (IH), $(\sigma_0,\ \tau'_1,\ \tau'_2...\tau'_n...)\vDash \kw{hd}(\psi)$ for any $\tau'$. Therefore,  $(\sigma_0,\ \tau_1,\ \tau_2...\tau_n...)\vDash \kw{hd}(\alw\psi)\equiv \kw{hd}(\psi)$ for any $\tau$.\\
    \textbf{Case} $\nxt\psi$. 
    Trivial. $(\sigma_0,\ \tau_1,\ \tau_2...\tau_n...)\vDash \kw{hd}(\nxt\psi)\equiv \top$ for any $\sigma$ and $\tau$.
\end{proof}
\begin{figure}
    \centering
    \begin{align*}
        \kw{hd}(p) & \triangleq p &
        \kw{impl}(q, p) & \triangleq \lnot q\lor p \\
        \kw{hd}(\phi\land \psi) & \triangleq \kw{hd}(\phi) \land \kw{hd}(\psi) &
        \kw{impl}(q, \phi\land\psi) & \triangleq \kw{impl}(q, \phi)\land \kw{impl}(q, \psi)\\
        \kw{hd}(\alw\phi) & \triangleq \kw{hd}(\phi) &
        \kw{impl}(q, \alw\phi) & \triangleq \alw(\kw{impl}(q, \phi)) \\
        \kw{hd}(\nxt\phi) & \triangleq \top&
        \kw{impl}(q, \nxt\phi) & \triangleq \nxt(\kw{impl}(q, \phi))
    \end{align*}
    \caption{Inductive definitions of the \kw{hd} and \kw{impl} operators. The \kw{impl} operator is needed because the syntax only allows the $\lor$ operator (and thus implications) at the level of state formulas.}
    \label{fig:hdsemantics}
\end{figure}

\section{The MARVeLus Type System}
\label{sec:6}
Formal verification of MARVeLus programs relies upon type checking using specifications supplied as type refinements. Our primary contribution to the type system is the adaptation and implementation of refinement type specifications for streams. Unique to our refinement type system is the introduction of temporal operators in refinement predicates. 

\begin{figure}
\flushleft{$\boxed{G;~H\vdash e:\tau}$}\\
\vquad

    \centering
    {\begin{bprooftree}
    \AxiomC{}
    \RightLabel{(T-CONST)}
    \UnaryInfC{$G;~H\vdash c:\refinement{w}{b}{\alw(w=c)}$}
    \end{bprooftree}}
    \qquad
    {\begin{bprooftree}
    \AxiomC{$G;~H(x)=\refinement{w}{b}{\psi}$}
    \RightLabel{(T-VAR)}
    \UnaryInfC{$G;~H\vdash x:\refinement{w}{b}{\psi \land \alw(w=x)}$}
    \end{bprooftree}}
    \vquad

    {\begin{bprooftree}
        \AxiomC{$G;~H\vdash e_1:\refinement{w}{b}{\hd{\psi_1}}$}
        \AxiomC{$G;~H\vdash e_2:\refinement{w}{b}{\psi_2}$}
        \RightLabel{(T-FBY)}
        \BinaryInfC{$G;~H\vdash e_1\fby{} e_2:\refinement{w}{b}{\hd{\psi_1} \land \nxt\psi_2}$}
    \end{bprooftree}}
    \vquad
    
    {\begin{bprooftree}
        \AxiomC{$G;~\prev{H}\vdash e:\tau$}
        \RightLabel{(T-DELAY)}
        \UnaryInfC{$G;~H\vdash \delay{e}:\tau$}
    \end{bprooftree}}
    \vquad
    
    {\begin{bprooftree}
    \AxiomC{$G;~H\vdash \tau_1$}
    \AxiomC{$G;~H \vdash e_1:\tau_1$}
    \AxiomC{$G;~H,x:\tau_1 \vdash e_2:\tau_2$}
    \RightLabel{(T-LET)}
    \TrinaryInfC{$G;~H \vdash {\sf let}_h~{x:\tau_1=e_1}{\sf~in~}{e_2}:\tau_2$}
    \end{bprooftree}}
    \vquad
    
    {\begin{bprooftree}
    \AxiomC{$G;~H\vdash \tau_1$}
    \AxiomC{$G;~H,x:\tau_1\vdash e_1:\tau_1$}
    \AxiomC{$G;~H,x:\tau_1 \vdash e_2:\tau_2$}
    \RightLabel{(T-LETREC)}
    \TrinaryInfC{$G;~H\vdash {\sf let}_h{\sf~rec~}x:\tau_1=e_1{\sf ~in~} e_2:\tau_2$}
    \end{bprooftree}}\\

    \vquad

    {\begin{bprooftree}
        \AxiomC{$G;~H\vdash x_c:\kw{bool}$}
        \AxiomC{$G;~H\vdash e_t:\refinement{w}{b}{\kw{impl}(x_c, \psi)}$}
        \noLine
        \UnaryInfC{$G;~H\vdash e_f:\refinement{w}{b}{\kw{impl}(\lnot x_c, \psi)}$}
        \RightLabel{(T-IF)}
        \BinaryInfC{$G;~H\vdash \ifthenelse{x_c}{e_t}{e_f}:\refinement{w}{b}{\psi}$}
    \end{bprooftree}}

    \vquad

    {\begin{bprooftree}
        \AxiomC{$G,f:(x:\refinement{w_1}{b_1}{\phi_1} \to \refinement{w_2}{b_2}{\phi_2});~H\vdash y:\refinement{w_1}{b_1}{\alw\phi_1}$}
        \RightLabel{(T-APP)}
        \UnaryInfC{$G,f:(x:\refinement{w_1}{b_1}{\phi_1} \to \refinement{w_2}{b_2}{\phi_2});~H\vdash f~y:\refinement{w_2}{b_2}{\alw\phi_2[x\mapsto y]}$}
    \end{bprooftree}}
    
\vquad

    {\begin{bprooftree}
        \AxiomC{$G;~H\vdash e:\tau$}
        \RightLabel{(T-MODELS)}
        \UnaryInfC{$G;~H\vdash e\kw{~models~}r:\tau$}
    \end{bprooftree}}
    \qquad
    {\begin{bprooftree}
    \AxiomC{}
    \RightLabel{(T-NIL)}
    \UnaryInfC{$G;~H\vdash \nil:\tau$}
\end{bprooftree}}

\vquad

\begin{bprooftree}
    \AxiomC{$G;~H\vdash e:\tau$}
    \AxiomC{$G;~H\vdash \tau \preceq \tau'$}
    \AxiomC{$G;~H\vdash\tau'$}
    \RightLabel{(T-SUB)}
    \TrinaryInfC{$G;~H\vdash e:\tau'$}
\end{bprooftree}
    \caption{We define a set of stream typing rules incorporating the aforementioned temporal semantics. We use a subtyping rule (T-SUB) similar to that of Liquid Haskell \cite{jhala2020refinement,borkowski_mechanizing_2022}.}
    \label{fig:typingrules}
\end{figure}

\begin{figure}
\flushleft{$\boxed{G;~H\vdash \tau_1\preceq\tau_2}$}\\
\vquad
\centering
    \begin{bprooftree}
        \AxiomC{$G;~H \vdash \forall x_1:b~.~\phi_1 \Rightarrow \phi_2[x_2 \mapsto x_1]$}
        \RightLabel{(SUB-BASE)}
        \UnaryInfC{$G;~H\vdash \refinement{x_1}{b}{\phi_1}\preceq \refinement{x_2}{b}{\phi_2}$}
    \end{bprooftree}
    
    \vquad
    
    \caption{Subtyping rule inspired by Liquid Haskell \cite{jhala2020refinement,borkowski_mechanizing_2022}, which we implement in MARVeLus. The arrow $\Rightarrow$ symbolizes logical implication.}
    \label{fig:subtyping}
\end{figure}

\begin{figure}
    \flushleft{$\boxed{G;~H\vdash c}$}\\
    \vquad
    \centering
    \begin{bprooftree}
        \AxiomC{$\kw{SMTValid}(c)$}
        \RightLabel{(ENT-EMP)}
        \UnaryInfC{$G;~\varnothing\vdash c$}
    \end{bprooftree}
    \qquad
    \begin{bprooftree}
        \AxiomC{$G;~H\vdash \forall x:b~.~\phi[v \mapsto x] \Rightarrow c$}
        \RightLabel{(ENT-EXT)}
        \UnaryInfC{$G;~H, x:\refinement{v}{b}{\phi}\vdash c$}
    \end{bprooftree}
    
    \caption{We use similar entailment rules as those found in \cite{jhala2020refinement} for discharging verification conditions, where \kw{SMTValid}() is a call to the SMT solver to check validity of a predicate.}
    \label{fig:entailment}
\end{figure}

Fig. \ref{fig:typingrules} gives the set of typing rules, adopting the proposed stream syntax. 

\subsection{Type Semantics}
A stream's type refinement specifies the stream's behavior from the current instant onwards. For example, a stream inhabiting the type $\refinement{w}{\inttype}{\alw(w\geq 0)}$ is non-negative at the current instant, and will continue to be non-negative in all subsequent time instants. We follow the LTL convention in that predicates are checked for validity in the time instant in which they occur. Thus, a stream's type describes the stream's immediate value along with the values it will produce in the future. We use the same judgement as Zélus \cite{benveniste2011hybrid}--$G,H \vdash e:\tau$--to express that, in function context $G$ and stream context $H$, an expression $e$ has type $\tau$. It has proven to be more manageable to separate imported functions from the base language into a distinct environment $G$.

\subsection{Environments, Types, and their Operations}
Aside from the separation of function and stream environments, the two environments are defined in the standard way (Fig. \ref{fig:h-ops}). By virtue of our streams being constructed from initial values and subsequent steps (through \kw{fby}), it is convenient to define operators $\hd{}$ and $\tl{}$ to extract the ``immediate'' and ``suffix'' of a type, respectively. For some type $\tau$ that can be written as $\refinement{w}{b}{\hd{\phi_1}\land \nxt\phi_2}$, an expression of type $\hd{\tau}$ is only required to satisfy the refinement predicate of $\tau$ in the current step. Hence, it only needs to satisfy $\hd{\phi_1}$. Similarly, an expression of type $\tl{\tau}$ must satisfy, immediately, the refinement predicate that an expression of type $\tau$ will satisfy in the next time instant. The $\tl{}$ operator is particularly useful for computing the types of rewritten expressions of the form $e\xrightarrow{v}e'$, and so we extend it to operate on the entire environment ($\tl{H}$). Analogously to the operational semantics, we must also capture a notion of ``history'' in the typing context. However, the result is not as exact. If $\sigma$ represents a variable's past behavior, then $H$ represents its future. As a variable is updated, its immediate future behavior specializes into a single value, which in the next time instant, gets added to $\sigma$. This process ``consumes'' the part of the refinement predicate that pertains to the current instant in preparation for determining the variable's future behavior. Unlike the case for $\sigma$, reversing this process in $H$ is lossy, and the best we can accomplish is an approximation of the original type. Thus, if we started with a type $\refinement{w}{b}{\hd{\phi_1}\land\nxt \phi_2}$ and take its suffix $\tl{\refinement{w}{b}{\hd{\phi_1}\land\nxt \phi_2}}=\refinement{w}{b}{\phi_2}$, we may not be able to recover $\hd{\phi_1}$, at least not syntactically. However, we at least know that  $\prev{\tl{\refinement{w}{b}{\hd{\phi_1}\land\nxt \phi_2}}}$ is a subtype of $\refinement{w}{b}{\nxt\phi_2}$. Using this, and potentially additional information brought in through invariants, we can approximate the type environment from the previous time instant, which may suffice even if we cannot recover the exact environment.

\begin{figure}
    \centering
    \begin{align*}
        H ::= & \varnothing~|~H,x:\tau\\
        G ::=&\varnothing~|~G,f(x):x:\tau_1 \to \tau_2\\
        \hd{\refinement{w}{b}{\hd{\phi_1}\land \nxt \phi_2}} \triangleq & \refinement{w}{b}{\hd{\phi_1}}\\
        \tl{\refinement{w}{b}{\hd{\phi_1}\land \nxt \phi_2}} \triangleq & \refinement{w}{b}{\phi_2}\\
        \tl{H} \triangleq & \{\tl{H(x)}~|~x\in\dom{H}\}\\
        \refinement{w}{b}{\hd{\phi_1}\land \nxt \phi_2} &\preceq \prev{\refinement{w}{b}{\phi_2}} \preceq \refinement{w}{b}{\nxt \phi_2}\\
        \prev{H} \triangleq & \{\prev{H(x)}~|~x\in\dom{H}\}
    \end{align*}
    \caption{Typing Environments, and the operations on types and environments.}
    \label{fig:h-ops}
\end{figure}

\subsection{Summary of Typing Rules}
\begin{enumerate}
\item \textit{Constants}: (T-CONST) encapsulates the behavior of refinement type synthesis for constants \cite{jhala2020refinement}, assigning a specific singleton type to constants. Since constants do not change, we can strengthen the refinement predicate with $\square$ to require that the constant $c$ is typed as a stream that is always equal to $c$. 

\item \textit{Variables}: (T-VAR) performs selfification on variables \cite{jhala2020refinement,ou_dynamic_2004}, which strengthens the type by explicitly requiring that the refinement variable $v$ is always equal to the term variable $x$. This is achieved by the addition of the $\square(v=x)$ constraint in the refinement. 

\item \textit{Stream Compositions}: (T-FBY-0) and (T-FBY-1) generate the verification conditions for typing the composition of two streams $e_1$ and $e_2$, with associated predicates $\hd{\psi_1}$ and $\psi_2$, respectively. Since the resultant stream must not change its base type, we require that $e_1$ and $e_2$ type to the same base type. The \kw{fby} construct allows one to combine values of both streams $e_1$ and $e_2$; naturally, the resultant stream inherits some properties of both. Crucially, the stream only takes the first value of $e_1$, after which it replicates the behavior of the stream $e_2$ but delayed by one cycle. Consequently, given $e_1$ satisfying property $\psi_1$ for at least the first cycle and $e_2$ satisfying $\psi_2$, the stream $e_1$ \kw{fby} $e_2$ should satisfy both $\psi_1$ for the first cycle, and also $\nxt\psi_2$. The subscript denotes the value ``buffered'' by \kw{fby}, which is empty in the first cycle (T-FBY-0), and may be occupied by a value in subsequent cycles (T-FBY-1). Since the first stream is only considered in the first cycle and is disregarded in determining the resultant type in (T-FBY-1), it is left unrefined. Instead, a stream's behavior with $\fby{u}$ where $u\neq \nil$ is determined by the $u$ and $e_2$.

\item \textit{Local Bindings}: We define recursive (T-REC) and non-recursive (T-LET) local bindings as usual, noting that (T-REC) is crucial for proving inductive invariants when paired with (T-FBY). We note that T-LET requires type annotations be well-formed ($G;~H\vdash \tau_1$), that is, the refinement predicate is a Boolean stream that only references in-scope variables. This is a direct lifting of the non-stream well-formedness judgement for other refinement type systems \cite{jhala2020refinement,rondon2008liquid}. Suppose we want to verify that the program \texttt{let rec x:\{v:int | v >= 0\} = 0 fby x + 1 in x} types to $\{v:\kw{int}~|~v\geq 0\}$. Since $x$ is recursively defined, verifying that all values of $x$ are non-negative involves first verifying that the initial condition 0 and then the inductive definition \texttt{x+1}, with the assumption that $\square(x\geq 0)$ held in the previous instant. At first glance, the assumption $\square(x\geq 0)$ appears circular, essentially stating that the stream we want to check already satisfies the predicate we want to verify. However, we note that at each instant beyond the first time step, $x$ represents the stream in the current time step, whereas $x+1$ is the new definition that will take effect in the next time step. Therefore, the verification task shifts to proving that, assuming the stream $x$ already satisfies $\square(x\geq 0)$, the predicate $\square((x+1)\geq 0)$ is valid.

\item \textit{Branching Expressions}: As with other refinement type systems \cite{jhala2020refinement}, the type to be checked is passed to each branch and further refined so that it only needs to be satisfied when the conditional of that branch is satisfied. This control-flow awareness prevents the type system from needlessly rejecting terms that the program will have never reached. This notion is lifted to stream types through the $\kw{impl}()$ operator, which inserts an implication statement at the state-predicate level, achieving the same effect as in non-stream refinement type systems with a similar rule. 

\item \textit{Function Application}: Since functions are imported from a non-stream language and applied point-wise to the values of stream variables, they are considered constant and thus their types are unchanging over time. As a result, the argument variable should type to the function's argument type at all times. By extension, the return type of the function application should be enforced for all time. This is encapsulated by the (T-APP) rule, which augments the original (non-stream) function type with the LTL $\alw$ operator to require that these properties hold over all time. 

\item \textit{Robot-specific commands}:
Due to the unpredictability of real-world physical parameters, we exclude \kw{robot\_get} and \kw{robot\_store} from refinement type-checking, and treat them as functions that simply return the appropriate un-refined term for the base Zélus type checker. Instead, we require that each call to \kw{robot\_get} be accompanied by \kw{models}, so that each variable intended to derive values from the real world has a deterministic verifiable model. Indeed, the user is responsible for ensuring that the model does in fact reflect expected real-world conditions, which are impossible to predict at compile-time. Since the \kw{models} keyword is intended to be semantically transparent, it directly inherits the type of its inner expression. Although it is outside the control of MARVeLus, it is assumed that the robot-specific command following \kw{models} has the same type. 

\item \textit{Typing \nil}:
Since \nil~ can be pushed onto the history of any variable, we allow it to type to anything. While this may appear disconcerting, the static initialization and causality analyses in Zélus \cite{benveniste2011hybrid} ensure that well-formed programs do not allow \nil to be executed. That is, a variable that does not have immediate self-dependencies or have circular dependencies with other variables will never access a variable whose history ends in \nil. 

\item \textit{Subtyping}:
We use the subtyping rule from Liquid Haskell \cite{jhala2020refinement,rondon2008liquid}, which stipulates that an expression which already types to $\tau$ may also type to $\tau'$ if $\tau'$ is a subtype of $\tau$. This allows refinement types to be checked without needing an exact match to the user-supplied type. We use the standard well-formedness rules surrouding type refinements to ensure $\tau'$, which may be user-supplied, has a well-typed predicate and does not contain unbound variables. We also adopt the subtyping judgement $G;H\vdash \tau_1\preceq \tau_2$ with rule (SUB-BASE) (Fig. \ref{fig:subtyping} to subtype refined base types in a similar style as Liquid Haskell. A function subtyping rule is not necessary as MARVeLus streams do not support higher-order functions. We use a similar entailment judgement $G;H\vdash c$ (Fig. \ref{fig:entailment}) which provides the interface to the SMT solver by constructing the verification condition to be checked for validity given the subtyping implication and refinement predicates bound to variables in the typing environment. 

\end{enumerate}

\subsection{Type Safety}
We prove type safety roughly following the classical syntax-guided manner of Wright and Felleisen \cite{wright1994syntactic}. A key difference arises in our type system from the decoupling of program terms from the values they produce in the synchronous semantics. Type safety must apply not only to the values emitted, but also to the subsequent evaluations of the stream term to ensure that the program remains safe in the future. We also note that existing work by Boulmé and Hamon \cite{goos_certifying_2001} has explored synchronous properties of languages such as Lustre, some of which resemble progress in conventional languages. 
We begin by defining convenience operators $\kw{basetype}$ and $\kw{now}$, which ``un-refine'' a refined type and retrieve the most recent entry of a variable's history, respectively:
\begin{align*}
    \kw{basetype}(\refinement{w}{b}{\phi})=b &&\kw{now}(\hist::v)=v
\end{align*}
We also define conditions under which the function and stream term environments correspond with their type environment counterparts. Namely, we want to enforce some notion of well-typedness on the environments:

\begin{defn}[Function Environment Correspondence]
    $G:\kw{name}\to T$ \textbf{corresponds with} $S:\kw{name}\to(x:\tau_1\to\tau_2)$ when the following conditions are met:
    \begin{itemize}
        \item $\dom{G} \subseteq \dom{S}$
        \item $\forall f \in \dom{G}$, $G;H \vdash S(f):G(x)$
    \end{itemize}
    \label{def:correspfunction}
\end{defn}

\begin{defn}[Stream Environment Correspondence]
    $H:\kw{name}\to T$ \textbf{corresponds with} $\sigma:\kw{name}\to h$ when the following conditions are met:
    \begin{itemize}
        \item $\dom{H} \subseteq \dom{\sigma}$
        
        \item $\forall x \in \dom{H}$:
        \begin{itemize}
            \item $G; H \vdash \sigma(x): \kw{basetype}(H(x))$ and 
            \item $G;H \vdash \kw{now}(\sigma(x)):\hd{H(x)}$
        \end{itemize}
        \item $\prev{\sigma}$ corresponds with any $\prev{H}$
    \end{itemize}
    \label{def:correspstream}
\end{defn}

Definition \ref{def:correspfunction} is standard. We note the absence of $H$ in its typing judgement as functions are judged in their original non-stream contexts using the standard typing judgement. Definition \ref{def:correspstream} involves streams and thus requires different treatment. Stream types determine a variable's current and expected future trajectory, while stream histories record a variable's past and current trajectory. They intersect at the present moment, where both the history and type of a variable refer to the variable's current state. Thus, at any given moment, we can only make a precise, refinement typing judgement about the variable's current value. Since histories are initialized to be empty and are extended only by let-bindings (recursive or otherwise), we can ensure that let-bindings maintain correspondence between the type and term environments for all time. This is proven explicitly in the (T-LETREC) case in our type preservation proof, and assumed elsewhere. Because the $\prev{}$ operator is unable to exactly recover the original refinement of a type, we simply require correspondence of $\prev{\sigma}$ with any $\prev{H}$, noting that for any $H$ which evolves into $H'$, $\prev{H'}$ will necessarily produce subtypes of variables in $H$. We can however ascertain that a variable corresponds with its base type at any point in time.

\begin{theorem}[Type Safety]
    If $~H$ corresponds with $\sigma$ and $G$ corresponds with $S$, and if $~G;H\vdash e:\tau$, then $\exists e'$ such that $S;\sigma\vdash e\xrightarrow{v}e'$ where $v$ is a value, $G,H \vdash v:\hd{\tau}$, and $G,\tl{H}\vdash e':\tl{\tau}$
    \label{thm:typesafety}
\end{theorem}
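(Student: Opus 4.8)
The plan is to proceed by induction on the derivation of $G;H\vdash e:\tau$, establishing the three conclusions---progress to some residual $e'$ emitting a value $v$, the value-typing $G,H\vdash v:\hd{\tau}$, and the preservation $G,\tl{H}\vdash e':\tl{\tau}$---simultaneously in each case. Since every typing rule is matched by exactly one operational rule, the case analysis is syntax-directed: (T-CONST) against (S-CONST), (T-VAR) against (S-VAR), (T-FBY) against (S-FBY), and so on. The base cases are quick. For (T-CONST) the step $c\xrightarrow{c}c$ is immediate; using $\alw\phi\equiv\phi\land\nxt\alw\phi$ the head type $\hd{\refinement{w}{b}{\alw(w=c)}}$ simplifies to $\refinement{w}{b}{w=c}$, which $c$ inhabits, and since $\tl{}$ fixes an $\alw$-predicate the residual $c$ again has type $\tau$. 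For (T-VAR) progress requires that $\sigma(x)$ end in a genuine value rather than $\nil$; here I would invoke the standing causality and initialization assumptions, which guarantee a well-formed program never reads a $\nil$-terminated history, so (S-VAR) applies and $v=\kw{now}(\sigma(x))$ inhabits $\hd{H(x)}$ by the stream-environment correspondence, while the residual $x$ retypes in $\tl{H}$ by (T-VAR) together with $\tl{(\psi\land\alw(w=x))}=\tl{\psi}\land\alw(w=x)$.

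For the inductive cases I would thread the induction hypothesis through the premises dictated by each operational rule. The let-cases are where I must show that extending the environments preserves correspondence (Definition \ref{def:correspstream}): after stepping $e_1$ to $v$ by the IH, the extended pair $(\sigma,x=\hist::v)$ and $(H,x:\tau_1)$ again correspond because $G,H\vdash v:\hd{\tau_1}$ is exactly the value-typing delivered by the IH, and $\tl{(H,x:\tau_1)}=\tl{H},x:\tl{\tau_1}$ matches the environment produced by (S-LET). The recursive case additionally inserts $\nil$ at the tail of $\sigma(x)$ before evaluating $e_1$; correspondence is recovered because $\nil$ types to $\hd{\tau_1}$ via (T-NIL), which is precisely why that admissibility rule is present, and this is the case where correspondence must be re-established explicitly rather than assumed. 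The branching case (T-IF) splits on the boolean value of $x_c$ furnished by correspondence, using the fact that $\kw{impl}(x_c,\psi)$ collapses to $\psi$ under the SMT entailment when $x_c$ is $\true$ (symmetrically for $\false$); both branches step, so the residual conditional retypes by (T-IF). The application case (T-APP) appeals to the function-environment correspondence: the emitted value $e[x\mapsto v]$ is typed from the well-typedness of $S(f)$ together with the standard non-stream substitution lemma, and since $\tl{}$ fixes $\alw$-predicates the residual $f(y)$ retypes by a second use of (T-APP).

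The main obstacle I expect is the (T-FBY)/(T-DELAY) interaction, which is exactly the difficulty the operational semantics was designed around. After (S-FBY) the residual is $\delay{e_2}$, so the preservation obligation is $G,\tl{H}\vdash\delay{e_2}:\refinement{w}{b}{\psi_2}$; unfolding (T-DELAY) reduces this to $G,\prev{\tl{H}}\vdash e_2:\refinement{w}{b}{\psi_2}$, whereas the original derivation supplies only $G,H\vdash e_2:\refinement{w}{b}{\psi_2}$. These environments do not coincide: by Figure \ref{fig:h-ops} we have the chain $\refinement{w}{b}{\hd{\phi_1}\land\nxt\phi_2}\preceq\prev{\refinement{w}{b}{\phi_2}}\preceq\refinement{w}{b}{\nxt\phi_2}$, so each binding of $\prev{\tl{H}}$ is only a supertype (a lossy over-approximation) of the corresponding binding of $H$. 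Exact recovery of $H$ is therefore impossible, and the reconciliation cannot be a routine narrowing argument. Instead I would lean on the clause of Definition \ref{def:correspstream} stating that $\prev{\sigma}$ corresponds with \emph{any} $\prev{H}$, and, where the refinement is genuinely needed, on invariant information carried inside the predicates, to re-establish a typing of $e_2$ in the rewound environment. The same bookkeeping reappears in (T-DELAY) itself, where the step is taken in $\prev{\sigma}$ (justified by correspondence of $\prev{\sigma}$ with $\prev{H}$) and the residual $\delay{e'}$ must be retyped in $\tl{H}$, requiring that $\prev{}$ and $\tl{}$ commute up to subtyping on environments.

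Two supporting lemmas would underpin these cases and I would prove them first: an approximate-commutation result $\tl{\prev{H}}$ versus $\prev{\tl{H}}$ (up to $\preceq$), and monotonicity of $\hd{}$ and $\tl{}$ with respect to $\preceq$, the latter being what makes (T-SUB) go through by simply invoking the IH and closing the conclusion under subtyping. The (T-MODELS) case is transparent, inheriting the type of its inner expression directly via (S-MODELS) and (T-MODELS). I expect the genuine technical weight of the whole argument to concentrate entirely on the delay/fby reconciliation and the maintenance of environment correspondence across the let-rec step, with every other case reducing to a mechanical application of the induction hypothesis followed by a re-application of the same typing rule in the shifted environment.
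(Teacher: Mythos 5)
Your proposal is correct in substance but takes a genuinely different route from the paper. The paper factors Theorem~\ref{thm:typesafety} into two separate lemmas---Progress (Lemma~\ref{lemma:progress}), by induction on the typing derivation, and Type Preservation (Lemma~\ref{lemma:preservation}), by a second induction on the operational-semantics derivation---whereas you establish all three conclusions simultaneously in a single induction on the typing derivation. Since the semantics is syntax-directed and total at each instant, the combined induction is sound, and it buys a cleaner treatment of environment correspondence: where the paper's preservation proof runs an inner induction on the length of the history $\hist$ inside its S-LETREC case (later cited from the S-VAR case to conclude that $\kw{now}(\sigma(x))$ types to $\hd{H(x)}$), you simply re-establish Definition~\ref{def:correspstream} for the extended pair $(\sigma, x=\hist::v)$ and $(H, x:\tau_1)$ using the value-typing your IH just delivered, and you discharge the $\nil$ insertion in the recursive case via (T-NIL) exactly as that rule is intended. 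Your handling of (T-SUB) by monotonicity of $\hd{}$ and $\tl{}$ under $\preceq$ is likewise an alternative to the paper's collapsing of stacked subsumptions following \cite{borkowski_mechanizing_2022}; each route needs an auxiliary fact the other avoids. Two divergences deserve flagging. First, the paper isolates Temporal Predicate Expansion (Lemma~\ref{lemma:expansionlemma}), guaranteeing every $\psi$ can be written $\hd{\psi_1}\land\nxt\psi_2$; you use instances of it ad hoc (e.g.\ $\alw\phi\equiv\phi\land\nxt\alw\phi$) but should state it once, since without it $\hd{\tau}$ and $\tl{\tau}$ are not even defined on arbitrary refinements such as the $\psi$ in (T-IF). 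Second, on the fby/delay crux your account is if anything more careful than the paper's: the paper's S-FBY case asserts $\prev{\tl{H}}(x)\preceq H(x)$ and applies (T-DELAY), even though Fig.~\ref{fig:h-ops} orders the chain the other way ($H(x)\preceq\prev{\tl{H}}(x)$, making $\prev{\tl{H}}$ the lossy supertype you describe), so your observation that routine narrowing does not suffice and that one must fall back on the clause of Definition~\ref{def:correspstream} letting $\prev{\sigma}$ correspond with \emph{any} $\prev{H}$ is an honest diagnosis of a step the paper glosses over.
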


\begin{proof}
    By proving progress and preservation on well-typed terms.
\end{proof}

\subsubsection{Progress}
The standard notion of progress stipulates that a well-typed program term is either a value or can undergo reduction via the semantic rules. Since all MARVeLus streams are designed to emit values in perpetuity, including constant streams, we must show that there is always a rule that allows the program to take another semantic step, and thus emit another value. 
\begin{lemma}[Progress]
    If $~H$ corresponds with $\sigma$ and $G$ corresponds with $S$, if $~G;H\vdash e:\tau$, then $\exists e'$ such that $S;\sigma\vdash e\xrightarrow{v}e'$ where $v$ is a value.
    \label{lemma:progress}
\end{lemma}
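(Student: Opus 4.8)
The plan is to prove progress by structural induction on the derivation of $G;~H\vdash e:\tau$, showing that for each typing rule there is a matching operational rule that fires. The key observation is that the stream semantics in Fig.~\ref{fig:streamsemantics} are essentially syntax-directed: each syntactic form of $e$ has exactly one applicable rule (or, for conditionals, one determined by the value of the guard), so the main work is verifying that the environment-correspondence hypotheses supply whatever side conditions each rule demands. I would proceed case-by-case following the syntax of $e$.

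First I would dispatch the base cases. For a constant $c$ typed by (T-CONST), rule (S-CONST) applies unconditionally and emits $c$. For a variable $x$ typed by (T-VAR), the hypothesis that $H$ corresponds with $\sigma$ (Definition~\ref{def:correspstream}) gives $x\in\dom{H}\subseteq\dom{\sigma}$, so $\sigma(x)$ is some history $\hist::v$; since histories are nonempty at the current instant, (S-VAR) fires and emits $v$. The \nil{} case is vacuous under the causality assumption, and \kw{robot\_get} need not be handled as a standalone progressing term. For function application $f~y$ typed by (T-APP), function-environment correspondence (Definition~\ref{def:correspfunction}) gives $f\in\dom{G}\subseteq\dom{S}$, so $S(f)=e_f$ with parameter $x$, and stream correspondence again ensures $y$ is bound in $\sigma$ with a current value $v$; then (S-APP) fires and emits $e_f[x\mapsto v]$.

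Next I would handle the inductive cases. For $e_1\fby{}e_2$ typed by (T-FBY), the induction hypothesis applied to $e_1$ yields a step $S;~\sigma\vdash e_1\xrightarrow{v_1}e_1'$, which is exactly the premise of (S-FBY); the rule then fires emitting $v_1$ and rewriting to $\delay{e_2}$. For $\delay{e}$ typed by (T-DELAY), the premise is typed under $\prev{H}$, and by the clause of Definition~\ref{def:correspstream} stating that $\prev{\sigma}$ corresponds with any $\prev{H}$, I can apply the IH in the context $(S,~\prev{\sigma})$ to obtain a step for $e$, matching the premise of (S-DELAY). For \kw{let} and \kw{let rec} (rules (T-LET)/(T-LETREC)), the IH gives a step for $e_1$ producing a value $v$; extending $\sigma$ with $x=\hist::v$ (or $x=\hist::\nil$ for the recursive binder's evaluation of $e_1$) preserves correspondence with the correspondingly extended $H$, so the IH then gives a step for $e_2$, and (S-LET)/(S-LETREC) assemble these into a step for the whole binding. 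For \kw{models} outside robot-mode, (T-MODELS) and the IH on the inner expression feed directly into (S-MODELS).

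The main obstacle is the conditional. Rule (T-IF) types the guard only at base type \kw{bool}, whereas (S-IF-T) and (S-IF-F) require knowing the actual current Boolean value $x_c$ emitted by the guard in order to select which rule fires. To bridge this, I would invoke stream correspondence to conclude that the guard variable's current value $\kw{now}(\sigma(x_c))$ is some concrete Boolean; since \kw{bool} has only the two inhabitants \true{} and \false{}, a case split on this value determines exactly one of (S-IF-T)/(S-IF-F), both of whose premises are discharged by applying the IH to $e_t$ and $e_f$ (both branches step, as the semantics evaluate the untaken branch as well). The subtlety to check carefully here is that the subsumption rule (T-SUB) does not interfere: since (T-SUB) leaves the term $e$ unchanged and only weakens the type, I handle it by appealing to the IH on the same term under its prior type, so that progress is genuinely driven by the syntactic shape of $e$ rather than by the particular refinement attached to it.
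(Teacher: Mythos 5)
Your proposal is correct and takes essentially the same approach as the paper's proof: structural induction on the typing derivation, matching each typing rule to its semantic rule, using the correspondence hypotheses (Definitions~\ref{def:correspfunction} and~\ref{def:correspstream}) to discharge the side conditions for variables, application, and the conditional's guard (with the same \true/\false case split), and neutralizing (T-SUB) by passing to the underlying non-subsumption derivation, as the paper does via the collapsing argument of Borkowski et al. Minor cosmetic differences only: you apply the IH just to $e_1$ in the \kw{fby} case (the paper also invokes it on $e_2$, which progress does not need), and you explicitly cover (T-LET), which the paper elides as similar to (T-LETREC).
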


\begin{proof}
        By structural induction on the typing derivation of $e$.\\
        \textbf{Case} T-CONST.
        \begin{itemize}
            \item $G;H\vdash c:\refinement{w}{b}{\alw(w=c}$
            \item By S-CONST, $S;\sigma\vdash c\xrightarrow{c}c$
            \item By definition, $c$ is a value.
        \end{itemize}
        \textbf{Case} T-VAR.
        \begin{itemize}
            \item $G;H\vdash x:\refinement{w}{b}{\psi\land \alw(w=x)}$ for some refinement predicate $\psi$
            \item Because $x$ is well-typed, $x\in \dom{H}$.
            \item By assumption, $x\in\dom{\sigma}$, and so $\sigma(x)=h::v$ for some history $h::v$ where $G;H\vdash v:\hd{H(x)}$ 
            \item Therefore $S;\sigma,x=h::v\vdash x\xrightarrow{v}x$
            \item By definition of histories $v$ is a value.
        \end{itemize}
        \textbf{Case} T-FBY
        \begin{itemize}
            \item $G;H\vdash e_1\fby{} e_2:\refinement{w}{b}{\hd{\psi_1}\land \nxt \hd{\psi_2}}$
            \item By assumption, $G; H\vdash e_1:\refinement{w}{b}{\psi_1}$ and $G; H\vdash e_2:\refinement{w}{b}{\psi_2}$
            \item By IH, $S;\sigma \vdash e_1 \xrightarrow{v_1}e_1'$ and $S;\sigma \vdash e_2 \xrightarrow{v_2}e_2'$ where $v_1$ and $v_2$ are values.
            \item By S-FBY, $S;\sigma \vdash e_1 \fby{} e_2 \xrightarrow{v_1}\delay{e_2}$
        \end{itemize}
        \textbf{Case} T-DELAY
        \begin{itemize}
            \item $G;H \vdash \delay{e_1}:\tau$
            \item By assumption, $G;\prev{H}\vdash e:\tau$
            \item Since $H$ and $\sigma$ correspond, $\prev{H}$ and $\prev{\sigma}$ also correspond.
            \item By IH, $S; \prev{\sigma}\vdash e\xrightarrow{v}e'$ where $v$ is a value.
            \item By S-DELAY, $S,\sigma \vdash \delay{e}\xrightarrow{v}\delay{e'}$ 
        \end{itemize}
        \textbf{Case} T-IF
        \begin{itemize}
            \item $G;H \vdash \ifthenelse{x_c}{e_t}{e_f}:\refinement{w}{b}{\psi}$
            \item By assumption:
            \begin{itemize}
                \item $x_c\in \dom{H}$ and thus $x_c\in\dom{\sigma}$ and $\sigma(x)=h::v_c$ for some $h$ and $v_c$ where $G;H\vdash v_c:\hd{H(x)}$
                \item $G;H \vdash x_c:\kw{bool}$
                \item $G;H \vdash e_t:\refinement{w}{b}{\kw{impl}(x_c,\psi)}$
                \item $G;H \vdash e_f:\refinement{w}{b}{\kw{impl}(\neg x_c,\psi)}$
            \end{itemize}
            \item By IH:
            \begin{itemize}
                \item $S; \sigma,x=\hist::v_c \vdash x_c\xrightarrow{v_c}x_c$ where $v_c$ is a value
                \item $S; \sigma,x=\hist::v_c \vdash e_t \xrightarrow{v_t} e_t'$ where $v_t$ is a value
                \item $S; \sigma,x=\hist::v_c \vdash e_f \xrightarrow{v_f} e_f'$ where $v_f$ is a value
            \end{itemize}
            \item Since $v_c$ is a value and a boolean, it can either be \kw{true} or \kw{false}.
            \item \textbf{Case} $v_c=\kw{true}$: By S-IF-T, $S; \sigma,x=\hist::v_c \vdash \ifthenelse{x_c}{e_t}{e_f} \xrightarrow{v_t}\ifthenelse{x_c}{e_t'}{e_f'}$
            \item \textbf{Case} $v_c=\kw{false}$: Symmetric on S-IF-F
        \end{itemize}
        \textbf{Case} T-LETREC
        \begin{itemize}
            \item $G;H \vdash \kw{let}_h\kw{~rec~} x:\tau_1 = e_1 \kw{~in~} e_2:\tau_2$
            \item By assumption:
            \begin{itemize}
                \item $\tau_1$ is well-formed in $G;H$
                \item $G;H,x:\tau_1\vdash e_1:\tau_1$
                \item $G;H,x:\tau_1\vdash e_2:\tau_2$
            \end{itemize}
            \item Since $\sigma$ corresponds with $H$, $x\in\dom{\sigma}$ and thus $\sigma(x)=h::v$ where $G;H\vdash v:\hd{H(x)}$
            \item By IH:
            \begin{itemize}
                \item $S;\sigma,x=h::\nil\vdash e_1\xrightarrow{v}e_1'$
                \item $S;\sigma,x=h::v\vdash e_2\xrightarrow{v_2} e_2'$ where $v_2$ is a value.
            \end{itemize}
            \item By S-LETREC, $S,\sigma \vdash \kw{let}_h\kw{~rec~} x:\tau_1 = e_1 \kw{~in~} e_2\xrightarrow{v_2}\kw{let}_{h::v}\kw{~rec~} x:\tl{\tau_1} = e_1' \kw{~in~} e_2'$
        \end{itemize}
        \textbf{Case} T-APP
        \begin{itemize}
            \item $G,f:x:\tau_1\to \refinement{w_2}{b_2}{\psi_2}; H \vdash f~y:\refinement{w_2}{b_2}{\alw \psi_2[x\mapsto y]}$
            \item By assumption, $y\in\dom{H}$
            \item Because $G$ corresponds with $S$, $f\in \dom{S}$ and is in particular some $f(x)=e$ where $G\vdash f:x:\tau_1\to \refinement{w_2}{b_2}{\psi_2}$
            \item Because $H$ corresponds with $\sigma$, $y\in \dom{\sigma}$ and $\sigma(y)=h::v$ where $G;H\vdash v:\hd{H(y)}$
            \item By S-APP, $S,f(x)=e; \sigma,y=h::v\vdash f(y)\xrightarrow{e[x\mapsto v]}f(y)$
        \end{itemize}
        \textbf{Case} T-MODELS
        \begin{itemize}
            \item $G;H\vdash \models{e}{r}:\tau$
            \item Assume $\neg\kw{robot~mode}$
            \item By assumption, $G;H\vdash e:\tau$
            \item By IH, $S;\sigma\vdash e\xrightarrow{v}e'$ and $v$ is a value
            \item By S-MODELS, $S;\sigma\vdash \models{e}{r} \xrightarrow{v} \models{e'}{r}$
        \end{itemize}
        \textbf{Case} T-SUB
        \begin{itemize}
            \item $G;H\vdash e:\tau'$
            \item Based on the conditions laid out in \cite{borkowski_mechanizing_2022}, we are able to ``collapse" repeated invocations of T-SUB into a single subtyping derivation.
            \item Then, the premise $G;H\vdash e:\tau$ must have been derived through a non-subtyping rule.
            \item By IH, $S;\sigma \vdash e\xrightarrow{v}e'$
        \end{itemize}
\end{proof}

\subsubsection{Preservation}
When a stream expression $e$ is evaluated, it not only produces a value $v$ representing its immediate state at this instant, but also becomes $e'$, representing the stream to be evaluated in the next instant. Over the course of several evaluations, the stream that was once $e$ becomes ``realized'' as a series of emitted values $v_0,\ v_1,\ v_2...$ and stream definitions $e', e'', e'''...$. Combined with the fact that refinement predicates are considered in the time instant in which a stream is evaluated, this means that a predicate that accurately describes a stream in one instant may no longer be satisfied over subsequent evaluations. Consider the stream $s$ whose next values will be $[-1, 0, 1, 2, 3...]$. The predicate $\phi=(s < 0) \land \nxt \alw (s \geq 0)$ accurately describes $s$ in the current time instant. However, after $s$ emits a value and becomes $s'$, whose next values are $[0, 1, 2, 3...]$, $\phi$ no longer accurately describes $s'$, as the value it emits next is not negative. The state predicate $s < 0$ within $\phi$ should only apply to the first state of $s$. An implication that arises is that type refinements also need to evolve over time to accurately describe stream behavior, a phenomenon that is not normally seen in refinement type systems. This somewhat complicates proofs of type safety, as stream expressions that step do not necessarily continue to inhabit their original types. 

However, we note that a stream $e'$ produced by evaluating $e$ is a ``suffix'' of $e$; in other words, the values emitted by $e'$ are identical to those emitted by $e$ after the first value. Likewise, if $e$ satisfies a predicate $\phi$, then $e'$ will satisfy its ``suffix''. This allows us to relate the types that $e$ and $e'$ inhabit and derive a modified notion of type safety, even if their types do not exactly match. If a predicate $\phi$ can be written as $\hd{\phi_1} \land \nxt \phi_2$ for some secondary predicates $\phi_1$ and $\phi_2$, then its suffix is $\phi_2$. Thus, if $e$ satisfies $\phi$, and in the current context $e\xrightarrow{v}e'$, then $v$ satisfies $\hd{\phi_1}$ and $e'$ satisfies $\phi_2$. From this, we can derive types such that if $G;~H\vdash e:\refinement{w}{b}{\hd{\phi_1} \land \nxt \phi_2}$, then $G;~H \vdash v:\refinement{w}{b}{\hd{\phi_1}}$ and $G;~\tl{H}\vdash e':\refinement{w}{b}{\phi_2}$ where $\tl{H}$ is the immediate successor of $H$:
$$\dom{\tl{H}}=\dom{H},~ \forall ~x\in \dom{\tl{H}}~.~(\tl{H}(x))=\tl{H(x)}$$
where for a refined type $\tau=\refinement{w}{b}{\hd{\phi_1}\land \nxt \phi_2}$, $\tl{\tau}=\refinement{w}{b}{\phi_2}$. We prove that $\phi_1$ and $\phi_2$ exist for any $\phi$ constructed in our specification grammar:

\noindent\begin{lemma}[Temporal Predicate Expansion]
\label{lemma:expansionlemma}
Let $\psi$ be a temporal predicate. Then there exist $\psi_1,\ \psi_2$ such that $\psi \equiv \hd{\psi_1}\land \nxt \psi_2$
\end{lemma}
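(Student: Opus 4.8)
The lemma states that every trace predicate $\psi$ in the grammar $\phi,\psi ::= p \mid \alw\phi \mid \nxt\phi \mid \phi\land\psi$ can be rewritten into the canonical form $\hd{\psi_1}\land\nxt\psi_2$. The natural approach is structural induction on the syntax of $\psi$, exhibiting explicit witnesses $\psi_1,\psi_2$ in each case and checking the semantic equivalence $\psi \equiv \hd{\psi_1}\land\nxt\psi_2$ using the predicate semantics from Section 5.3 and the definition of $\kw{hd}$ from Figure~\ref{fig:hdsemantics}.

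Let me sketch how I would prove this.

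The plan is to proceed by structural induction on $\psi$, producing the witnesses case by case. First I would handle the base case where $\psi = p$ is a state predicate. Here a state predicate concerns only the current instant, so I would take $\psi_1 := p$ and $\psi_2 := \top$ (or any always-true predicate), and verify that $\hd{p}\land\nxt\top \equiv p\land\top \equiv p$ using $\kw{hd}(p)\triangleq p$ and the fact that $\nxt\top$ holds vacuously. Next, for the $\nxt\phi$ case, the predicate says nothing about the current instant, so I would take $\psi_1 := \top$ (so that $\hd{\psi_1}\equiv\top$) and $\psi_2 := \phi$, giving $\hd{\top}\land\nxt\phi \equiv \nxt\phi$. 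For the $\alw\phi$ case I would exploit the LTL unfolding equivalence $\alw\phi \equiv \phi\land\nxt\alw\phi$, already invoked in the paper when defining $\kw{hd}(\alw\phi)$. By the induction hypothesis, $\phi \equiv \hd{\phi_1}\land\nxt\phi_2$; I then set $\psi_1 := \phi$ (already in the form consumed by $\kw{hd}$) and $\psi_2 := \phi_2\land\alw\phi$, and check that $\hd{\phi}\land\nxt(\phi_2\land\alw\phi) \equiv \alw\phi$.

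\emph{The conjunction case} is where the care lies. Given $\psi = \phi\land\chi$, the induction hypothesis provides $\phi \equiv \hd{\phi_1}\land\nxt\phi_2$ and $\chi \equiv \hd{\chi_1}\land\nxt\chi_2$. I would take $\psi_1 := \phi_1\land\chi_1$ and $\psi_2 := \phi_2\land\chi_2$, then use the homomorphism property $\kw{hd}(\phi_1\land\chi_1)\triangleq\kw{hd}(\phi_1)\land\kw{hd}(\chi_1)$ (Figure~\ref{fig:hdsemantics}) together with the distribution of $\nxt$ over $\land$ to rearrange $(\hd{\phi_1}\land\nxt\phi_2)\land(\hd{\chi_1}\land\nxt\chi_2)$ into $\hd{\psi_1}\land\nxt\psi_2$. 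The only subtlety is keeping the $\hd{}$ application well-formed, since $\psi_1$ must be a predicate to which $\kw{hd}$ applies; this is automatic because the grammar is closed under $\land$ and the $\kw{hd}$ operator is defined on all trace predicates.

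I expect the main obstacle to be bookkeeping rather than conceptual difficulty: making sure the equivalence $\equiv$ is interpreted consistently as semantic equivalence of Boolean streams under $\vDash$, and that each witness $\psi_1$ genuinely lies in the predicate grammar so that $\hd{\psi_1}$ is defined. The $\alw$ case is the one to watch, because its witness $\psi_2 = \phi_2\land\alw\phi$ reintroduces $\alw\phi$ rather than reducing it, so the equivalence must be verified directly from the semantics of $\alw$ in Section 5.3 (namely $(s_i,s_{i+1},\ldots)\vDash\alw\phi \iff \forall j\geq i.\ s_j\VDash\phi$) and the next-instant semantics of $\nxt$, confirming that unfolding one step of $\alw$ indeed recovers the canonical decomposition without requiring a nested induction.
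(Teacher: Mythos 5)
Your proposal is correct and follows essentially the same argument as the paper: structural induction on the predicate grammar with identical witnesses in every case ($\psi_1:=p$, $\psi_2:=\top$ for state predicates; $\top$ and $\phi$ for $\nxt\phi$; component-wise conjunction of the decomposed parts using the homomorphism $\kw{hd}(\phi\land\psi)\equiv\kw{hd}(\phi)\land\kw{hd}(\psi)$; and the unfolding $\alw\phi\equiv\phi\land\nxt\alw\phi$ giving $\psi_2:=\phi_2\land\alw\phi$). The only cosmetic difference is that in the $\alw$ case you take $\psi_1:=\phi$ where the paper takes $\psi_1:=\phi_1$, which is an equivalent choice since $\hd{\phi}\equiv\hd{\phi_1}$ under the induction hypothesis.
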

\begin{proof}
By induction on the specification grammar (Fig. \ref{fig:typesyntax})
\\
\textbf{Case} $p$: $\Psi \equiv p$ (where $p$ is a state predicate):
then $\Psi_1\equiv p$ and $\Psi_2\equiv \top$ satisfy the desired condition.

\noindent\textbf{Case} $\land$: $\Psi \equiv \psi_l\ \land\ \psi_r$ (where $\land$ is a binary logical connective):
by IH, $\exists \psi_{l,1},\ \psi_{l,2}$ s.t. $\psi_l \equiv \hd{\psi_{l,1}} \land \nxt \psi_{l,2}$ and likewise $\exists \psi_{r,1},\ \psi_{r,2}$ s.t. $\psi_r \equiv \hd{\psi_{r,1}} \land \nxt \psi_{r,2}$. 
Then $\Psi_1\equiv \hd{\psi_{l,1}\land \psi_{r,1}}$ and $\Psi_2 \equiv (\psi_{l,2} \land \psi_{r,2})$ satisfy the desired condition.

\noindent\textbf{Case} $\nxt$: $\Psi \equiv \nxt \psi$, therefore $\Psi_1\equiv \top$ and $\Psi_2 \equiv \psi$ satisfy the desired condition.

\noindent\textbf{Case} $\alw$: We have $\Psi \equiv \alw \psi$.
By IH, $\exists \psi_1,\ \psi_2$ s.t. $\psi\equiv \hd{\psi_1}\land \nxt \psi_2$. Note that $\alw\psi\equiv \psi\land\nxt\alw\psi$. 
Therefore $ \alw\psi\equiv (\hd{\psi_1}\land \nxt \psi_2) \land\nxt\alw\psi \equiv \hd{\psi_1}\land \nxt(\psi_2 \land \alw\psi)$.
Then $\Psi_1\equiv \psi_1$ and $\Psi_2\equiv \psi_2 \land \alw\psi$ satisfy the desired condition.

\end{proof}

\begin{lemma}[Type Preservation]
\label{lemma:preservation}
Let $e$ be a synchronous stream expression. If\ $H$ corresponds with $\sigma$ and $G$ corresponds with $S$, then if $G,H\vdash e:\tau$ and $S;~\sigma \vdash e\xrightarrow{v}e'$, then $G,H \vdash v:\hd{\tau}$ and $G,\tl{H}\vdash e':\tl{\tau}$
\end{lemma}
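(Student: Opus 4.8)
The plan is to proceed by induction on the derivation of $G;H\vdash e:\tau$, exactly mirroring the case analysis of the Progress lemma (Lemma \ref{lemma:progress}): since every expression form has a unique syntax-directed typing rule and a unique applicable semantic rule, each typing case pins down both the shape of $e$ and the transition $S;\sigma\vdash e\xrightarrow{v}e'$ that Progress already guarantees to exist. The one non-syntax-directed rule, (T-SUB), I would handle first: as in the Progress proof, collapse consecutive uses of (T-SUB) into a single subtyping step so that the remaining derivation ends in a structural rule. For that case I need the auxiliary fact that $\hd{}$ and $\tl{}$ are monotone with respect to $\preceq$, i.e. $\tau\preceq\tau'$ implies $\hd{\tau}\preceq\hd{\tau'}$ and $\tl{\tau}\preceq\tl{\tau'}$; combined with the inductive hypotheses $v:\hd{\tau}$ and $e':\tl{\tau}$, one recovers $v:\hd{\tau'}$ and $e':\tl{\tau'}$ by reapplying (T-SUB). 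Throughout, Lemma \ref{lemma:expansionlemma} is what licenses writing every refinement predicate in the canonical form $\hd{\psi_1}\land\nxt\psi_2$, so that $\hd{}$ and $\tl{}$ are defined at each node.

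The leaf and transparent cases are immediate. For (T-CONST), S-CONST emits $c$ and leaves $c$; since $\alw(w=c)\equiv (w=c)\land\nxt\alw(w=c)$, the head type is $\refinement{w}{b}{w=c}$, satisfied by $c$, and the tail type is the original type, re-derived by (T-CONST). (T-NIL) is trivial and (T-MODELS) follows directly from the inductive hypothesis on the inner expression in non-robot mode. For (T-VAR) I read the emitted value $v=\kw{now}(\sigma(x))$ off the S-VAR transition and use stream correspondence (Definition \ref{def:correspstream}), which gives exactly $v:\hd{H(x)}$; together with the fact that $v$ is by construction the current value of $x$, this discharges the selfification conjunct $w=x$ and hence the full head obligation, while the tail returns $x$ typed in $\tl{H}$ by (T-VAR).

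For (T-LET) and (T-LETREC) the work is to re-establish the correspondence invariant on the extended contexts before invoking the inductive hypotheses on $e_1$ and $e_2$. Applying the IH to $e_1$ gives $v:\hd{\tau_1}$, which is precisely the condition for $H,x:\tau_1$ to correspond with $\sigma,x=h::v$; for (T-LETREC) the subgoal for $e_1$ instead uses the history $h::\nil$, whose current value $\nil$ types to $\hd{\tau_1}$ by (T-NIL), so correspondence still holds (the causality assumption guarantees this $\nil$ is never emitted). The IH on $e_2$ then yields the emitted value at $\hd{\tau_2}$ and $e_2'$ at $\tl{\tau_2}$ in $\tl{H},x:\tl{\tau_1}$, and I reassemble the stepped binding with (T-LET)/(T-LETREC) in $\tl{H}$, using that $\tl{\tau_1}$ remains well-formed; this is where Definition \ref{def:correspstream} is genuinely proven rather than assumed. (T-IF) is analogous: the IH on the taken branch supplies the value, and the definitional equalities for $\kw{impl}(\true,\psi)$ and $\kw{impl}(\false,\psi)$ (Fig. \ref{fig:hdsemantics}) simplify its refinement to $\psi$; since both branches step, the conditional rebuilds at $\tl{H}$. (T-APP) reduces to function correspondence (Definition \ref{def:correspfunction}): the body typing of $f$, applied point-wise with $\hd{\alw\phi_2}=\phi_2$, types the emitted value, and $f~y$ re-types unchanged because function types are time-invariant.

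The genuine difficulty lies in (T-FBY) together with (T-DELAY), because the transition rewrites $e_1\fby{}e_2$ into $\delay{e_2}$ and later re-evaluates under $\prev{\sigma}$, so the lossy interaction of $\prev{}$ with $\tl{}$ must be reconciled. The head obligation is easy: S-FBY emits the value of $e_1$, and from $e_1:\refinement{w}{b}{\hd{\psi_1}}$ the IH gives it at $\hd{\psi_1}$, a state predicate and hence idempotent under $\hd{}$. The tail obligation requires $G;\tl{H}\vdash \delay{e_2}:\refinement{w}{b}{\psi_2}$, which by (T-DELAY) reduces to $G;\prev{\tl{H}}\vdash e_2:\refinement{w}{b}{\psi_2}$, whereas I only have $G;H\vdash e_2:\refinement{w}{b}{\psi_2}$. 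The bridge is the environment relation recorded in Fig. \ref{fig:h-ops}, namely $\refinement{w}{b}{\hd{\phi_1}\land\nxt\phi_2}\preceq\prev{\refinement{w}{b}{\phi_2}}$, which gives $H(x)\preceq (\prev{\tl{H}})(x)$ pointwise; I therefore need a narrowing lemma stating that typing is preserved when the context is replaced by a pointwise supertype context, transferring $G;H\vdash e_2:\ldots$ to $G;\prev{\tl{H}}\vdash e_2:\ldots$. The (T-DELAY) case is dual: S-DELAY steps under $\prev{\sigma}$, the IH applies because $\prev{H}$ corresponds with $\prev{\sigma}$, and I must commute $\tl{}$ past $\prev{}$, up to subtyping, to land in the $\prev{\tl{H}}$ that (T-DELAY) demands. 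I expect these environment-algebra steps, rather than any single typing case, to be the crux: head components are always discharged directly by correspondence and Lemma \ref{lemma:expansionlemma}, but the tail components force one to track how $\prev{}$, $\tl{}$, and $\preceq$ interact and to prove the supporting narrowing and monotonicity lemmas that make the reassembly type-check.
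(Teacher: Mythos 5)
Your overall architecture is essentially the paper's proof viewed from the other side: the paper inducts on the operational-semantics derivation and inverts the typing derivation (collapsing T-SUB exactly as you propose), while you induct on typing and let Progress supply the transition; since both systems are syntax-directed modulo subsumption, that difference is cosmetic. Your treatment of the constant, variable, conditional, let/let-rec (including re-establishing correspondence via the well-typedness of the value pushed onto $h$, which the paper carries out by an inner induction on the length of $h$), application, and models cases all track the paper's cases faithfully, and your identification of the \kw{fby}/\kw{delay} environment algebra as the crux is exactly right.

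The genuine gap is the direction of your bridge in the (T-FBY) tail obligation. You read off Fig.~\ref{fig:h-ops} that $H(x)\preceq(\prev{\tl{H}})(x)$ pointwise, i.e.\ that $\prev{\tl{H}}$ is a pointwise \emph{supertype} (weaker) context, and then posit a lemma transferring $G;H\vdash e_2:\refinement{w}{b}{\psi_2}$ to that weaker context. Such a transfer principle is false in a refinement system: context refinements feed the entailment judgement via (ENT-EXT), so weakening the refinement on some $x$ can destroy precisely the implication that made $e_2$ type-check (a derivation using $x:\refinement{v}{\inttype}{v=3}$ does not survive weakening to $x:\inttype$). Context replacement preserves typing only in the opposite direction---replacing entries by \emph{subtypes}---and that is the direction the paper's S-FBY case actually invokes: it asserts $\prev{\tl{H}}(x)\preceq H(x)$ for all $x\in\dom{H}$, so the move to $\prev{\tl{H}}$ strengthens every assumption and ordinary narrowing applies. (You are not wholly to blame: the displayed chain in Fig.~\ref{fig:h-ops}, $\refinement{w}{b}{\hd{\phi_1}\land\nxt\phi_2}\preceq\prev{\refinement{w}{b}{\phi_2}}$, supports your reading and sits in tension with the inequality used in the paper's proof; but as written, your version of the step rests on an unsound lemma, whereas the paper's rests on the other inequality.) A second, smaller soft spot: your T-SUB case requires $\tl{}$ to be monotone with respect to $\preceq$, which fails when the head component is unsatisfiable---take $\phi\equiv\false\land\nxt(x=1)$ and $\phi'\equiv\false\land\nxt(x=2)$, so $\phi\Rightarrow\phi'$ holds vacuously yet $\tl{\phi}\not\Rightarrow\tl{\phi'}$; the paper avoids ever discharging an explicit subsumption case, relying on the collapse argument alone, so if you keep a standalone T-SUB case you need an extra satisfiability side condition or a different factoring.
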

\begin{proof}
By structural induction on the operational semantics derivation. We prove by cases on the last rule used. 
\\
\noindent\textbf{Case} S-LETREC: $\kw{let~rec}_{\hist}~x:\tau=e_1\kw{~in~}e_2$

\begin{itemize}
    \item Assume $G;~H\vdash \kw{let~rec}_{\hist}~x:\tau=e_1\kw{~in~}e_2:\tau$ and $S;~\sigma \vdash\kw{let~rec}_{\hist}~x:\tau=e_1\kw{~in~}e_2\xrightarrow{w}\kw{let~rec}_{\hist::v}~x:\tl{\tau_1}=e_1'\kw{~in~}e_2'$.
    \item Furthermore, we prove that $h$ consists solely of well-typed values. Specifically, we show that, after evaluation of (S-LETREC), $G;~H\vdash v:\hd{\tau_1}$ where $v$ is the last value in $h$. This ensures that $\sigma$ stays consistent with $H$
    \item Induct on the number of elements in $h$.
    \item Suppose $h=[]$. 
    \begin{itemize}
        \item By assumption, $S;~\sigma, x=[\nil]\vdash e_1\xrightarrow{v}e_1'$.
        \item By assumption and the above, $S;~\sigma, x=[v]\vdash e_2\xrightarrow{w}e_2'$.
        \item T-LETREC must be the last (non-subtype \cite{borkowski_mechanizing_2022}) rule used in the typing derivation. By inversion:
            \begin{itemize}
                \item $G;~H,x:\tau_1\vdash e_1:\tau_1$
                \item $G;~H,x:\tau_1\vdash e_2:\tau$
            \end{itemize}
        \item Since $\tau_1$ is well-formed ($G;~H\vdash \tau_1$) and $\dom{H}=\dom{\tl{H}}$, $\tl{\tau_1}$ is also well-formed, i.e. $G;~\tl{H}\vdash \tl{\tau_1}$ 
        \item By the Induction Hypothesis and Lemma \ref{lemma:expansionlemma}:
        \begin{itemize}
            \item $G;~H,x:{\tau_1}\vdash v:\hd{\tau_1}$. Thus $v$ is well-typed.
            \item $G;~\tl{H},x:\tl{\tau_1}\vdash e_1':\tl{\tau_1}$
            \item $G;~H,x:{\tau_1}\vdash w:\hd{\tau}$
            \item $G;~\tl{H},x:\tl{\tau_1}\vdash e_2':\tl{\tau}$
        \end{itemize}
        \item By (T-FBY), $G;~\tl{H}\vdash \kw{let~rec}_{v}~x:\tl{\tau_1}=e_1'\kw{~in~}e_2':\tl{\tau}$
        \item Since $v$ is the only member of $\hist$, $\hist$ contains only well-typed values, and thus the last value of $\hist$ has type $\hd{\tau_1}$ in $H$. 
    \end{itemize}

    \item Suppose $\hist \neq []$. Suppose all prior values of $h$ are well-typed.
        \begin{itemize}
            \item By assumption, $S;~\sigma, x=h::\nil\vdash e_1\xrightarrow{v} e_1'$.
            \item By assumption and the above, $S;~\sigma, x=\hist::v\vdash e_2\xrightarrow{w}e_2'$
            \item T-LETREC must be the last (non-subtype) rule used in the typing derivation. By inversion:
            \begin{itemize}
                \item $G;~H,x:\tau_1\vdash e_1:\tau_1$
                \item $G;~H,x:\tau_1\vdash e_2:\tau$
            \end{itemize}
            \item Similarly to before, $G;~\tl{H}\vdash \tl{\tau_1}$
            \item By IH and Lemma \ref{lemma:expansionlemma} (same as the $h=[]$ case):
            \begin{itemize}
                \item $G;~H,x:{\tau_1}\vdash v:\hd{\tau_1}$. Thus $v$ is well-typed.
                \item $G;~\tl{H},x:\tl{\tau_1}\vdash e_1':\tl{\tau_1}$
                \item $G;~H,x:{\tau_1}\vdash w:\hd{\tau}$
                \item $G;~\tl{H},x:\tl{\tau_1}\vdash e_2':\tl{\tau}$
            \end{itemize}
            \item By (T-FBY), $G;~\tl{H}\vdash \kw{let~rec}_{\hist::v}~x:\tl{\tau_1}=e_1'\kw{~in~}e_2':\tl{\tau}$
            \item Since the only new value added to the history is $v$ and $v$ is well-typed, $\hist::v$ remains well-typed. Since it is at the end of $\hist$, the last value of $\hist$ still has type $\hd{\tau_1}$ in $H$
        \end{itemize}
    
\end{itemize}

\textbf{Case} S-LET: Similar to S-LETREC.\\

\textbf{Case} S-FBY: $e_1\fby{} e_2$
\begin{itemize}
    \item Assume $G;~H \vdash e_1\fby{} e_2:\refinement{w}{b}{\hd{\psi_1}\land\nxt\psi_2}$ and $\sigma \vdash e_1\fby{} e_2\xrightarrow{v}\delay{e_2}$.
    \item By assumption, $S;~\sigma \vdash e_1\xrightarrow{v_1} e_1'$
    \item T-FBY must be the last (non-subtyping) rule used in the typing derivation. By inversion:
    \begin{itemize}
        \item $G;~H\vdash e_1:\refinement{w}{b}{\hd{\psi_1}}$
        \item $G;~H\vdash e_2:\refinement{w}{b}{\psi_2}$
    \end{itemize}
    \item By IH, $G;~H\vdash v_1:\refinement{w}{b}{\hd{\hd{\psi_1}}}\equiv \refinement{w}{b}{\hd{\psi_1}}$.
    \item By applying (T-DELAY) on $e_2$ and noting that $\prev{\tl{H}}(x)\preceq H(x)$ for all $x\in\dom{H}$, $\tl{H}\vdash \delay{e_2}:\refinement{w}{b}{\psi_2}$
\end{itemize}
\noindent \textbf{Case} S-DELAY: $\delay{e}$
\begin{itemize}
    \item Assume $G;~H \vdash \delay{e_1}:\tau$
    \item By assumption, $S~\prev{\sigma} \vdash e \xrightarrow{v} e'$
    \item T-DELAY must be the last (non-subtyping) rule used in the typing derivation. By inversion, $G;~\prev{H} \vdash e:\tau$
    \item By IH and using Lemma \ref{lemma:expansionlemma} to split $\tau$:
    \begin{itemize}
        \item $G;~\prev{H} \vdash v:\hd{\tau}$. Since $v$ is a value, it is a constant and thus $G;~H\vdash v:\hd{\tau}$
        \item $G;~\tl{\prev{H}} \vdash e':\tl{\tau}$
    \end{itemize}
    \item Since $\tl{\prev{H}}\equiv H$, $G;~H \vdash e':\tl{\tau}$
    \item Using (T-DELAY), $G;~\tl{H}\vdash \delay{e'}:\tl{\tau}$
\end{itemize}
\noindent\textbf{Case} S-IF-T: $\ifthenelse{x_c}{e_t}{e_e}$ 
\begin{itemize}
    \item Assume $G;~H \vdash \ifthenelse{x_c}{e_t}{e_f}:\refinement{w}{b}{\psi}$ and \\$S;~\sigma, x=\hist::\true \vdash (\kw{if~} x \kw{~then~} e_t \kw{~else~} e_f) \xrightarrow{v_t} (\kw{if~} x \kw{~then~} e_t' \kw{~else~} e_f')$
    \item By assumption:
    \begin{itemize}
        \item $S;~\sigma, x=\hist::\true \vdash e_t \xrightarrow{v_t} e_t'$
        \item $S;~\sigma, x=\hist::\true \vdash e_f \xrightarrow{v_f} e_f'$
    \end{itemize}

    \item T-IF must be the last (non-subtyping) rule used in the typing derivation. By inversion: 
        \begin{itemize}
            \item $G;~H\vdash x_c:\kw{bool}$
            \item $G;~H\vdash e_t:\refinement{w}{b}{\kw{impl}(x_c, \psi)}$
            \item $G;~H\vdash e_f:\refinement{w}{b}{\kw{impl}(\lnot x_c, \psi)}$
        \end{itemize}
    \item By IH and using Lemma \ref{lemma:expansionlemma} to split $\psi$:
        \begin{itemize}
            \item $G;~\tl{H} \vdash e_t':\tl{\refinement{w}{b}{\kw{impl}(x_c, \psi)}}$
            \item $G;~\tl{H} \vdash e_f':\tl{\refinement{w}{b}{\kw{impl}(\lnot x_c, \psi)}}$
            \item $G;~\tl{H}\vdash x_c:\tl{\kw{bool}}$ (equivalent to just (\kw{bool}))
            \item $G;~H \vdash v_t:\hd{\refinement{w}{b}{\kw{impl}(x_c, \psi)}}$. Since the current value of $x_c$ is \true, \\$\hd{\kw{impl}(x_c, \psi)}\equiv \hd{\psi}$. Therefore $G;~H \vdash v_t:\hd{\refinement{w}{b}{\psi}}$.
        \end{itemize}
    \item By (T-IF): $G;~\tl{H} \vdash (\kw{if~} x \kw{~then~} e_t' \kw{~else~} e_f'):\tl{\refinement{w}{b}{\psi}}$
\end{itemize}
\textbf{Case} S-IF-F: Symmetric to S-IF-T.\\
\noindent\textbf{Case} S-CONST: $c$ where $c$ is a constant
\begin{itemize}
    \item By Lemma \ref{lemma:expansionlemma}, $\tl{\refinement{w}{b}{\alw(w=c)}}\equiv \refinement{w}{b}{\alw(w=c)}$
    \item By Lemma \ref{lemma:expansionlemma}, $\refinement{w}{b}{\alw(w=c)}\preceq \hd{\refinement{w}{b}{\alw(w=c)}}\equiv \refinement{w}{b}{w=c}$. Therefore $G;~H \vdash c:\hd{\refinement{w}{b}{\alw(w=c)}}$
    \item Since $c$ is a constant, no free variables can occur in $c$. \\Therefore, $G;~\tl{H}\vdash c:\tl{\refinement{w}{b}{\alw(w=c)}}$
\end{itemize}
\textbf{Case} S-VAR: $x$
\begin{itemize}
   \item Assume $G;~H \vdash x:\refinement{w}{b}{\psi \land \alw(w=x)}$ and $S;~\sigma,x=\hist::v\vdash x\xrightarrow{v} x$
   \item Since the environment is non-empty (because $x$ can step), this must be a sub-term of either $\kw{let}$ or $\kw{let~rec}$. In \textbf{Case} S-LETREC, we showed that for $\sigma(x)=h$ and $H(x)=\tau$, the last value of $h$ types to $\hd{\tau}$ in $H$. Therefore, $G;~H \vdash v:\hd{\refinement{w}{b}{\psi}}$. 
   \item Because $v$ is the current value of $x$, $v=x$ (state predicate). Thus the type can be augmented as follows: $G;~H \vdash v:\hd{\refinement{w}{b}{\psi \land \alw(w=x)}}$, noting that for any $\phi$, $\hd{\alw \phi}\equiv \phi$. 
   \item By definition of $\tl{H}$, $(\tl{H})(x)=\tl{\refinement{w}{b}{\psi}}$
   \item By T-VAR, $G;~\tl{H}\vdash x:\tl{\refinement{w}{b}{\psi\land \alw (w=x)}}$
\end{itemize}
\noindent\textbf{Case} S-APP: $f\ y$
\begin{itemize}
    \item Assume $G,f:x:\tau_1 \to \tau_2 ;~H\vdash f~y:\refinement{w_2}{b_2}{\alw\phi_2[x\mapsto y]}$ and $S, f(x)=e; \sigma, y=\hist::v\vdash f(y) \xrightarrow{e[x\mapsto v]} f(y)$, where $\tau_1=\refinement{w_1}{b_1}{\phi_1}$ and $\tau_2=\refinement{w_2}{b_2}{\phi_2}$
    \item The last (non-subtyping) rule in the typing derivation must be T-APP. By inversion, $G,f:x:\tau_1 \to \tau_2;~H\vdash y:\refinement{w_1}{b_1}{\alw\phi_1}$
    \item Because $\tl{\refinement{w_1}{b_1}{\alw\phi_1}}\equiv \refinement{w_1}{b_1}{\alw\phi_1}$, then \\$G,f:x:\tau_1 \to \tau_2;~\tl{H} \vdash y:\tl{\refinement{w_1}{b_1}{\alw\phi_1}}$
    \item Using similar reasoning as in \textbf{Case} S-VAR, the last value of $y$'s history, $v$, has type $G,f:x:\tau_1 \to \tau_2;~H\vdash v:\hd{\refinement{w_1}{b_1}{\alw\phi_1}}$
    \item Using the standard substitution lemma on values extended to constants, $G,f:x:\tau_1 \to \tau_2;~H\vdash e[x\mapsto v]:\refinement{w_2}{b_2}{\phi_2[x\mapsto v]}$. Note that this is equivalent to \\$\hd{\refinement{w_2}{b_2}{\alw\phi_2[x\mapsto y]}}$ due to the correspondence between $v$ and $y$.
    \item By (T-APP) and extending the substitution lemma to type refinements that do not change over time:\\ $G,f:x:\refinement{w_1}{b_1}{\phi_1} \to \refinement{w_2}{b_2}{\phi_2};~\tl{H}\vdash f~y:\tl{\refinement{w_2}{b_2}{\alw\phi_2[x\mapsto y]}}$ which is equivalent to ${\refinement{w_2}{b_2}{\alw\phi_2[x\mapsto y]}}$
\end{itemize}
\noindent\textbf{Case} S-MODELS: $e \kw{~models~}r$

    \begin{itemize}
        \item Assume $S, \sigma \vdash e \kw{~models~} r \xrightarrow{v} e'\kw{~models~}r$. Also assume that the program is not running in robot mode (ie the program behavior is exclusively defined by the model $e$. Further assume that $G;~H\vdash e\kw{~models~}r:\tau$.
        \item By assumption, $S, \sigma \vdash e\xrightarrow{v} e'$.
        \item T-MODELS must be the last (non-subtype) rule used in the typing derivation. By inversion, $G;~H\vdash e:\tau$.
        \item By IH, $G;~\tl{H}\vdash e':\tl{\tau}$
        \item By T-MODELS, $G;~\tl{H}\vdash e'\kw{~models~}r:\tl\tau$
    \end{itemize}
\end{proof}

The interpretation of this theorem is that, if some expression satisfies a property, then evaluating that expression results in a value that satisfies that property for the single time instant in which it is used, and the remaining rewritten expression will continue to satisfy the remaining ``suffix'' of the property. Note that $e'$ is typed in a modified environment $\tl{H}$, which expresses the notion of variables updating across time instants. We prove this theorem holds in our type system via induction on the operational semantics derivation. 

\section{Implementation and Verification of a Robotics Application}
\label{sec:7}
The increasing automation of automobiles and other modes of transport sees a corresponding increase in the responsibilities of software for the safety of passengers, cargo, and infrastructure \cite{ames2014control,platzer2018logical}. As safety-critical CPS, transportation automation is a fitting application for MARVeLus. We develop our example around vehicle collision avoidance, which is responsible for maintaining safe separation of the controlled ``ego'' vehicle from other vehicles in the environment. We base our model partially on existing work formally verifying collision avoidance for adaptive cruise control in autonomous vehicles \cite{mehra2015adaptive,chen2022synchronous}, with the primary difference being the use of a discretized model for two moving vehicles.

We make the simplifying assumption that the vehicles involved move along a single one-dimensional path, as is the case for autonomous vehicles with existing lane-keeping, guided buses, or rail vehicles. Thus, the ego vehicle maintains separation by controlling its acceleration only. We also assume that the ego vehicle is traveling behind another vehicle, which is the only other participant in the scenario. The other (henceforth ``leader'') vehicle is free to take arbitrary actions, bounded only by acceleration limits. Besides externally observable properties (such as velocity and position), the ego vehicle has no \textit{a priori} knowledge of the leader's actions. The property we verify is that collision never occurs. Given that the ego vehicle starts behind the leader, this property can be expressed as $\alw(x^f<x^\ell)$; that is, the ego vehicle position $x^f$ never exceeds the leader's position $x^\ell$. Since we are primarily interested in verifying the control logic, we assume the ego vehicle has perfect knowledge of its own position and velocity, as well as that of the leader, and the vehicle instantaneously attains the commanded acceleration.

In our deterministic simulation, we implement the ``worst-case'' situation, in which the ego vehicle is accelerating at its maximum allowable rate while the leader suddenly brakes with maximum force until it completely stops. This ensures that the ego vehicle will also react safely to the less aggressive maneuvers that it is more likely to encounter in normal operation. 

We model the collision-avoidance scenario as a discrete-time system (Fig. \ref{fig:acc}), in which the leader and follower dynamics evolve at a constant time step of $dt$ seconds. At each time step, vehicle position and velocity are integrated using the standard kinematic equations of motion, while their commanded accelerations are determined by the vehicle's controller in response to the updated velocity and position values. A positive acceleration command represents the vehicle applying the throttle and thus accelerating, while a negative command represents braking at that acceleration value. The ego vehicle initially accelerates at the maximum possible rate $a_{max}$ and commands full braking acceleration $-b$ when it is no longer safe to continue accelerating as determined by the vehicles' positions and relative velocity. The leader vehicle coasts at its starting velocity for a predetermined distance, after which it commands an acceleration $-b$ which causes it to brake with maximum force until it comes to a complete stop. The ego vehicle has no knowledge of precisely when the leader will brake, and must ensure it always maintains sufficient separation to safely stop at any time. This is encoded into its controls as the relation $(x^f_{n} + s^f_n) \geq \left(x^\ell_n + \frac{(v^\ell_n-b\cdot dt)^2}{2b}+\frac{(v^\ell_n-b\cdot dt)\cdot dt}{2}\right)$, which uses the ego vehicle's anticipated stopping distance under current conditions and maximum braking $s^f$, and the assumption that the leader has already begun braking, to determine their final positions. If the ego's anticipated position reaches or exceeds that of the leader's, then it brakes until this condition no longer holds. We constrain velocities to be non-negative, to model vehicles stopping rather than accelerating backwards, when a negative acceleration is commanded while stationary. We note that the derivation of $s^f$, including its discretization adjustment, has been covered in existing work \cite{butler2011adaptive,chen2022synchronous}.

The formally-verified program consists of 89 lines of code, of which 52 are executable code (44 for core functionality and 8 for robot interfaces and logging) and 37 are additional lines for type specifications. The verified source code can be found in Appendix \ref{app:acccode}. Verification on a 2019 MacBook Pro with a 1.4 GHz Intel Core i5-8257U processor and 16 GB of memory typically takes less than 1.3 seconds.

\begin{figure}
    \centering
    \begin{align*}
        x^\ell_0&=x^\ell_{init}                   &x^\ell_{n+1}&=x^{\ell}_{n} + v^\ell_n\cdot dt\\
        v^\ell_0&=v_{init}                     &v^\ell_{n+1}&=\max(0, v^{\ell}_n + a^\ell_n\cdot dt)\\
        a^\ell_0&=0                            &a^\ell_{n+1}&=
            \begin{cases} 
            -b&\text{if } (x^\ell_{n} \geq x_{brake})\\
            0&\text{otherwise} \\
            \end{cases}\\                                 
        x^f_0&=0                            &x^f_{n+1}&=x^{f}_{n} + v^f_n\cdot dt\\
        v^f_0&=v_{init}                     &v^f_{n+1}&=\max(0, v^{f}_n + a^f_n\cdot dt)\\
        a^f_0&=a_{max}&                     a^f_{n+1}&=
            \begin{cases} 
            -b&\text{if } (x^f_{n} + s^f_n) \geq \left(x^\ell_n + \frac{(v^\ell_n-b\cdot dt)^2}{2b}+\frac{(v^\ell_n-b\cdot dt)\cdot dt}{2}\right)\\
            a_{max}&\text{otherwise} \\
            \end{cases}\\
        & &s^f_{n}&=\frac{(v^f_{n})^2}{2b} + (1+\frac{a_{max}}{b})(2v^f_n \cdot dt + \frac{4a_{max}\cdot dt^2}{2}) + \frac{v^f_{n}\cdot dt}{2}\\
       & \forall n\ .\  x^f_{n} < x^\ell_{n}
    \end{align*}
    \caption{Dynamics of the discrete-time collision avoidance system. $x^f$ and $x^\ell$ are the positions of the ego and leader vehicles, $v^\ell$ and $v^f$ are their velocities, $a^\ell$ and $a^f$ their commanded accelerations, and $s^f$ the minimum stopping distance of the ego vehicle if it were to brake at the current instant. The safety condition specifies that no collision between the vehicles at any instant.}
    \label{fig:acc}
\end{figure}

\section{Language Implementation}
\label{sec:8}
The refinement type verification is implemented as an additional pass in the Z\'elus compiler. This step follows the type checking, which is responsible for validating the base-types used in the program. The verification algorithm is built using Z3\cite{hutchison2008z3}, an SMT-solver that provides the support for proving logical propositions and searching for counterexamples.

MARVeLus extends the Z\'elus Compiler with refinement type checking for the source code. This additional step identifies refinement type annotation in the source program, translates the Z\'elus Abstract Syntax Tree (AST) to an AST with Z3 data structures, generates verification conditions (VCs) and sends VCs to the Z3 SMT solver to check for satisfiability. The compilation steps and interfaces between Z\'elus and MARVeLus are illustrated in Fig. \ref{fig:compiler}. An archive of the source code is available in the non-anonymized supplementary materials.

\begin{figure}
    \centering
    \includegraphics[width=0.7\linewidth]{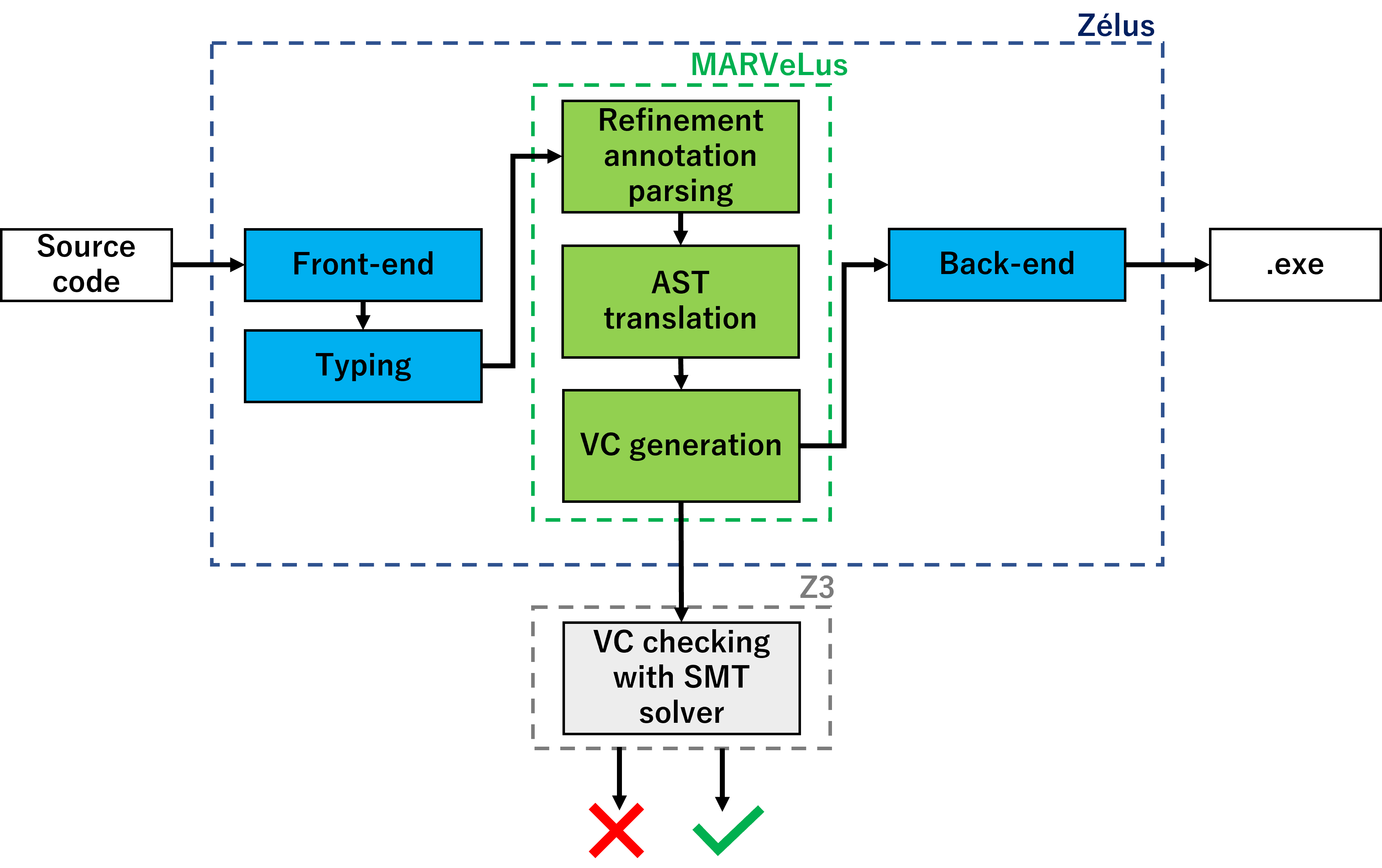}
    \caption{MARVeLus compilation steps}
    \label{fig:compiler}
\end{figure}

\subsection{Generation of Verification Conditions}

The refinement type pass in the compiler translates the Z\'elus AST enhanced with refinement types to a Z3 AST. In this translation, the program expressions, operations, patterns and equations are converted into Z3 expression objects. These objects are used to construct boolean expressions that are used as verification conditions.

The verification conditions are generated following the typing rules defined in Fig. \ref{fig:typingrules}. Each verification condition's satisfiability is checked using the Z3 SMT solver. Conditions proven to be true become powerful statements when generating other verification conditions in later parts of the program.

\subsection{Verification of Refinement Variables}

To verify a refinement typed variable declared in the form $ var: \{ w : b \ | \ \phi\} = e$, the conjunction of all expressions, $\phi_{i}$, in the current environment, $E$, must imply that the refinement $\phi$ is also true, $\bigwedge^{E} \phi_{i} \to \phi$. This arises from the (ENT-EXT) entailment rule (Fig. \ref{fig:entailment}) and the need to check that refinements assumed in the environment are consistent, so that the implication is not vacuously true. From the variable declaration, the expression $ var = e$ is added to the expression environment as an assumption.

An SMT Solver checks satisfiability by finding an example that makes the expression true. However, our goal is to prove that a verification condition is always satisfied and we need to rearrange the verification condition so the Z3 solver cannot find examples that make the verification condition false. This is achieved by the negation of the initial expression, resulting in the following verification condition: $\neg (\bigwedge^{E} \phi_{i} \to \phi )$. If this verification condition is satisfied, the refinement type is not valid and the program shows the counter-examples to the user. On the other hand, if the verification condition is unsatisfiable, the refinement expression is valid and the refinement type assumption is added to the expression environment.
\begin{lstlisting}
   let pi = 3.14159 in
   let w = 2.*.pi in
   let y: { v : float | v >= pi} = 4.0 in y
\end{lstlisting}
In the above example, {\tt pi} is a constant variable, {\tt w} is declared in terms of {\tt pi}, and {\tt y} is annotated with a refinement for which Z3 must check satisfiability. The expression environment, $E$, after variable y is declared is
$$
    E = (pi = 3.14159) \wedge (w = 2 * pi) \wedge (y = 4.0)
$$
Where the last expression corresponds to the assignment for the newly declared refinement-type variable. To show the validity of the refinement predicate, $\phi \equiv (y >= pi)$, given the variable assignment, the following verification condition is generated:
\begin{align*}
    \neg ((pi = 3.14159) \wedge (w = 2 * pi) \wedge (y = 4.0) \to (y \geq pi) )
\end{align*}
This expression is checked to be unsatisfiable by Z3, since $4 \geq 3.14159$. Therefore, the refinement type is valid and the condition $ y \geq pi$ is then added as an assumption to be used in further parts of the program.

\subsection{Verification of Refinement Functions}

A refinement function can be annotated with refined input arguments and a refined return type. In the example shown below, f is a unary refinement function of argument $ \{ v : int \ | \ v < 0\}$ and return type $\{ v : int \ | \ v >= 0 \}$. The steps to prove the refinement return type are similar to proving refinement types for variables.

In this proof, a local expression environment is created to store function body expressions and refinement-type information about the function arguments. Once the local environment is built, the function return type is used to build the verification condition that will be checked by Z3.

If the function return type is proven to be always true, it is added to a function environment. The function environment is used to check that during a function call, the arguments follow the constraints specified during the function definition.  

Once proved the function application to its argument is represented by an uninterpreted function, which is supported by Z3.

\begin{lstlisting}
let f (x: { v : float | v < 0.}) : { v : float | v >= 0. } =
    let y = x *. x in
    y
\end{lstlisting}

In the above example, function $f$ only accepts negative float inputs and the program ensures that the output is positive. The local expression environment after function declaration and before the function proof is
\begin{align*}
    E = (x < 0) \wedge (y = x * x) \wedge (v = y)
\end{align*}

Where the first condition is created from the argument refinement-type definition, the second condition is created from the function body and the third condition is a binding from the function return type to the function body return variable. For this example, Z3 checks if the verification condition $ v \geq 0$ is satisfiable using the following verification condition: 
\begin{align*}
    \neg ((x < 0) \wedge (y = x * x) \wedge (v = y) \to (v \geq 0))
\end{align*}

\section{Experimental Methods}
\label{sec:9}
\begin{figure}
    \centering
    \includegraphics[width=0.7\linewidth]{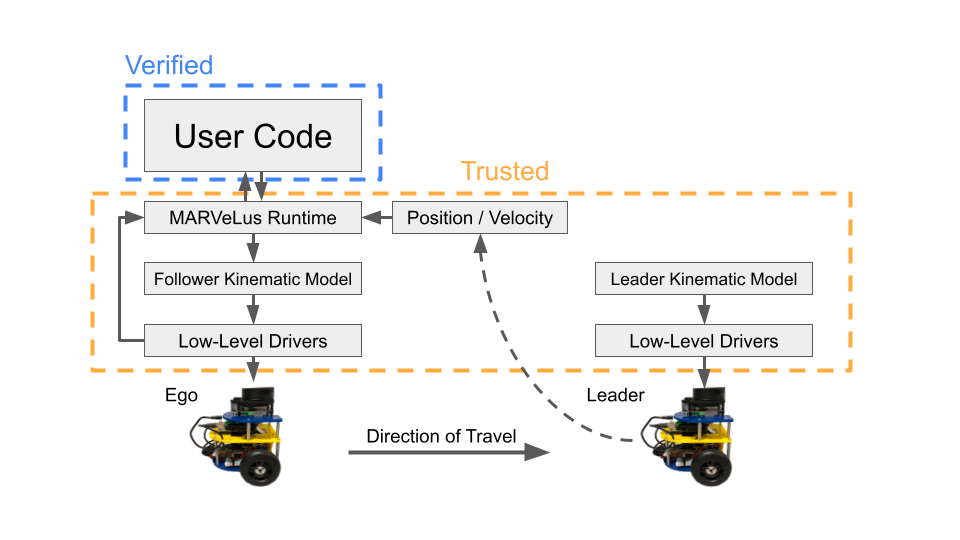}
    \caption{System Architecture for the Collision Avoidance Demonstrator}
    \label{fig:mbotstack}
\end{figure}

In addition to the aforementioned language extensions, we also made enhancements to the MARVeLus robot runtime \cite{chen2022synchronous} to support multi-robot systems. We found that the original MARVeLus robotics extensions were not easily scalable beyond a single robot, with unpredictable latency being especially problematic for synchronizing multiple robots. We retain the original software architecture, with separate code for the user MARVeLus program and runtime, the robots' kinematic models simulating the vehicles' physical attributes, and low-level drivers for sensors, actuators, and networking (Fig. \ref{fig:mbotstack}). These components all run locally on each robot. We re-implemented the communications stack to accept synchronization commands from a centralized server, ensuring timing consistency between experiments. We also selected the M-Bot Mini as our primary development target. The robots run along an \SI{8}{m} track (Fig. \ref{fig:mbotrails}). The trusted software base of our experiment consists of the MARVeLus compiler, runtime, kinematic models, and sensor and low-level drivers. That is, we trust that sensor data is always available and valid, and that actuators always attain their commanded values. We also trust that communications is instantaneous and error-free, and that data stays sufficiently consistent system-wide. Our verified control code runs on the follower vehicle, while the leader vehicle acts independently. The robots are equipped with wheel encoders to measure distance traveled and velocity, and the leader vehicle reports its position and velocity to the ego, to simulate the direct sensing that would in practice be performed using computer vision or LiDAR. The leader and ego vehicles do not otherwise communicate with one another. 

\begin{figure}
    \centering
    \includegraphics[width=0.6\linewidth]{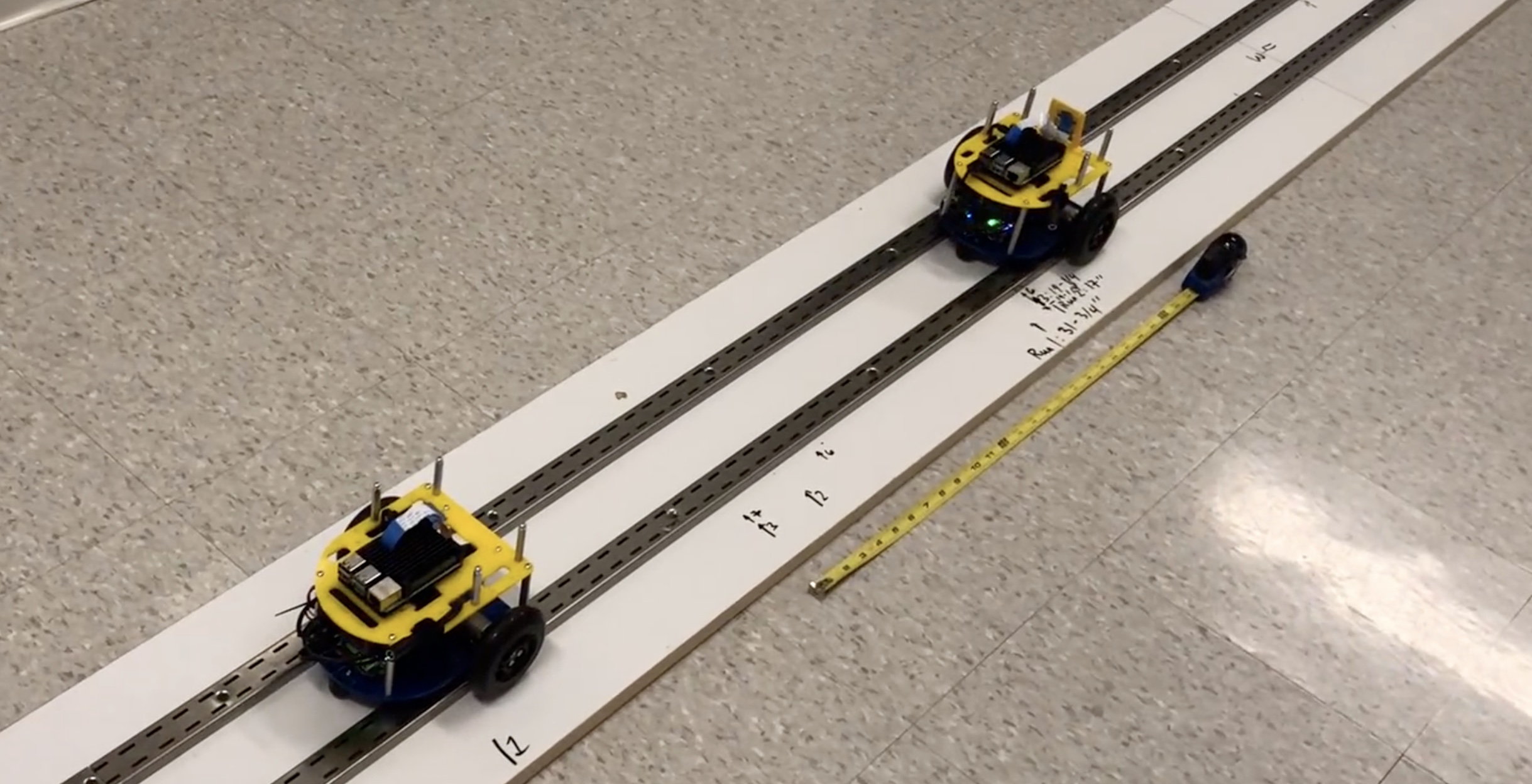}
    \caption{Ego and Leader vehicles traveling on rails}
    \label{fig:mbotrails}
\end{figure}

In addition to the collision avoidance scenario described in Section \ref{sec:7}, in which the leader brakes abruptly and stops, we tested two other variations: (1) The leader remains stationary throughout, simulating an obstacle present on the track or roadway. (2) The leader decelerates to a constant slow speed, but does not fully stop, allowing the ego vehicle to continue moving forward as long as it maintains safe separation. When both robots have stopped, or after a predetermined period of time in the case of the slow-moving leader, the final distance between the two robots is recorded.

\section{Results}
\label{sec:10}

We first ran simulations of the experiments to obtain a basis of comparison for the experimental results. We then conducted ten trials for each of the three collision avoidance scenarios described in Section \ref{sec:9}. The data obtained from one trial of the collision avoidance scenario in which the leader brakes abruptly and comes to a complete stop, is plotted in Fig. ~\ref{fig:acc2car_plot}. Data and plots from the other experiments are included in the appendix.

\begin{figure}
    \centering
    \includegraphics[width=0.7\linewidth]{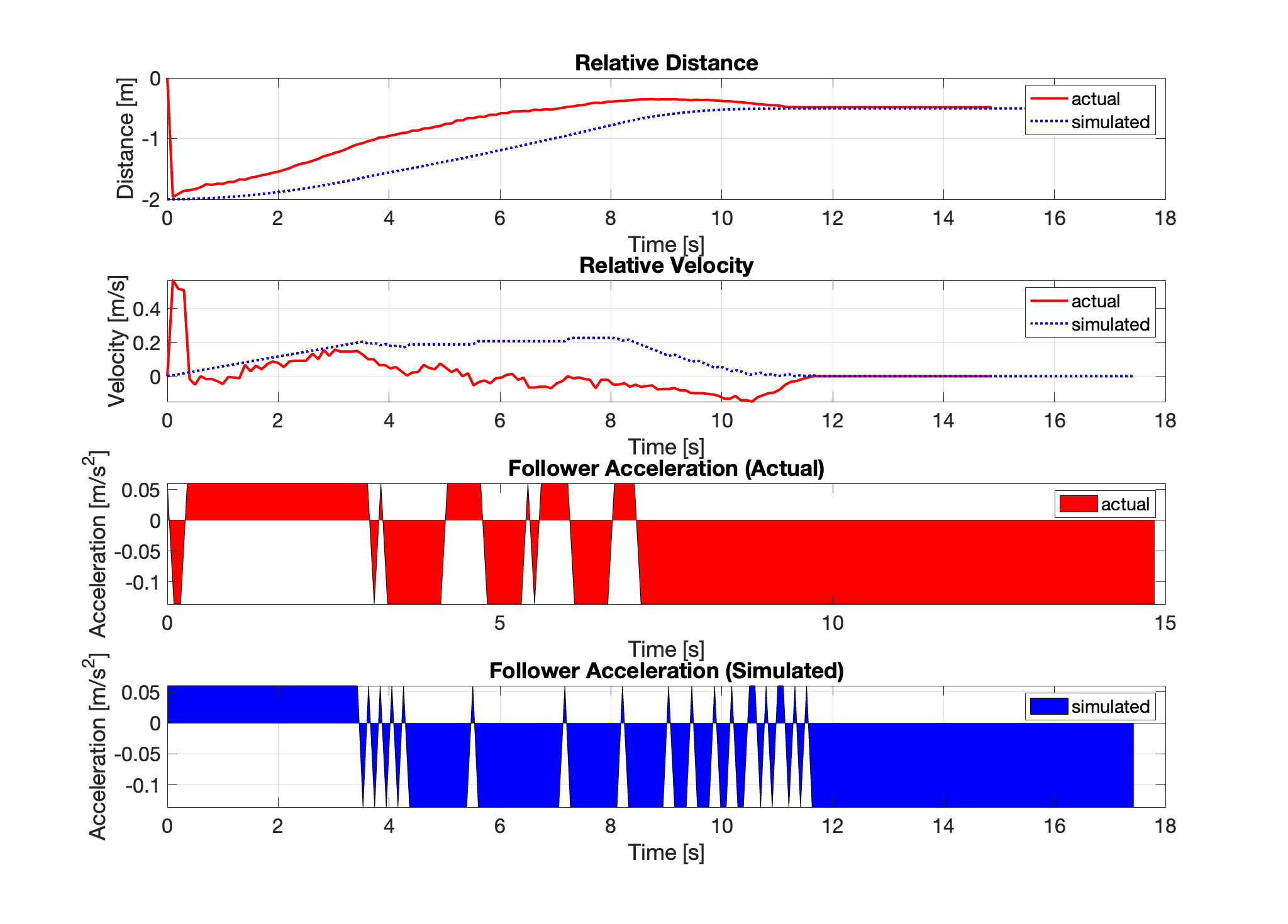}
    \caption{Relative kinematic plot for actual and simulated experiments}
    \label{fig:acc2car_plot}
\end{figure}

The kinematic data from the actual experiment differ somewhat from their simulated counterparts, which we believe arose from subtle differences in timing, initial conditions, or physical aspects of the real robots which were not accounted for in our original model. The actual commanded accelerations differed significantly from the simulation values, which is to be expected as they are reacting to different positions and velocities at any given moment. However, in spite of the variations in acceleration, the follower vehicle maintains a safe distance from the leader vehicle, allowing for the actual distance and velocity curves to converge to the simulated curves towards the end of the experiment. After finishing each of the 10 trials in each scenario, the distance between the leader and follower vehicle in their final positions was manually measured and recorded. The sample statistics for these collected distances can be seen in Table~\ref{t:ExperimentStats}.

\begin{table}[htbp!]
  \centering
  \caption{\label{t:ExperimentStats} Summary statistics of final distance measurements between vehicles}
  \begin{tabular}{l|l|l|l}
    Statistic &  Leader braking & Leader decelerating only &  Leader stationary \\
    \hline
    Mean & 0.4610 m & 1.0135 m & 0.4826 m\\
    Standard Deviation & 0.06 m & 0.4902 m & 0.0910 m\\
    Minimum & 0.3429 m & 0.7049 m & 0.2794 m\\
    Maximum & 0.5207 m & 2.1082 m & 0.5525 m
  \end{tabular}
\end{table}

\section{Related Work}
\label{sec:11}

\subsection{Verification of CPS}
Hagen and Tinelli~\cite{hagen2008scaling} developed the Kind prover for Lustre, which introduces SMT-based Lustre verification using $k$-induction, in order to automate safety proofs on synchronous dataflow programs. As with Kind, MARVeLus leverages SMT solving to inductively prove properties on synchronous programs. Although MARVeLus and Kind take different approaches to verification, a combined approach may lead to rigorous inductive proofs with reduced effort in finding invariants.

Champion et al.~\cite{champion2016kind} introduce Kind 2, an SMT-based model checker that produces invariants for synchronous programs over a bounded number of steps. For any successful proof, Kind 2 generates a proof certificate in a k-induction invariant form that can be validated by other solvers. In contrast to model-checking approach adopted by Kind 2, MARVeLus uses type properties directly to verify programs, though the end goal of reasoning on invariants remains similar. 

Kamburjan~\cite{kamburjan2021post-conditions} presents an object-oriented approach to verifying invariants in hybrid CPS. Similarly to MARVeLus, the rely-guarantee principle resembles our use of inductive invariants in discrete models, but extends this to the continuous space as well. To our knowledge, the work does not yet appear to support real-time execution of verified source code.

\subsection{Verification via Typing}
Languages such as F* \cite{swamy2016dependent} and Liquid Haskell \cite{jhala2020refinement} demonstrate the utility of allowing types to be augmented with functional or logical annotations. Both F* and Liquid Haskell interpret type annotations as constraints which are used to generate verification conditions for SMT solvers, such as Z3 \cite{hutchison2008z3}. Many of the MARVeLus type system components take cues from these two languages, lifting them into the space of synchronous programming. We chose to adopt refinement types as our theory of choice; although it is more constrained than dependent types, the quantifier-free nature of refinement predicates allows for SMT-decidable verification conditions and thus predictability \cite{jhala2020refinement}. A related application of verification via typing was demonstrated with $\Pi4$, a dependently-typed language for networking hardware.\cite{eichholz2022dependently-typed} 

\subsection{Verification and Execution} Aside from MARVeLus, VeriPhy~\cite{bohrer2018veriphy} and Koord~\cite{ghosh2020koord} also combine formal verification with execution on real systems. Although the goals of MARVeLus and Koord are similar, they differ in that Koord is intended primarily for higher-level coordination between robots, while MARVeLus is more concerned with the lower-level controls that may be specific to individual robots \cite{ghosh2020koord}. VeriPhy~\cite{bohrer2018veriphy} and similarly Melecha et al.~\cite{malecha2016towards} link a high-level abstract model of CPS to a lower-level implementation, either through runtime monitoring and switching, or verified translation. 

\subsection{Testing Autonomous Vehicles at Small-Scale}
Various experiments have previously been done on small-scale vehicles to simulate real-world autonomous vehicles \cite{rusu2010fuzzy,wen2011elman,rupp2019fast}. Rupp et al.~\cite{rupp2019fast} simulated autonomous vehicle maneuvers using scaled-down trucks and passenger cars equipped with sensors, actuators, cameras, and communication devices. ACC was implemented where two vehicles followed a lead vehicle with a constant distance or time gap while the lead vehicle tracks a given velocity profile. The experiments were done both with and without car-to-car communication. Position measurement in the platooning experiments was done visually using fiducial markers, as opposed to our simulated distance sensing.

\section{Future Work}
\label{sec:12}

\subsection{Invariants}
In its current implementation, MARVeLus simply checks typing constraints supplied by the user, relying on the user to assess the feasibility of counterexamples to refine the constraints with invariants. During program verification, one may encounter spurious counterexamples, in which an unsafe program state is found, but is not feasible in an actual system. These may be countered by supplying the verifier with invariants, as we have done manually in our example programs. In general, finding the necessary relations is not immediately obvious and requires significant additional effort by the user. There exist works that automate the process of inferring invariants \cite{hutchison2014characterizing}, including works that find invariants in synchronous programs \cite{champion2016kind}. Additionally, Liquid Haskell supports refinement inference through Horn constraint solving \cite{jhala2020refinement}, which may be adapted to take advantage of synchronous programming properties. Although orthogonal to the objectives of the current work, supporting invariant generation would help to reduce the effort involved in refining program constraints and improve user experience, particularly for complex programs.

\subsection{Trusted Components}
Although developing a single language for verification and execution holds the promise of a compact trusted base, the current implementation makes several assumptions about the safety of components interacting with the user's verified MARVeLus code. Firstly, we assume that the compiler produces a safe executable from the verified code. Formally-verified compilation of synchronous programs has been explored in other works \cite{bourke2017formally}, which allows one to directly produce executables that provably reflect their specified synchronous behavior. Combining our verification methods with a formally-verified compiler would establish a certified chain of trust in the compilation process.

Beyond compilation, we also trust the safety of MARVeLus runtime layer. For instance, we trust that network latency and throughput can be disregarded for the performance requirements of our runtime. However, this may not always be the case in real applications, particularly for resource-constrained embedded systems where the communications overhead may be nontrivial. Thus, it is necessary to characterize and verify the message-passing protocol employed for communications between the various robot components, of which there is existing work \cite{urban2015roscoq}.

We also note that, for simplicity, we chose to write the low-level control code in C, which includes feedback control for robot speed and heading. This choice was intentional, as we chose to elide some of the lower-level implementation details from our verified code example in favor of clarity. However, we note that certain components, such as motor speed control and velocity sensing, may be implemented in verified MARVeLus as well.  

For CPS operating in the real world, trust goes beyond verifying the ``cyber'' components alone. Compile-time verification implicitly assumes that system or environment assumptions hold for the real system. Error sources such as noise or external perturbations that are not properly modeled may put the system into an unsafe condition not accounted for in the specification and verification steps. There is currently no way to enforce these assumptions at runtime in MARVeLus. Runtime monitoring, perhaps based on the verified invariants in MARVeLus code could help to mitigate or notify the system of anomalies \cite{lahiri2020rtlola,bohrer2018veriphy}. Additionally, one may wish to obtain robustness measures on verification results, to determine the system's immunity to disturbances \cite{fainekos2009robustness}. 

\subsection{Richer Specifications}

Along with simulation and modeling of discrete systems, Zélus supports hybrid systems defined by differential equations, zero crossings, and labeled automata \cite{bourke2013zelus}. Presently, MARVeLus covers a subset of the discrete dynamics. Expanding the type theory to cover hybrid behavior would allow MARVeLus programs to be applied to a much greater range of CPS applications and provide more realistic models. This may be possible through the use of differential invariants as in dL \cite{platzer2008differential} or barrier certificates \cite{prajna_safety_2004}
The current specification grammar can also be expanded to include other temporal operators, leading to other properties outside of invariance to be considered as well, and in fact there is existing work exploring the semantics of temporal logic in hybrid systems \cite{hutchison_dtl2_2014}. Finally, we currently disregard numerical precision and treat most numerical values as floating-point numbers, which are interpreted internally as reals. However, this can be imprecise, especially on embedded systems that have limited support for high-precision floating-point numbers. Some research is also present in the intersection of probabilistic and synchronous programming, such as ProbZelus \cite{baudart2020reactive}, which may enable specification and analysis of stochastic processes.

\section{Conclusion}
\label{sec:13}
MARVeLus provides a method for combining refinement types with synchronous programming to formally verify cyber-physical systems directly in the source language. Since programming languages typically used for CPS verification and modeling may not be expressive enough for execution, and languages typically used to implement executable systems may not have well-defined formal semantics, it is not always clear whether a CPS implementation accurately reflects its verified counterpart. Synchronous languages are well-suited for modeling embedded systems. Thus, a verifiable and executable language gives CPS designers increased confidence in developing safety-critical systems. Part of this confidence comes from the certainty that the verification results produced are indeed correct. The extended semantics and type system we develop in this paper, along with the formal proof of type safety built upon these extensions of MARVeLus, paves the way towards building this certainty in the language. By simplifying the process of obtaining rigorous safety guarantees in CPS and providing a unified platform in which one can both verify and implement CPS, MARVeLus provides an end-to-end solution for accelerating the development of safety-critical systems and lowering the barrier for entry to formal verification.

\section*{Data Availability Statement}
Experiment data, source code, and an executable artifact are available on Zenodo \cite{chen_synchronous_2024_artifact}.

\section*{Acknowledgments}
We warmly thank Marc Pouzet and Timothy Bourke for interesting discussions, and for sharing their expertise of synchronous languages, as well as Ranjit Jhala for his detailed explanations of refinement types. We thank Tanner Slagel, Lauren White, Laura Titolo, Aaron Dutle and C\'esar Muñoz from the NASA Langley Formal Methods team, for their comments on earlier versions of this paper. We thank Peter Gaskell and the ROB 550 course staff for providing the M-Bot Mini robots used in our experiments, as well as Tigist Shiferaw and Serra Dane for their help with the experiments. Finally, we thank the reviewers for their thorough feedback and insightful comments. This research was supported in part by NSF Grant CCF-2348706.

\newpage

\appendix
\section{Collision Avoidance Source Code}\label{app:acccode}
\lstinputlisting[escapeinside={(*}{*)}]{acc-2cars.zls}

\ifext
We implemented and verified the collision avoidance controller in MARVeLus, and present the source code of the system here. In particular, we add the convenience definitions $\texttt{need\_to\_brake}$, which ego's control decision to apply brakes based on supplied ego and leader positions and velocities (replicating the conditional of $a^f$ in Fig. \ref{fig:acc}), and $\texttt{max}$, which takes the maximum of two values. These would otherwise be directly implemented in the source code, as function calls within synchronous nodes had not \nishant{fix tenses here, since ``had not" feels weird in a paper unless you're talking about old work or something} yet been fully implemented in the compiler. We note that $\texttt{need\_to\_brake}$ deviates slightly from the dynamics equations, by causing the ego vehicle to maintain at least a \SI{0.5}{m} separation from the leader, to prevent any imprecision encountered in the real experiment from causing the vehicles to collide. 

We assigned concrete values to various system constants (Lines 1-6) in the discrete model to match the real-world experimental setup and to set the scale of the experiment. For example, the initial distance of the leader vehicle (\SI{2}{m}) and initial velocities (\SI{0.5}{m/s}) were chosen with the intent of running the experiment with small wheeled robots much smaller than full-size vehicles. We set the maximum braking acceleration of both vehicles to \SI{0.136}{m/s^2}, which was determined in existing work to approximate a typical automobile's braking force at the scale of the wheeled robots used in the experiment \cite{chen2022synchronous}. We arbitrarily chose a maximum acceleration  (\SI{0.06}{m/s^2}) that was somewhat less than the braking acceleration, to limit the speeds reached by the robots in the real experiment. Due to space constraints, we intended for the vehicles to travel no farther than \SI{8}{m} total, and determined that allowing the leader vehicle to drive \SI{2}{m} before braking would provide ample distance for both vehicles to stop within the \SI{8}{m} space.

As with the simplified industrial controller, we define our state variables as members of a tuple stream (Line 20). The safety property is specified on Line 21 as \texttt{(x2 > x)}, along with various other assumptions, such as non-negative velocities and correct starting order. As with the simplified example, the safety property is insufficient, as it is not inductive. We add the assumption that the previous instant has left the system in a recoverable state, i.e. the ego would be able to avoid a collision if it were to respond immediately with maximum braking (Lines 23-24). Lines 25-28 allow the verifier to assume that the ego and leader vehicles' acceleration decisions were correct in the previous instant. Line 30 defines the initial condition---the ego begins at the origin with initial velocity of \SI{0.5}{m/s} and immediately begins accelerating, while the leader begins at \SI{2}{m}, traveling at a constant \SI{0.5}{m/s}. Lines 31-38 update the system's physical parameters, either by simulating the dynamics for simulation and verification, or collecting the actual parameters if deployed to the real system. Using these values, the state vector is updated with the ego position and velocity (Line 39), the ego makes its control decision (Lines 40-41), the leader's position and velocity are updated (Line 42), and makes its own control decision (Line 43). Finally, the ego acceleration command is published to the shared variable ``accel'' on the robot. 

\section{Detailed Experimental Methods}

As part of MARVeLus, we developed extensions to the Zélus runtime and additional software components to allow MARVeLus programs to communicate with real robot sensors and actuators. The robotics stack consists of the user MARVeLus code and runtime, kinematic models, and low-level sensor and actuator drivers, which all run locally onboard the robots (Fig. \ref{fig:mbotstack}). The modular software architecture of the MARVeLus robotics stack provides flexibility in testing and development; physical robots can be quickly replaced with their simulated counterparts, and computationally intensive tasks can be offloaded to an external computing resources. The trusted base consists of the MARVeLus compiler, runtime, kinematic models, and sensor and low-level drivers. That is, we trust that sensor data is always available and valid, communications between software components is instantaneous and keeps shared states sufficiently consistent, and that actuators are always able to attain their commanded values.

We used our robotics extensions to demonstrate the verified collision avoidance  program, with small wheeled robots acting as the follower and leader vehicles. Our verified control code runs on the follower vehicle, while the leader vehicle acts independently though consistent with the previously-defined assumptions. For realism, the robots do not communicate with one another; the follower vehicle has no knowledge of when the leader will start braking, only that the two vehicles start in a safe configuration. 

\subsection{Experimental Setup}

We tested the MARVeLus collision avoidance implementation using small wheeled robots on a straight \SI{8}{m} track (Fig. \ref{fig:mbotrails}). The software on the robots were synchronized externally via a central server to ensure timing consistency between the components. In addition to the collision avoidance scenario in which the leader brakes abruptly and stops, we tested two other variations: (1) The leader remains stationary throughout, simulating an obstacle present on the track or roadway. (2) The leader decelerates to a constant slow speed, but does not fully stop, allowing the ego vehicle to continue moving forward as long as it maintains safe separation. When both robots have stopped, or after a predetermined period of time in the case of the slow-moving leader, the final distance between the two robots is recorded. 

\subsection{Robot Hardware}
We selected the M-Bot Mini  as our primary development target. The robot is primarily controlled by a BeagleBone Blue single-board Linux computer \cite{beagleboard2018beaglebone}, which is uniquely suited for our application as it gives userspace applications, such as the MARVeLus program and runtime, direct access to robot hardware, simplifying our implementation. The robot is driven by two motorized wheels arranged in a differential-drive configuration, with quadrature encoders to measure position and wheel speed. Using the encoders for odometry, the robots are able to estimate their velocity and distance traveled. With our experimental setup, we determined that inaccuracies that may potentially arise from wheel slippage and friction were negligible, and that odometry alone was sufficiently accurate for distance and velocity measurement.

\subsection{Simulated Distance Sensing}
The ego vehicle periodically receives velocity and position data from the leader. Since we already assume that the ego vehicle has perfect knowledge of the leader's kinematic state, this method avoids potential errors or inaccuracies arising from direct distance sensing, such as via computer vision or LiDAR. However, we note that the modularity of the platform allows such methods to be implemented and accessed within MARVeLus code should they be necessary for future experiments. Besides sending position and velocity data, the leader and ego vehicles do not communicate with one another in any other way. 

\subsection{Communications and Runtime}
Communications between the MARVeLus runtime and the other software components is handled through the Lightweight Communications and Marshalling (LCM) protocol \cite{lcm_projectnodatelightweight}, which provides an inter-process link, even between robots. Each component in the robotics stack maintains a local copy of relevant system variables as a key-value store, which allows updates from other components to propagate when available, and enables each component to operate asynchronously. As a result, sensor and actuator interfacing is handled as read and write calls, respectively, to the appropriate data store. We trust that communications is instantaneous and error-free, and that data stays sufficiently consistent system-wide, though we note that formal verification of internal robot communications has been explored in other works \cite{urban2015roscoq}.

The MARVeLus runtime is an extension of the Zélus runtime, allowing it to control physical robot actuators and to derive stream values from real sensors via foreign function interfaces that are inserted into the MARVeLus program at compilation. The primary purpose of the MARVeLus runtime is to interpret \kw{robot\_get} and \kw{robot\_str} commands from user code, and format them into the appropriate LCM messages for read and write operations, respectively. The runtime is assumed to be trusted and as a result is designed to contain only the minimal necessary functionality.

\subsection{User Code}
Since embedded systems may have limited computational capabilities, the verification stage of the MARVeLus workflow is not typically performed onboard the robot itself. Instead, user programs are first verified on an external system using the MARVeLus verifier and the SMT solver. After successful verification, the program is translated to its intermediate OCaml representation, and the result is transmitted to the robots for compilation into a native executable. We note that we also trust the compilation process, though there is existing work on formally verifying the compilation of synchronous languages \cite{bourke2017formally}. 

\subsection{Kinematic Models}
The kinematic models simulate the discrete dynamics of the robots, and update their actual speeds to match (Fig. \ref{fig:kinematicmodels}). The kinematic model on the leader vehicle defines the behavior that we assumed in the verified program, causing it to begin braking after traveling a predetermined distance unknown to the ego vehicle. The kinematic model on the follower vehicle accepts an acceleration value from the user code, and incorporates this into its velocity calculation. This separation of the dynamics from the user code enhances the realism of the experiment, as it prohibits user code from directly modifying the robot's velocity, instead requiring it to do so through modulating the acceleration, simulating controlling the brakes and throttle of an actual vehicle. Although we employ discrete-time kinematic models, we note that one may leverage the hybrid simulation features Zélus already supports (such as dynamics defined by differential equations) for a higher-fidelity simulation. We trust that this component has been implemented correctly, and note that the kinematics are often incorporated into or modeled as invariants used to augment the specifications of the user's MARVeLus code. 

\begin{figure}
    \centering
    \begin{align*}
        v^f_{0}&=0.5 &v^\ell_{0}&=0.5 & a^\ell_0&=0\\
        v^f_{n+1}&=v^f_n+a^f_n\cdot dt &v^\ell_{n+1}&=v^\ell_n+a^\ell_n\cdot dt
        &a^\ell_{n+1}&=
        \begin{cases}
        0 &\text{if } x^\ell_{n}<4\\
        -0.136 &\text{otherwise}
        \end{cases}
    \end{align*}
    \caption{Kinematic Models of the ego ($f$) and leader ($\ell$) vehicles. In the original collision-avoidance scenario, the leader's velocity is clamped to zero to prevent the robot from reversing after stopping. In the slow-moving leader scenario, it is instead clamped to a low but nonzero value. In the stationary leader scenario, the leader's initial velocity is set to zero.}
    \label{fig:kinematicmodels}
\end{figure}

\subsection{Robot Driver}
The C-based low-level driver on the robot manages hardware control of the motors and encoders, and serves as a MARVeLus-compatible interface for the Robot Control library \cite{beagleboardnodaterobot}. The driver primarily communicates with the kinematic model and MARVeLus runtime, ensuring that the actual robot's behavior is representative of the modeled robot, and reports the robot's own velocity derived from encoder measurements. For simplicity, we chose to implement motor speed control at this level, though we note that this discrete controller could plausibly be implemented and verified in MARVeLus if desired. We assume that encoder data is always valid.

\nishant{This is a good section but it feels pretty long. Maybe for POPL part of this can be cut or reduced somewhat, since it's for an audience that doesn't pay much attention to robotics implementation details.}

\section{Additional Experimental Data}
\begin{figure}
    \centering
    \includegraphics[width=0.6\linewidth]{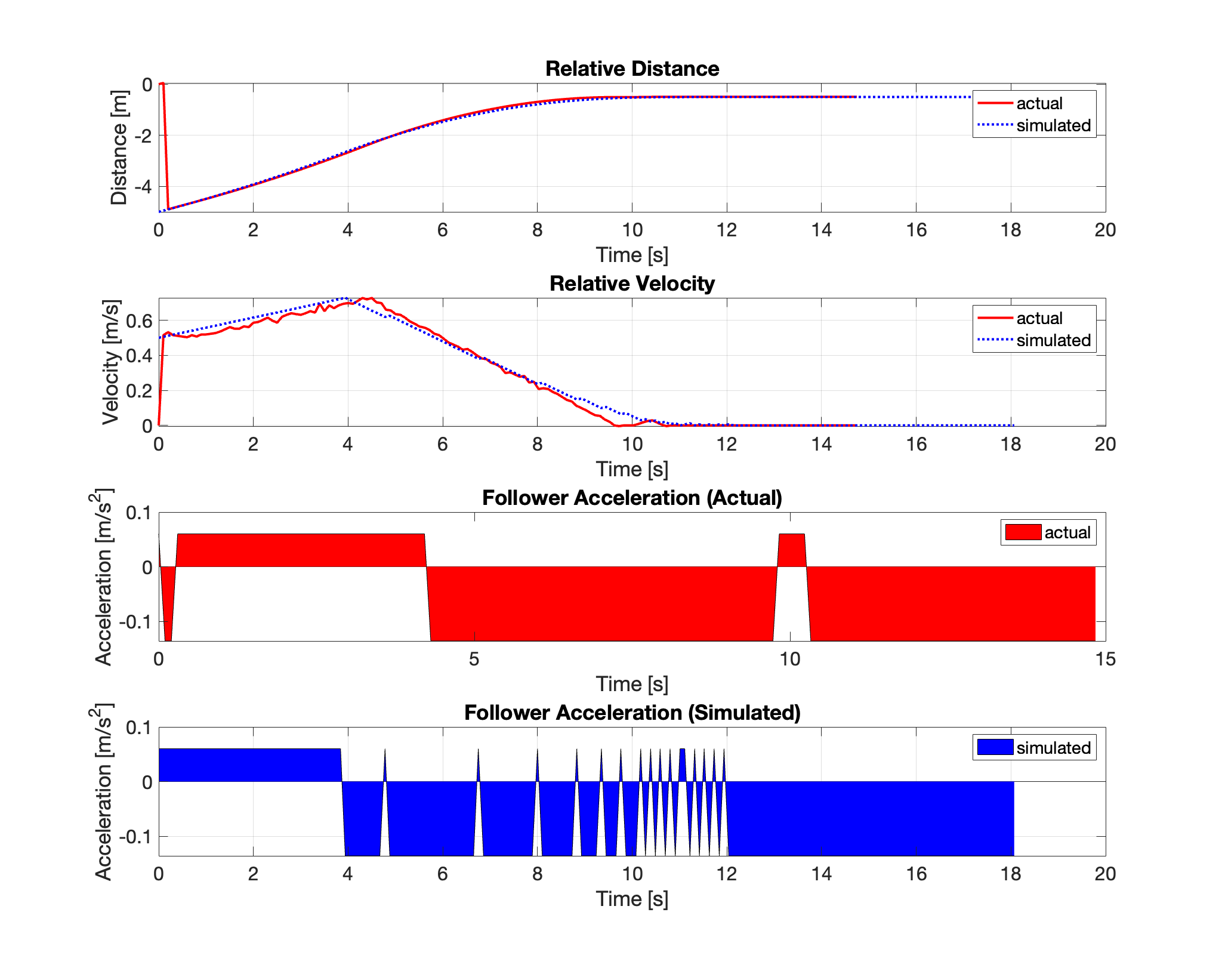}
    \caption{Relative kinematic plot for actual and simulated experiments in which the leader remains stationary}
    \label{fig:acc1car_plot}
\end{figure}

\begin{figure}[H]
    \centering
    \includegraphics[width=0.8\linewidth]{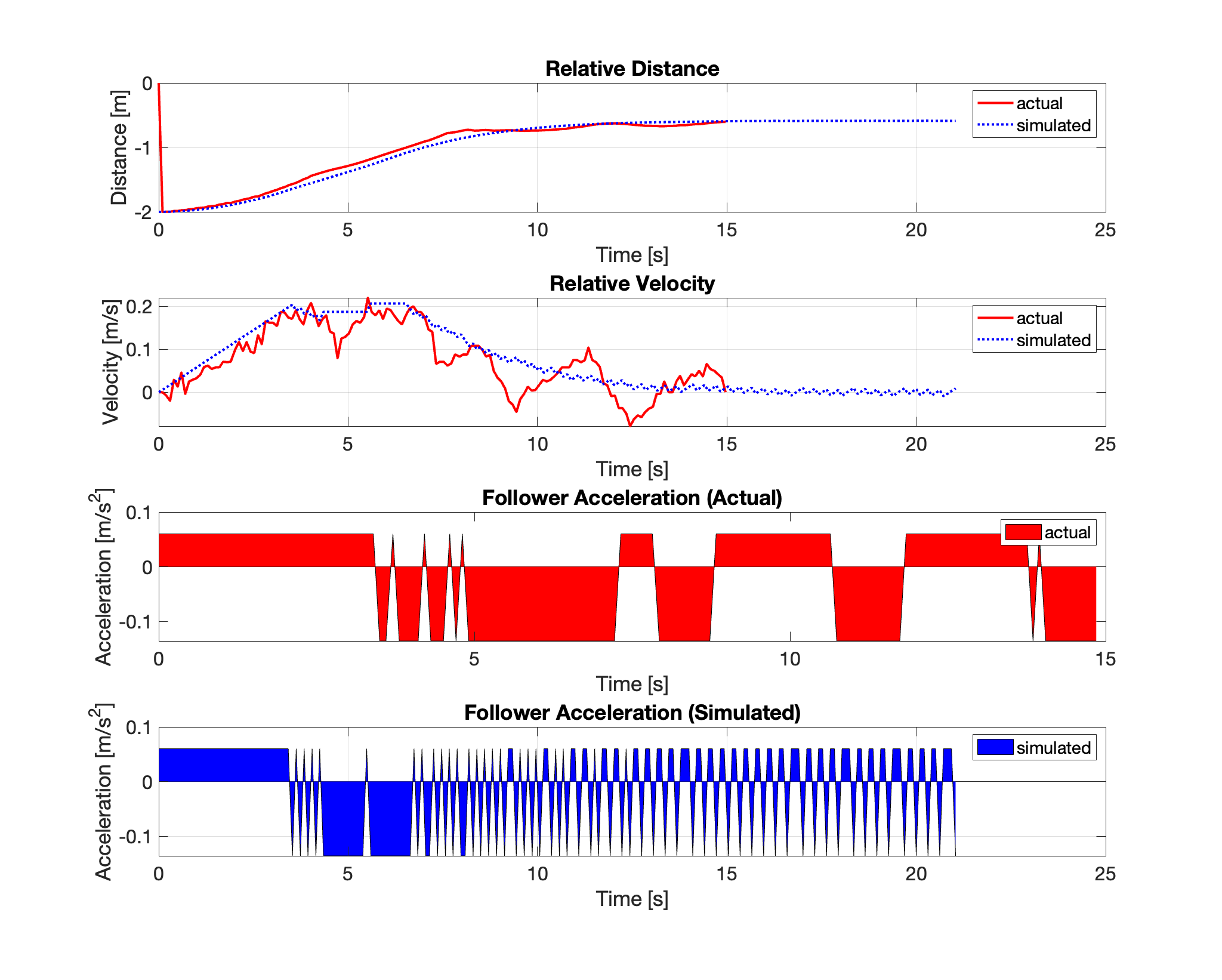}
    \caption{Relative kinematic plot for actual and simulated experiments in which the leader decelerates but does not stop completely}
    \label{fig:acc2car_slowdown_plot}
\end{figure}

\begin{table}[H]
\parbox{.3\linewidth}{
\centering
\caption{\label{t:TwoCarSlowdownACC} Leader deceleration}
  \begin{tabular}{l|l}
    Trial \#&  Distance (m)\\
    \hline \hline \\ [-2ex]
    1 & 1.6955\\
    2 & 0.7874\\
    3 & 0.7620\\
    4 & 0.7684\\
    5 & 2.1082\\
    6 & 0.7493\\
    7 & 0.7620\\
    8 & 1.0859\\
    9 & 0.7049\\
    10 & 0.7112
  \end{tabular}
}
\hfill
\parbox{.3\linewidth}{
\centering
\caption{\label{t:OneCarACC} Leader stationary}
  \begin{tabular}{l|l}
    Trial \#&  Distance (m)\\
    \hline \hline \\ [-2ex]
    1 & 0.5080\\
    2 & 0.4064\\
    3 & 0.4000\\
    4 & 0.5526\\
    5 & 0.5398\\
    6 & 0.5144\\
    7 & 0.5271\\
    8 & 0.5461\\
    9 & 0.55245\\
    10 & 0.2794
  \end{tabular}
}
\hfill
\parbox{.3\linewidth}{
\centering
  \caption{\label{t:TwoCarACC} Leader halting}
  \begin{tabular}{l|l}
    Trial \#&  Distance (m)\\
    \hline \hline \\ [-2ex]
    1 & 0.4699\\
    2 & 0.3747\\
    3 & 0.3429\\
    4 & 0.4953\\
    5 & 0.4318\\
    6 & 0.4636\\
    7 & 0.5017\\
    8 & 0.5207\\
    9 & 0.4953\\
    10 & 0.5144
  \end{tabular}
}
\end{table}
\fi
\newpage
\bibliographystyle{ACM-Reference-Format}
\bibliography{ref.bib}


\begin{thebibliography}{54}


\ifx \showCODEN    \undefined \def \showCODEN     #1{\unskip}     \fi
\ifx \showDOI      \undefined \def \showDOI       #1{#1}\fi
\ifx \showISBNx    \undefined \def \showISBNx     #1{\unskip}     \fi
\ifx \showISBNxiii \undefined \def \showISBNxiii  #1{\unskip}     \fi
\ifx \showISSN     \undefined \def \showISSN      #1{\unskip}     \fi
\ifx \showLCCN     \undefined \def \showLCCN      #1{\unskip}     \fi
\ifx \shownote     \undefined \def \shownote      #1{#1}          \fi
\ifx \showarticletitle \undefined \def \showarticletitle #1{#1}   \fi
\ifx \showURL      \undefined \def \showURL       {\relax}        \fi
\providecommand\bibfield[2]{#2}
\providecommand\bibinfo[2]{#2}
\providecommand\natexlab[1]{#1}
\providecommand\showeprint[2][]{arXiv:#2}

\bibitem[Alur et~al\mbox{.}(1993)]%
        {goos1993hybrid}
\bibfield{author}{\bibinfo{person}{Rajeev Alur}, \bibinfo{person}{Costas
  Courcoubetis}, \bibinfo{person}{Thomas~A. Henzinger}, {and}
  \bibinfo{person}{Pei~Hsin Ho}.} \bibinfo{year}{1993}\natexlab{}.
\newblock \showarticletitle{Hybrid automata: {An} algorithmic approach to the
  specification and verification of hybrid systems}.
\newblock In \bibinfo{booktitle}{\emph{Hybrid {Systems}}},
  \bibfield{editor}{\bibinfo{person}{G.~Goos}, \bibinfo{person}{J.~Hartmanis},
  \bibinfo{person}{Robert~L. Grossman}, \bibinfo{person}{Anil Nerode},
  \bibinfo{person}{Anders~P. Ravn}, {and} \bibinfo{person}{Hans Rischel}}
  (Eds.). Vol.~\bibinfo{volume}{736}. \bibinfo{publisher}{Springer},
  \bibinfo{address}{Berlin, Heidelberg}, \bibinfo{pages}{209--229}.
\newblock
\showISBNx{978-3-540-57318-0 978-3-540-48060-0}
\urldef\tempurl%
\url{https://doi.org/10.1007/3-540-57318-6_30}
\showDOI{\tempurl}


\bibitem[Ames et~al\mbox{.}(2014)]%
        {ames2014control}
\bibfield{author}{\bibinfo{person}{Aaron~D. Ames}, \bibinfo{person}{Jessy~W.
  Grizzle}, {and} \bibinfo{person}{Paulo Tabuada}.}
  \bibinfo{year}{2014}\natexlab{}.
\newblock \showarticletitle{Control barrier function based quadratic programs
  with application to adaptive cruise control}. In
  \bibinfo{booktitle}{\emph{53rd {IEEE} {Conference} on {Decision} and
  {Control} (CDC 2014)}}. \bibinfo{publisher}{IEEE}, \bibinfo{address}{Los
  Angeles, CA, USA}, \bibinfo{pages}{6271--6278}.
\newblock
\showISBNx{978-1-4673-6090-6 978-1-4799-7746-8 978-1-4799-7745-1}
\urldef\tempurl%
\url{https://doi.org/10.1109/CDC.2014.7040372}
\showDOI{\tempurl}


\bibitem[Anand and Knepper(2015)]%
        {urban2015roscoq}
\bibfield{author}{\bibinfo{person}{Abhishek Anand} {and} \bibinfo{person}{Ross
  Knepper}.} \bibinfo{year}{2015}\natexlab{}.
\newblock \showarticletitle{{ROSCoq}: {Robots} {Powered} by {Constructive}
  {Reals}}.
\newblock In \bibinfo{booktitle}{\emph{Interactive {Theorem} {Proving} (ITP
  2015)}}, \bibfield{editor}{\bibinfo{person}{Christian Urban} {and}
  \bibinfo{person}{Xingyuan Zhang}} (Eds.). Vol.~\bibinfo{volume}{9236}.
  \bibinfo{publisher}{Springer International Publishing},
  \bibinfo{address}{Cham}, \bibinfo{pages}{34--50}.
\newblock
\showISBNx{978-3-319-22101-4 978-3-319-22102-1}
\urldef\tempurl%
\url{https://doi.org/10.1007/978-3-319-22102-1_3}
\showDOI{\tempurl}


\bibitem[Baudart et~al\mbox{.}(2020)]%
        {baudart2020reactive}
\bibfield{author}{\bibinfo{person}{Guillaume Baudart}, \bibinfo{person}{Louis
  Mandel}, \bibinfo{person}{Eric Atkinson}, \bibinfo{person}{Benjamin Sherman},
  \bibinfo{person}{Marc Pouzet}, {and} \bibinfo{person}{Michael Carbin}.}
  \bibinfo{year}{2020}\natexlab{}.
\newblock \showarticletitle{Reactive probabilistic programming}. In
  \bibinfo{booktitle}{\emph{Proceedings of the 41st ACM SIGPLAN Conference on
  Programming Language Design and Implementation (PLDI 2020)}}.
  \bibinfo{publisher}{Association for Computing Machinery},
  \bibinfo{address}{New York, NY, USA}, \bibinfo{pages}{898--912}.
\newblock
\urldef\tempurl%
\url{https://doi.org/10.1145/3385412.3386009}
\showDOI{\tempurl}


\bibitem[Baumeister et~al\mbox{.}(2020)]%
        {lahiri2020rtlola}
\bibfield{author}{\bibinfo{person}{Jan Baumeister}, \bibinfo{person}{Bernd
  Finkbeiner}, \bibinfo{person}{Sebastian Schirmer},
  \bibinfo{person}{Maximilian Schwenger}, {and} \bibinfo{person}{Christoph
  Torens}.} \bibinfo{year}{2020}\natexlab{}.
\newblock \showarticletitle{{RTLola} {Cleared} for {Take}-{Off}: {Monitoring}
  {Autonomous} {Aircraft}}.
\newblock In \bibinfo{booktitle}{\emph{Computer {Aided} {Verification} (CAV
  2020)}}, \bibfield{editor}{\bibinfo{person}{Shuvendu~K. Lahiri} {and}
  \bibinfo{person}{Chao Wang}} (Eds.). Vol.~\bibinfo{volume}{12225}.
  \bibinfo{publisher}{Springer International Publishing},
  \bibinfo{address}{Cham}, \bibinfo{pages}{28--39}.
\newblock
\showISBNx{978-3-030-53290-1 978-3-030-53291-8}
\urldef\tempurl%
\url{https://doi.org/10.1007/978-3-030-53291-8_3}
\showDOI{\tempurl}
\newblock
\shownote{Series Title: Lecture Notes in Computer Science}.


\bibitem[Benveniste et~al\mbox{.}(2014)]%
        {benveniste2014type-based}
\bibfield{author}{\bibinfo{person}{Albert Benveniste}, \bibinfo{person}{Timothy
  Bourke}, \bibinfo{person}{Benoit Caillaud}, \bibinfo{person}{Bruno Pagano},
  {and} \bibinfo{person}{Marc Pouzet}.} \bibinfo{year}{2014}\natexlab{}.
\newblock \showarticletitle{A type-based analysis of causality loops in hybrid
  systems modelers}. In \bibinfo{booktitle}{\emph{Proceedings of the 17th
  international conference on {Hybrid} systems: computation and control (HSCC
  2014)}}. \bibinfo{publisher}{ACM Press}, \bibinfo{address}{Berlin, Germany},
  \bibinfo{pages}{71--82}.
\newblock
\showISBNx{978-1-4503-2732-9}
\urldef\tempurl%
\url{https://doi.org/10.1145/2562059.2562125}
\showDOI{\tempurl}


\bibitem[Benveniste et~al\mbox{.}(2011)]%
        {benveniste2011hybrid}
\bibfield{author}{\bibinfo{person}{Albert Benveniste}, \bibinfo{person}{Timothy
  Bourke}, \bibinfo{person}{Benoît Caillaud}, {and} \bibinfo{person}{Marc
  Pouzet}.} \bibinfo{year}{2011}\natexlab{}.
\newblock \showarticletitle{A hybrid synchronous language with hierarchical
  automata: static typing and translation to synchronous code}. In
  \bibinfo{booktitle}{\emph{Proceedings of the 9th {ACM} international
  conference on {Embedded} software (EMSOFT 2011)}}. \bibinfo{publisher}{ACM},
  \bibinfo{address}{Taipei, Taiwan}, \bibinfo{pages}{137--148}.
\newblock
\showISBNx{978-1-4503-0714-7}
\urldef\tempurl%
\url{https://doi.org/10.1145/2038642.2038664}
\showDOI{\tempurl}


\bibitem[Bohrer et~al\mbox{.}(2018)]%
        {bohrer2018veriphy}
\bibfield{author}{\bibinfo{person}{Rose Bohrer}, \bibinfo{person}{Yong~Kiam
  Tan}, \bibinfo{person}{Stefan Mitsch}, \bibinfo{person}{Magnus~O. Myreen},
  {and} \bibinfo{person}{André Platzer}.} \bibinfo{year}{2018}\natexlab{}.
\newblock \showarticletitle{{VeriPhy}: verified controller executables from
  verified cyber-physical system models}. In
  \bibinfo{booktitle}{\emph{Proceedings of the 39th {ACM} {SIGPLAN}
  {Conference} on {Programming} {Language} {Design} and {Implementation} (PLDI
  2018)}}. \bibinfo{publisher}{ACM}, \bibinfo{address}{Philadelphia PA USA},
  \bibinfo{pages}{617--630}.
\newblock
\showISBNx{978-1-4503-5698-5}
\urldef\tempurl%
\url{https://doi.org/10.1145/3192366.3192406}
\showDOI{\tempurl}


\bibitem[Borkowski et~al\mbox{.}(2024)]%
        {borkowski_mechanizing_2022}
\bibfield{author}{\bibinfo{person}{Michael~H Borkowski}, \bibinfo{person}{Niki
  Vazou}, {and} \bibinfo{person}{Ranjit Jhala}.}
  \bibinfo{year}{2024}\natexlab{}.
\newblock \showarticletitle{Mechanizing refinement types}.
\newblock \bibinfo{journal}{\emph{Proceedings of the ACM on Programming
  Languages (POPL 2024)}}  \bibinfo{volume}{8} (\bibinfo{year}{2024}),
  \bibinfo{pages}{2099--2128}.
\newblock
\urldef\tempurl%
\url{https://doi.org/10.1145/3632912}
\showDOI{\tempurl}


\bibitem[Boulmé and Hamon(2001)]%
        {goos_certifying_2001}
\bibfield{author}{\bibinfo{person}{Sylvain Boulmé} {and}
  \bibinfo{person}{Grégoire Hamon}.} \bibinfo{year}{2001}\natexlab{}.
\newblock \showarticletitle{Certifying {Synchrony} for {Free}}.
\newblock In \bibinfo{booktitle}{\emph{Logic for {Programming}, {Artificial}
  {Intelligence}, and {Reasoning} (LPAR 2001)}},
  \bibfield{editor}{\bibinfo{person}{G.~Goos}, \bibinfo{person}{J.~Hartmanis},
  \bibinfo{person}{J.~Van~Leeuwen}, \bibinfo{person}{Robert Nieuwenhuis}, {and}
  \bibinfo{person}{Andrei Voronkov}} (Eds.). Vol.~\bibinfo{volume}{2250}.
  \bibinfo{publisher}{Springer}, \bibinfo{address}{Berlin, Heidelberg},
  \bibinfo{pages}{495--506}.
\newblock
\showISBNx{978-3-540-42957-9 978-3-540-45653-7}
\urldef\tempurl%
\url{https://doi.org/10.1007/3-540-45653-8_34}
\showDOI{\tempurl}


\bibitem[Bourke et~al\mbox{.}(2017)]%
        {bourke2017formally}
\bibfield{author}{\bibinfo{person}{Timothy Bourke}, \bibinfo{person}{Lélio
  Brun}, \bibinfo{person}{Pierre-Evariste Dagand}, \bibinfo{person}{Xavier
  Leroy}, \bibinfo{person}{Marc Pouzet}, {and} \bibinfo{person}{Lionel Rieg}.}
  \bibinfo{year}{2017}\natexlab{}.
\newblock \showarticletitle{A {Formally} {Verified} {Compiler} for {Lustre}}.
\newblock \bibinfo{journal}{\emph{Proceedings of the 38th ACM SIGPLAN
  Conference on Programming Language Design and Implementation (PLDI 2017)}}
  \bibinfo{volume}{52}, \bibinfo{number}{6} (\bibinfo{year}{2017}),
  \bibinfo{pages}{586--601}.
\newblock
\urldef\tempurl%
\url{https://doi.org/10.1145/3062341.3062358}
\showDOI{\tempurl}


\bibitem[Bourke and Pouzet(2013)]%
        {bourke2013zelus}
\bibfield{author}{\bibinfo{person}{Timothy Bourke} {and} \bibinfo{person}{Marc
  Pouzet}.} \bibinfo{year}{2013}\natexlab{}.
\newblock \showarticletitle{Zélus: a synchronous language with {ODEs}}. In
  \bibinfo{booktitle}{\emph{Proceedings of the 16th international conference on
  {Hybrid} systems: computation and control (HSCC 2013)}}.
  \bibinfo{publisher}{ACM Press}, \bibinfo{address}{Philadelphia, Pennsylvania,
  USA}, \bibinfo{pages}{113}.
\newblock
\showISBNx{978-1-4503-1567-8}
\urldef\tempurl%
\url{https://doi.org/10.1145/2461328.2461348}
\showDOI{\tempurl}


\bibitem[Boussinot and de~Simone(1991)]%
        {boussinot1991esterel}
\bibfield{author}{\bibinfo{person}{F. Boussinot} {and} \bibinfo{person}{R. de
  Simone}.} \bibinfo{year}{1991}\natexlab{}.
\newblock \showarticletitle{The {ESTEREL} language}.
\newblock \bibinfo{journal}{\emph{Proc. IEEE}} \bibinfo{volume}{79},
  \bibinfo{number}{9} (\bibinfo{date}{Sept.} \bibinfo{year}{1991}),
  \bibinfo{pages}{1293--1304}.
\newblock
\showISSN{00189219}
\urldef\tempurl%
\url{https://doi.org/10.1109/5.97299}
\showDOI{\tempurl}


\bibitem[Caspi et~al\mbox{.}(1987)]%
        {caspi1986lustre}
\bibfield{author}{\bibinfo{person}{P. Caspi}, \bibinfo{person}{D. Pilaud},
  \bibinfo{person}{N. Halbwachs}, {and} \bibinfo{person}{J.~A. Plaice}.}
  \bibinfo{year}{1987}\natexlab{}.
\newblock \showarticletitle{LUSTRE: A Declarative Language for Real-Time
  Programming}. In \bibinfo{booktitle}{\emph{Proceedings of the 14th ACM
  SIGACT-SIGPLAN Symposium on Principles of Programming Languages (POPL 1987)}}
  (Munich, West Germany). \bibinfo{publisher}{ACM}, \bibinfo{address}{New York,
  NY, USA}, \bibinfo{pages}{178–188}.
\newblock
\showISBNx{0897912152}
\urldef\tempurl%
\url{https://doi.org/10.1145/41625.41641}
\showDOI{\tempurl}


\bibitem[Caspi and Pouzet(1996)]%
        {caspi_synchronous_1996}
\bibfield{author}{\bibinfo{person}{Paul Caspi} {and} \bibinfo{person}{Marc
  Pouzet}.} \bibinfo{year}{1996}\natexlab{}.
\newblock \showarticletitle{Synchronous {Kahn} networks}. In
  \bibinfo{booktitle}{\emph{Proceedings of the first {ACM} {SIGPLAN}
  international conference on {Functional} programming (ICFP 1996)}}.
  \bibinfo{publisher}{ACM Press}, \bibinfo{address}{Philadelphia, Pennsylvania,
  United States}, \bibinfo{pages}{226--238}.
\newblock
\showISBNx{978-0-89791-770-4}
\urldef\tempurl%
\url{https://doi.org/10.1145/232627.232651}
\showDOI{\tempurl}


\bibitem[Caspi and Pouzet(1998)]%
        {caspi_co-iterative_1998}
\bibfield{author}{\bibinfo{person}{Paul Caspi} {and} \bibinfo{person}{Marc
  Pouzet}.} \bibinfo{year}{1998}\natexlab{}.
\newblock \showarticletitle{A {Co}-iterative {Characterization} of
  {Synchronous} {Stream} {Functions}}.
\newblock \bibinfo{journal}{\emph{Electronic Notes in Theoretical Computer
  Science}}  \bibinfo{volume}{11} (\bibinfo{year}{1998}),
  \bibinfo{pages}{1--21}.
\newblock
\showISSN{15710661}
\urldef\tempurl%
\url{https://doi.org/10.1016/S1571-0661(04)00050-7}
\showDOI{\tempurl}


\bibitem[Champion et~al\mbox{.}(2016)]%
        {champion2016kind}
\bibfield{author}{\bibinfo{person}{Adrien Champion}, \bibinfo{person}{Alain
  Mebsout}, \bibinfo{person}{Christoph Sticksel}, {and} \bibinfo{person}{Cesare
  Tinelli}.} \bibinfo{year}{2016}\natexlab{}.
\newblock \showarticletitle{The Kind 2 Model Checker}. In
  \bibinfo{booktitle}{\emph{Computer Aided Verification (CAV 2016)}},
  \bibfield{editor}{\bibinfo{person}{Swarat Chaudhuri} {and}
  \bibinfo{person}{Azadeh Farzan}} (Eds.). \bibinfo{publisher}{Springer
  International Publishing}, \bibinfo{address}{Cham},
  \bibinfo{pages}{510--517}.
\newblock
\showISBNx{978-3-319-41540-6}
\urldef\tempurl%
\url{https://doi.org/10.1007/978-3-319-41540-6_29}
\showDOI{\tempurl}


\bibitem[Chen et~al\mbox{.}(2024)]%
        {chen_synchronous_2024_artifact}
\bibfield{author}{\bibinfo{person}{Jiawei Chen}, \bibinfo{person}{José~Luiz
  Vargas~de Mendonça}, \bibinfo{person}{Bereket~Shimels Ayele},
  \bibinfo{person}{Bereket~Ngussie Bekele}, \bibinfo{person}{Shayan Jalili},
  \bibinfo{person}{Pranjal Sharma}, \bibinfo{person}{Nicholas Wohlfeil},
  \bibinfo{person}{Yicheng Zhang}, {and} \bibinfo{person}{Jean-Baptiste
  Jeannin}.} \bibinfo{year}{2024}\natexlab{}.
\newblock \bibinfo{title}{Synchronous {Programming} with {Refinement} {Types}}.
\newblock
\newblock
\urldef\tempurl%
\url{https://doi.org/10.5281/zenodo.12702677}
\showDOI{\tempurl}
\newblock
\shownote{Artifact}.


\bibitem[Chen et~al\mbox{.}(2022)]%
        {chen2022synchronous}
\bibfield{author}{\bibinfo{person}{Jiawei Chen}, \bibinfo{person}{José~Luiz
  Vargas~de Mendonça}, \bibinfo{person}{Shayan Jalili},
  \bibinfo{person}{Bereket Ayele}, \bibinfo{person}{Bereket~Ngussie Bekele},
  \bibinfo{person}{Zhemin Qu}, \bibinfo{person}{Pranjal Sharma},
  \bibinfo{person}{Tigist Shiferaw}, \bibinfo{person}{Yicheng Zhang}, {and}
  \bibinfo{person}{Jean-Baptiste Jeannin}.} \bibinfo{year}{2022}\natexlab{}.
\newblock \showarticletitle{Synchronous {Programming} and {Refinement} {Types}
  in {Robotics}: {From} {Verification} to {Implementation}}. In
  \bibinfo{booktitle}{\emph{Proceedings of the 8th {ACM} {SIGPLAN}
  {International} {Workshop} on {Formal} {Techniques} for {Safety}-{Critical}
  {Systems}}} \emph{(\bibinfo{series}{{FTSCS} 2022})}.
  \bibinfo{publisher}{ACM}, \bibinfo{address}{New York, NY, USA},
  \bibinfo{pages}{68--79}.
\newblock
\showISBNx{978-1-4503-9907-4}
\urldef\tempurl%
\url{https://doi.org/10.1145/3563822.3568015}
\showDOI{\tempurl}


\bibitem[Cola\c{c}o et~al\mbox{.}(2006)]%
        {colaco2006mixing}
\bibfield{author}{\bibinfo{person}{Jean-Louis Cola\c{c}o},
  \bibinfo{person}{Gr\'{e}goire Hamon}, {and} \bibinfo{person}{Marc Pouzet}.}
  \bibinfo{year}{2006}\natexlab{}.
\newblock \showarticletitle{Mixing Signals and Modes in Synchronous Data-Flow
  Systems}. In \bibinfo{booktitle}{\emph{Proceedings of the 6th ACM \& IEEE
  International Conference on Embedded Software (EMSOFT 2006)}} (Seoul, Korea).
  \bibinfo{publisher}{ACM}, \bibinfo{address}{New York, NY, USA},
  \bibinfo{pages}{73–82}.
\newblock
\showISBNx{1595935428}
\urldef\tempurl%
\url{https://doi.org/10.1145/1176887.1176899}
\showDOI{\tempurl}


\bibitem[Colaço et~al\mbox{.}(2017)]%
        {colaco2017scade}
\bibfield{author}{\bibinfo{person}{Jean-Louis Colaço}, \bibinfo{person}{Bruno
  Pagano}, {and} \bibinfo{person}{Marc Pouzet}.}
  \bibinfo{year}{2017}\natexlab{}.
\newblock \showarticletitle{SCADE 6: A formal language for embedded critical
  software development (invited paper)}. In \bibinfo{booktitle}{\emph{2017
  International Symposium on Theoretical Aspects of Software Engineering (TASE
  2017)}}. \bibinfo{publisher}{IEEE}, \bibinfo{pages}{1--11}.
\newblock
\urldef\tempurl%
\url{https://doi.org/10.1109/TASE.2017.8285623}
\showDOI{\tempurl}


\bibitem[Cuoq and Pouzet(2001)]%
        {goos_modular_2001}
\bibfield{author}{\bibinfo{person}{Pascal Cuoq} {and} \bibinfo{person}{Marc
  Pouzet}.} \bibinfo{year}{2001}\natexlab{}.
\newblock \showarticletitle{Modular causality in a synchronous stream
  language}. In \bibinfo{booktitle}{\emph{European Symposium on Programming
  (ESOP 2001)}}, \bibfield{editor}{\bibinfo{person}{Gerhard Goos},
  \bibinfo{person}{Juris Hartmanis}, \bibinfo{person}{Jan van Leeuwen}, {and}
  \bibinfo{person}{David Sands}} (Eds.), Vol.~\bibinfo{volume}{2028}.
  \bibinfo{publisher}{Springer}, \bibinfo{address}{Berlin, Heidelberg},
  \bibinfo{pages}{237--251}.
\newblock
\showISBNx{978-3-540-41862-7 978-3-540-45309-3}
\urldef\tempurl%
\url{https://doi.org/10.1007/3-540-45309-1_16}
\showDOI{\tempurl}
\newblock
\shownote{Series Title: Lecture Notes in Computer Science}.


\bibitem[de~Moura and Bjørner(2008)]%
        {hutchison2008z3}
\bibfield{author}{\bibinfo{person}{Leonardo de Moura} {and}
  \bibinfo{person}{Nikolaj Bjørner}.} \bibinfo{year}{2008}\natexlab{}.
\newblock \showarticletitle{Z3: {An} {Efficient} {SMT} {Solver}}.
\newblock In \bibinfo{booktitle}{\emph{Tools and {Algorithms} for the
  {Construction} and {Analysis} of {Systems} (TACAS 2008)}},
  \bibfield{editor}{\bibinfo{person}{David Hutchison}, \bibinfo{person}{Takeo
  Kanade}, \bibinfo{person}{Josef Kittler}, \bibinfo{person}{Jon~M. Kleinberg},
  \bibinfo{person}{Friedemann Mattern}, \bibinfo{person}{John~C. Mitchell},
  \bibinfo{person}{Moni Naor}, \bibinfo{person}{Oscar Nierstrasz},
  \bibinfo{person}{C.~Pandu~Rangan}, \bibinfo{person}{Bernhard Steffen},
  \bibinfo{person}{Madhu Sudan}, \bibinfo{person}{Demetri Terzopoulos},
  \bibinfo{person}{Doug Tygar}, \bibinfo{person}{Moshe~Y. Vardi},
  \bibinfo{person}{Gerhard Weikum}, \bibinfo{person}{C.~R. Ramakrishnan}, {and}
  \bibinfo{person}{Jakob Rehof}} (Eds.). Vol.~\bibinfo{volume}{4963}.
  \bibinfo{publisher}{Springer}, \bibinfo{address}{Berlin, Heidelberg},
  \bibinfo{pages}{337--340}.
\newblock
\showISBNx{978-3-540-78799-0 978-3-540-78800-3}
\urldef\tempurl%
\url{https://doi.org/10.1007/978-3-540-78800-3_24}
\showDOI{\tempurl}


\bibitem[Eichholz et~al\mbox{.}(2022)]%
        {eichholz2022dependently-typed}
\bibfield{author}{\bibinfo{person}{Matthias Eichholz},
  \bibinfo{person}{Eric~Hayden Campbell}, \bibinfo{person}{Matthias Krebs},
  \bibinfo{person}{Nate Foster}, {and} \bibinfo{person}{Mira Mezini}.}
  \bibinfo{year}{2022}\natexlab{}.
\newblock \showarticletitle{Dependently-typed data plane programming}.
\newblock \bibinfo{journal}{\emph{Proceedings of the {ACM} on Programming
  Languages}} \bibinfo{volume}{6}, \bibinfo{number}{{POPL 2022}}
  (\bibinfo{year}{2022}), \bibinfo{pages}{1--28}.
\newblock
\urldef\tempurl%
\url{https://doi.org/10.1145/3498701}
\showDOI{\tempurl}


\bibitem[Fainekos and Pappas(2009)]%
        {fainekos2009robustness}
\bibfield{author}{\bibinfo{person}{Georgios~E Fainekos} {and}
  \bibinfo{person}{George~J Pappas}.} \bibinfo{year}{2009}\natexlab{}.
\newblock \showarticletitle{Robustness of temporal logic specifications for
  continuous-time signals}.
\newblock \bibinfo{journal}{\emph{Theoretical Computer Science}}
  \bibinfo{volume}{410}, \bibinfo{number}{42} (\bibinfo{year}{2009}),
  \bibinfo{pages}{30}.
\newblock
\urldef\tempurl%
\url{https://doi.org/10.1016/j.tcs.2009.06.021}
\showDOI{\tempurl}


\bibitem[Florence et~al\mbox{.}(2019)]%
        {florence2019calculus}
\bibfield{author}{\bibinfo{person}{Spencer~P. Florence},
  \bibinfo{person}{Shu-Hung You}, \bibinfo{person}{Jesse~A. Tov}, {and}
  \bibinfo{person}{Robert~Bruce Findler}.} \bibinfo{year}{2019}\natexlab{}.
\newblock \showarticletitle{A Calculus for Esterel: If Can, Can. If No Can, No
  Can.}
\newblock \bibinfo{journal}{\emph{Proceedings of the ACM on Programming
  Languages}} \bibinfo{volume}{3}, \bibinfo{number}{POPL 2019}
  (\bibinfo{year}{2019}), \bibinfo{pages}{1--29}.
\newblock
\urldef\tempurl%
\url{https://doi.org/10.1145/3290374}
\showDOI{\tempurl}


\bibitem[Ghorbal and Platzer(2014)]%
        {hutchison2014characterizing}
\bibfield{author}{\bibinfo{person}{Khalil Ghorbal} {and}
  \bibinfo{person}{André Platzer}.} \bibinfo{year}{2014}\natexlab{}.
\newblock \showarticletitle{Characterizing {Algebraic} {Invariants} by
  {Differential} {Radical} {Invariants}}.
\newblock In \bibinfo{booktitle}{\emph{Tools and {Algorithms} for the
  {Construction} and {Analysis} of {Systems} (TACAS 2014)}},
  \bibfield{editor}{\bibinfo{person}{Erika Ábrahám} {and}
  \bibinfo{person}{Klaus Havelund}} (Eds.). Vol.~\bibinfo{volume}{8413}.
  \bibinfo{publisher}{Springer}, \bibinfo{address}{Berlin, Heidelberg},
  \bibinfo{pages}{279--294}.
\newblock
\showISBNx{978-3-642-54861-1 978-3-642-54862-8}
\urldef\tempurl%
\url{https://doi.org/10.1007/978-3-642-54862-8_19}
\showDOI{\tempurl}
\newblock
\shownote{Series Title: Lecture Notes in Computer Science}.


\bibitem[Ghosh et~al\mbox{.}(2020a)]%
        {ghosh2020koord}
\bibfield{author}{\bibinfo{person}{Ritwika Ghosh}, \bibinfo{person}{Chiao
  Hsieh}, \bibinfo{person}{Sasa Misailovic}, {and} \bibinfo{person}{Sayan
  Mitra}.} \bibinfo{year}{2020}\natexlab{a}.
\newblock \showarticletitle{Koord: A Language for Programming and Verifying
  Distributed Robotics Application}.
\newblock \bibinfo{journal}{\emph{Proceedings of the ACM on Programming
  Languages}} \bibinfo{volume}{4}, \bibinfo{number}{OOPSLA 2020}, Article
  \bibinfo{articleno}{232} (\bibinfo{year}{2020}),
  \bibinfo{numpages}{30}~pages.
\newblock
\urldef\tempurl%
\url{https://doi.org/10.1145/3428300}
\showDOI{\tempurl}


\bibitem[Ghosh et~al\mbox{.}(2020b)]%
        {ghosh2020cyphyhouse}
\bibfield{author}{\bibinfo{person}{Ritwika Ghosh}, \bibinfo{person}{Joao~P.
  Jansch-Porto}, \bibinfo{person}{Chiao Hsieh}, \bibinfo{person}{Amelia Gosse},
  \bibinfo{person}{Minghao Jiang}, \bibinfo{person}{Hebron Taylor},
  \bibinfo{person}{Peter Du}, \bibinfo{person}{Sayan Mitra}, {and}
  \bibinfo{person}{Geir Dullerud}.} \bibinfo{year}{2020}\natexlab{b}.
\newblock \showarticletitle{CyPhyHouse: A programming, simulation, and
  deployment toolchain for heterogeneous distributed coordination}. In
  \bibinfo{booktitle}{\emph{IEEE International Conference on Robotics and
  Automation (ICRA 2020)}}. \bibinfo{publisher}{IEEE},
  \bibinfo{pages}{6654--6660}.
\newblock
\urldef\tempurl%
\url{https://doi.org/10.1109/ICRA40945.2020.9196513}
\showDOI{\tempurl}


\bibitem[Hagen and Tinelli(2008)]%
        {hagen2008scaling}
\bibfield{author}{\bibinfo{person}{George Hagen} {and} \bibinfo{person}{Cesare
  Tinelli}.} \bibinfo{year}{2008}\natexlab{}.
\newblock \showarticletitle{Scaling Up the Formal Verification of Lustre
  Programs with SMT-Based Techniques}. In \bibinfo{booktitle}{\emph{Formal
  Methods in Computer-Aided Design (FMCAD 2008)}}. \bibinfo{publisher}{IEEE},
  \bibinfo{pages}{1--9}.
\newblock
\urldef\tempurl%
\url{https://doi.org/10.1109/FMCAD.2008.ECP.19}
\showDOI{\tempurl}


\bibitem[Halbwachs et~al\mbox{.}(1991)]%
        {halbwachs1991synchronous}
\bibfield{author}{\bibinfo{person}{N. Halbwachs}, \bibinfo{person}{P. Caspi},
  \bibinfo{person}{P. Raymond}, {and} \bibinfo{person}{D. Pilaud}.}
  \bibinfo{year}{1991}\natexlab{}.
\newblock \showarticletitle{The synchronous data flow programming language
  {LUSTRE}}.
\newblock \bibinfo{journal}{\emph{Proc. IEEE}} \bibinfo{volume}{79},
  \bibinfo{number}{9} (\bibinfo{date}{Sept.} \bibinfo{year}{1991}),
  \bibinfo{pages}{1305--1320}.
\newblock
\showISSN{00189219}
\urldef\tempurl%
\url{https://doi.org/10.1109/5.97300}
\showDOI{\tempurl}


\bibitem[Jeannin and Platzer(2014)]%
        {hutchison_dtl2_2014}
\bibfield{author}{\bibinfo{person}{Jean-Baptiste Jeannin} {and}
  \bibinfo{person}{André Platzer}.} \bibinfo{year}{2014}\natexlab{}.
\newblock \showarticletitle{{dTL${}^2$}: {Differential} {Temporal} {Dynamic}
  {Logic} with {Nested} {Temporalities} for {Hybrid} {Systems}}.
\newblock In \bibinfo{booktitle}{\emph{{Automated Reasoning: 7th International
  Joint Conference (IJCAR 2014), Held as Part of the Vienna Summer of Logic
  (VSL 2014)}}}, \bibfield{editor}{\bibinfo{person}{St{\'e}phane Demri},
  \bibinfo{person}{Deepak Kapur}, {and} \bibinfo{person}{Christoph Weidenbach}}
  (Eds.). Vol.~\bibinfo{volume}{8562}. \bibinfo{publisher}{Springer
  International Publishing}, \bibinfo{address}{Cham},
  \bibinfo{pages}{292--306}.
\newblock
\showISBNx{978-3-319-08586-9 978-3-319-08587-6}
\urldef\tempurl%
\url{https://doi.org/10.1007/978-3-319-08587-6_22}
\showDOI{\tempurl}
\newblock
\shownote{Series Title: Lecture Notes in Computer Science}.


\bibitem[Jhala and Vazou(2021)]%
        {jhala2020refinement}
\bibfield{author}{\bibinfo{person}{Ranjit Jhala} {and} \bibinfo{person}{Niki
  Vazou}.} \bibinfo{year}{2021}\natexlab{}.
\newblock \showarticletitle{Refinement types: A tutorial}.
\newblock \bibinfo{journal}{\emph{Foundations and Trends{\textregistered} in
  Programming Languages}} \bibinfo{volume}{6}, \bibinfo{number}{3--4}
  (\bibinfo{year}{2021}), \bibinfo{pages}{159--317}.
\newblock
\urldef\tempurl%
\url{https://doi.org/10.1561/2500000032}
\showDOI{\tempurl}


\bibitem[Kamburjan(2021)]%
        {kamburjan2021post-conditions}
\bibfield{author}{\bibinfo{person}{Eduard Kamburjan}.}
  \bibinfo{year}{2021}\natexlab{}.
\newblock \showarticletitle{From post-conditions to post-region invariants:
  deductive verification of hybrid objects}. In
  \bibinfo{booktitle}{\emph{Proceedings of the 24th {International}
  {Conference} on {Hybrid} {Systems}: {Computation} and {Control} (HSCC
  2021)}}. \bibinfo{publisher}{ACM}, \bibinfo{address}{Nashville Tennessee},
  \bibinfo{pages}{1--11}.
\newblock
\showISBNx{978-1-4503-8339-4}
\urldef\tempurl%
\url{https://doi.org/10.1145/3447928.3456633}
\showDOI{\tempurl}


\bibitem[Loos et~al\mbox{.}(2011)]%
        {butler2011adaptive}
\bibfield{author}{\bibinfo{person}{Sarah~M. Loos}, \bibinfo{person}{André
  Platzer}, {and} \bibinfo{person}{Ligia Nistor}.}
  \bibinfo{year}{2011}\natexlab{}.
\newblock \showarticletitle{Adaptive {Cruise} {Control}: {Hybrid},
  {Distributed}, and {Now} {Formally} {Verified}}.
\newblock In \bibinfo{booktitle}{\emph{{Formal} {Methods} (FM 2011)}},
  \bibfield{editor}{\bibinfo{person}{Michael Butler} {and}
  \bibinfo{person}{Wolfram Schulte}} (Eds.). Vol.~\bibinfo{volume}{6664}.
  \bibinfo{publisher}{Springer}, \bibinfo{address}{Berlin, Heidelberg},
  \bibinfo{pages}{42--56}.
\newblock
\showISBNx{978-3-642-21436-3 978-3-642-21437-0}
\urldef\tempurl%
\url{https://doi.org/10.1007/978-3-642-21437-0_6}
\showDOI{\tempurl}
\newblock
\shownote{Series Title: Lecture Notes in Computer Science}.


\bibitem[Malecha et~al\mbox{.}(2016)]%
        {malecha2016towards}
\bibfield{author}{\bibinfo{person}{Gregory Malecha}, \bibinfo{person}{Daniel
  Ricketts}, \bibinfo{person}{Mario~M. Alvarez}, {and} \bibinfo{person}{Sorin
  Lerner}.} \bibinfo{year}{2016}\natexlab{}.
\newblock \showarticletitle{Towards foundational verification of cyber-physical
  systems}. In \bibinfo{booktitle}{\emph{{Science} of {Security} for
  {Cyber}-{Physical} {Systems} {Workshop} ({SOSCYPS} 2016)}}.
  \bibinfo{publisher}{IEEE}, \bibinfo{pages}{1--5}.
\newblock
\urldef\tempurl%
\url{https://doi.org/10.1109/SOSCYPS.2016.7580000}
\showDOI{\tempurl}


\bibitem[Manna and Pnueli(1992)]%
        {manna1992temporal}
\bibfield{author}{\bibinfo{person}{Zohar Manna} {and} \bibinfo{person}{Amir
  Pnueli}.} \bibinfo{year}{1992}\natexlab{}.
\newblock \bibinfo{booktitle}{\emph{The {Temporal} {Logic} of {Reactive} and
  {Concurrent} {Systems}}}.
\newblock \bibinfo{publisher}{Springer New York}, \bibinfo{address}{New York,
  NY}.
\newblock
\showISBNx{978-1-4612-6950-2 978-1-4612-0931-7}
\urldef\tempurl%
\url{https://doi.org/10.1007/978-1-4612-0931-7}
\showDOI{\tempurl}


\bibitem[Mehra et~al\mbox{.}(2015)]%
        {mehra2015adaptive}
\bibfield{author}{\bibinfo{person}{Aakar Mehra}, \bibinfo{person}{Wen-Loong
  Ma}, \bibinfo{person}{Forrest Berg}, \bibinfo{person}{Paulo Tabuada},
  \bibinfo{person}{Jessy~W. Grizzle}, {and} \bibinfo{person}{Aaron~D. Ames}.}
  \bibinfo{year}{2015}\natexlab{}.
\newblock \showarticletitle{Adaptive cruise control: {Experimental} validation
  of advanced controllers on scale-model cars}. In
  \bibinfo{booktitle}{\emph{{American} {Control} {Conference} ({ACC} 2015)}}.
  \bibinfo{publisher}{IEEE}, \bibinfo{address}{Chicago, IL, USA},
  \bibinfo{pages}{1411--1418}.
\newblock
\showISBNx{978-1-4799-8684-2}
\urldef\tempurl%
\url{https://doi.org/10.1109/ACC.2015.7170931}
\showDOI{\tempurl}


\bibitem[Ou et~al\mbox{.}(2004)]%
        {ou_dynamic_2004}
\bibfield{author}{\bibinfo{person}{Xinming Ou}, \bibinfo{person}{Gang Tan},
  \bibinfo{person}{Yitzhak Mandelbaum}, {and} \bibinfo{person}{David Walker}.}
  \bibinfo{year}{2004}\natexlab{}.
\newblock \showarticletitle{Dynamic typing with dependent types}. In
  \bibinfo{booktitle}{\emph{Exploring {New} {Frontiers} of {Theoretical}
  {Informatics}: {IFIP} 18th {World} {Computer} {Congress} {TC1} 3rd
  {International} {Conference} on {Theoretical} {Computer} {Science} ({TCS
  2004})}} ({Toulouse}, {France}). \bibinfo{publisher}{Springer},
  \bibinfo{address}{Boston, MA}, \bibinfo{pages}{437--450}.
\newblock
\urldef\tempurl%
\url{https://doi.org/10.1007/1-4020-8141-3_34}
\showDOI{\tempurl}


\bibitem[Parissis(1996)]%
        {parissis1996test}
\bibfield{author}{\bibinfo{person}{Ioannis Parissis}.}
  \bibinfo{year}{1996}\natexlab{}.
\newblock \emph{\bibinfo{title}{Test de logiciels synchrones sp{\'e}cifi{\'e}s
  en Lustre}}.
\newblock \bibinfo{thesistype}{Ph.\,D. Dissertation}.
  \bibinfo{school}{Universit{\'e} Joseph-Fourier-Grenoble I}.
\newblock
\urldef\tempurl%
\url{https://theses.hal.science/tel-00005010v1/document}
\showURL{%
\tempurl}


\bibitem[Pierce(2005)]%
        {pierce2005advanced}
\bibfield{author}{\bibinfo{person}{Benjamin~C. Pierce}.}
  \bibinfo{year}{2005}\natexlab{}.
\newblock \bibinfo{booktitle}{\emph{Advanced topics in types and programming
  languages}}.
\newblock \bibinfo{publisher}{MIT press}.
\newblock
\showISSN{0010-4620}
\urldef\tempurl%
\url{https://doi.org/10.1093/comjnl/bxh138}
\showDOI{\tempurl}


\bibitem[Platzer(2008)]%
        {platzer2008differential}
\bibfield{author}{\bibinfo{person}{André Platzer}.}
  \bibinfo{year}{2008}\natexlab{}.
\newblock \showarticletitle{Differential {Dynamic} {Logic} for {Hybrid}
  {Systems}}.
\newblock \bibinfo{journal}{\emph{Journal of Automated Reasoning}}
  \bibinfo{volume}{41}, \bibinfo{number}{2} (\bibinfo{date}{Aug.}
  \bibinfo{year}{2008}), \bibinfo{pages}{143--189}.
\newblock
\showISSN{0168-7433, 1573-0670}
\urldef\tempurl%
\url{https://doi.org/10.1007/s10817-008-9103-8}
\showDOI{\tempurl}


\bibitem[Platzer(2018)]%
        {platzer2018logical}
\bibfield{author}{\bibinfo{person}{André Platzer}.}
  \bibinfo{year}{2018}\natexlab{}.
\newblock \bibinfo{booktitle}{\emph{Logical {Foundations} of {Cyber}-{Physical}
  {Systems}}}.
\newblock \bibinfo{publisher}{Springer International Publishing},
  \bibinfo{address}{Cham}.
\newblock
\showISBNx{978-3-319-63587-3 978-3-319-63588-0}
\urldef\tempurl%
\url{https://doi.org/10.1007/978-3-319-63588-0}
\showDOI{\tempurl}


\bibitem[Pnueli(1977)]%
        {pnueli1977temporal}
\bibfield{author}{\bibinfo{person}{Amir Pnueli}.}
  \bibinfo{year}{1977}\natexlab{}.
\newblock \showarticletitle{The temporal logic of programs}. In
  \bibinfo{booktitle}{\emph{18th {Annual} {Symposium} on {Foundations} of
  {Computer} {Science} (SFCS 1977)}}. \bibinfo{publisher}{IEEE},
  \bibinfo{pages}{46--57}.
\newblock
\urldef\tempurl%
\url{https://doi.org/10.1109/SFCS.1977.32}
\showDOI{\tempurl}
\newblock
\shownote{ISSN: 0272-5428}.


\bibitem[Prajna and Jadbabaie(2004)]%
        {prajna_safety_2004}
\bibfield{author}{\bibinfo{person}{Stephen Prajna} {and} \bibinfo{person}{Ali
  Jadbabaie}.} \bibinfo{year}{2004}\natexlab{}.
\newblock \showarticletitle{Safety {Verification} of {Hybrid} {Systems} {Using}
  {Barrier} {Certificates}}. In \bibinfo{booktitle}{\emph{Hybrid Systems:
  Computation and Control (HSCC 2004)}},
  \bibfield{editor}{\bibinfo{person}{Rajeev Alur} {and}
  \bibinfo{person}{George~J. Pappas}} (Eds.). \bibinfo{publisher}{Springer},
  \bibinfo{address}{Berlin, Heidelberg}, \bibinfo{pages}{477--492}.
\newblock
\showISBNx{978-3-540-24743-2}
\urldef\tempurl%
\url{https://doi.org/10.1007/978-3-540-24743-2_32}
\showDOI{\tempurl}


\bibitem[Ratel(1992)]%
        {ratel1992definition}
\bibfield{author}{\bibinfo{person}{Christophe Ratel}.}
  \bibinfo{year}{1992}\natexlab{}.
\newblock \emph{\bibinfo{title}{Définition et {Réalisation} d'un {Outil} de
  {Vérification} {Formelle} de {Programmes} {LUSTRE}: le {Système} {LESAR}}}.
\newblock \bibinfo{thesistype}{Ph.\,D. Dissertation}.
  \bibinfo{school}{Université Joseph Fourier}.
\newblock
\urldef\tempurl%
\url{https://tel.archives-ouvertes.fr/file/index/docid/341223/filename/Ratel.Christophe_1992_these.pdf}
\showURL{%
\tempurl}


\bibitem[Rondon et~al\mbox{.}(2008)]%
        {rondon2008liquid}
\bibfield{author}{\bibinfo{person}{Patrick~M. Rondon}, \bibinfo{person}{Ming
  Kawaguci}, {and} \bibinfo{person}{Ranjit Jhala}.}
  \bibinfo{year}{2008}\natexlab{}.
\newblock \showarticletitle{Liquid types}. In
  \bibinfo{booktitle}{\emph{Proceedings of the 2008 {ACM} {SIGPLAN} conference
  on {Programming} language design and implementation (PLDI 2008)}}.
  \bibinfo{publisher}{ACM Press}, \bibinfo{address}{Tucson, AZ, USA},
  \bibinfo{pages}{159}.
\newblock
\showISBNx{978-1-59593-860-2}
\urldef\tempurl%
\url{https://doi.org/10.1145/1375581.1375602}
\showDOI{\tempurl}


\bibitem[Rupp et~al\mbox{.}(2019)]%
        {rupp2019fast}
\bibfield{author}{\bibinfo{person}{Astrid Rupp}, \bibinfo{person}{Markus
  Tranninger}, \bibinfo{person}{Raffael Wallner}, \bibinfo{person}{Jasmina
  Zuba{\v{c}}a}, \bibinfo{person}{Martin Steinberger}, {and}
  \bibinfo{person}{Martin Horn}.} \bibinfo{year}{2019}\natexlab{}.
\newblock \showarticletitle{Fast and low-cost testing of advanced driver
  assistance systems using small-scale vehicles}.
\newblock \bibinfo{journal}{\emph{IFAC-PapersOnLine}} \bibinfo{volume}{52},
  \bibinfo{number}{5} (\bibinfo{year}{2019}), \bibinfo{pages}{34--39}.
\newblock
\urldef\tempurl%
\url{https://doi.org/10.1016/j.ifacol.2019.09.006}
\showDOI{\tempurl}


\bibitem[Rusu et~al\mbox{.}(2010)]%
        {rusu2010fuzzy}
\bibfield{author}{\bibinfo{person}{CG Rusu}, \bibinfo{person}{IT Birou}, {and}
  \bibinfo{person}{E Sz{\"o}ke}.} \bibinfo{year}{2010}\natexlab{}.
\newblock \showarticletitle{Fuzzy based obstacle avoidance system for
  autonomous mobile robot}. In \bibinfo{booktitle}{\emph{IEEE International
  Conference on Automation, Quality and Testing, Robotics (AQTR 2010)}},
  Vol.~\bibinfo{volume}{1}. \bibinfo{publisher}{IEEE}, \bibinfo{pages}{1--6}.
\newblock
\urldef\tempurl%
\url{https://doi.org/10.1109/AQTR.2010.5520862}
\showDOI{\tempurl}


\bibitem[Swamy et~al\mbox{.}(2016)]%
        {swamy2016dependent}
\bibfield{author}{\bibinfo{person}{Nikhil Swamy},
  \bibinfo{person}{C\u{a}t\u{a}lin Hri\c{t}cu}, \bibinfo{person}{Chantal
  Keller}, \bibinfo{person}{Aseem Rastogi}, \bibinfo{person}{Antoine
  Delignat-Lavaud}, \bibinfo{person}{Simon Forest},
  \bibinfo{person}{Karthikeyan Bhargavan}, \bibinfo{person}{C\'{e}dric
  Fournet}, \bibinfo{person}{Pierre-Yves Strub}, \bibinfo{person}{Markulf
  Kohlweiss}, \bibinfo{person}{Jean-Karim Zinzindohoue}, {and}
  \bibinfo{person}{Santiago Zanella-B\'{e}guelin}.}
  \bibinfo{year}{2016}\natexlab{}.
\newblock \showarticletitle{Dependent Types and Multi-Monadic Effects in F*}.
  In \bibinfo{booktitle}{\emph{Proceedings of the 43rd Annual ACM
  SIGPLAN-SIGACT Symposium on Principles of Programming Languages (POPL 2016)}}
  (St. Petersburg, FL, USA). \bibinfo{publisher}{ACM}, \bibinfo{address}{New
  York, NY, USA}, \bibinfo{pages}{256–270}.
\newblock
\showISBNx{9781450335492}
\urldef\tempurl%
\url{https://doi.org/10.1145/2837614.2837655}
\showDOI{\tempurl}


\bibitem[Tarski(1951)]%
        {tarski_decision_1951}
\bibfield{author}{\bibinfo{person}{Alfred Tarski}.}
  \bibinfo{year}{1951}\natexlab{}.
\newblock \showarticletitle{A {Decision} {Method} for {Elementary} {Algebra}
  and {Geometry}}.
\newblock In \bibinfo{booktitle}{\emph{Quantifier {Elimination} and
  {Cylindrical} {Algebraic} {Decomposition}}}. \bibinfo{publisher}{Springer},
  \bibinfo{pages}{24--84}.
\newblock
\urldef\tempurl%
\url{https://doi.org/10.1007/978-3-7091-9459-1_3}
\showDOI{\tempurl}


\bibitem[Vazou et~al\mbox{.}(2014)]%
        {vazou2014refinement}
\bibfield{author}{\bibinfo{person}{Niki Vazou}, \bibinfo{person}{Eric~L.
  Seidel}, \bibinfo{person}{Ranjit Jhala}, \bibinfo{person}{Dimitrios
  Vytiniotis}, {and} \bibinfo{person}{Simon Peyton-Jones}.}
  \bibinfo{year}{2014}\natexlab{}.
\newblock \showarticletitle{Refinement types for {Haskell}}. In
  \bibinfo{booktitle}{\emph{Proceedings of the 19th {ACM} {SIGPLAN}
  international conference on {Functional} programming}}
  \emph{(\bibinfo{series}{{ICFP} 2014})}. \bibinfo{publisher}{ACM},
  \bibinfo{address}{New York, NY, USA}, \bibinfo{pages}{269--282}.
\newblock
\showISBNx{978-1-4503-2873-9}
\urldef\tempurl%
\url{https://doi.org/10.1145/2628136.2628161}
\showDOI{\tempurl}


\bibitem[Wen et~al\mbox{.}(2011)]%
        {wen2011elman}
\bibfield{author}{\bibinfo{person}{Shuhuan Wen}, \bibinfo{person}{Wei Zheng},
  \bibinfo{person}{Jinghai Zhu}, \bibinfo{person}{Xiaoli Li}, {and}
  \bibinfo{person}{Shengyong Chen}.} \bibinfo{year}{2011}\natexlab{}.
\newblock \showarticletitle{Elman fuzzy adaptive control for obstacle avoidance
  of mobile robots using hybrid force/position incorporation}.
\newblock \bibinfo{journal}{\emph{IEEE Transactions on Systems, Man, and
  Cybernetics, Part C (Applications and Reviews)}} \bibinfo{volume}{42},
  \bibinfo{number}{4} (\bibinfo{year}{2011}), \bibinfo{pages}{603--608}.
\newblock
\urldef\tempurl%
\url{https://doi.org/10.1109/TSMCC.2011.2157682}
\showDOI{\tempurl}


\bibitem[Wright and Felleisen(1994)]%
        {wright1994syntactic}
\bibfield{author}{\bibinfo{person}{A.K. Wright} {and} \bibinfo{person}{M.
  Felleisen}.} \bibinfo{year}{1994}\natexlab{}.
\newblock \showarticletitle{A {Syntactic} {Approach} to {Type} {Soundness}}.
\newblock \bibinfo{journal}{\emph{Information and Computation}}
  \bibinfo{volume}{115}, \bibinfo{number}{1} (\bibinfo{date}{Nov.}
  \bibinfo{year}{1994}), \bibinfo{pages}{38--94}.
\newblock
\showISSN{0890-5401}
\urldef\tempurl%
\url{https://doi.org/10.1006/inco.1994.1093}
\showDOI{\tempurl}


\end{thebibliography}
\end{document}